\let\csname equation*\endcsname\relax
\let\csname endequation*\endcsname\relax
\newtheorem{defi}{Definition}[section]
\newtheorem{lemma}[defi]{Lemma}
\newtheorem{theorem}[defi]{Theorem}
\newtheorem{asp}{Assumption}  
\newtheorem{definition}[defi]{Definition}
\newcommand{\mfied}[1]{{\color{black} #1}}
\newcommand{\opeh}{{\cal H}}
\newcommand{\opei}{{\mathrm{I}}}
\newcommand{\opeg}{{\mathrm{G}}}
\newcommand{\open}{{\mathrm{N}}}
\newcommand{\opef}{{\mathrm{F}}}
\newcommand{\opet}{{\mathrm{T}}}
\newcommand{\Odual}[1]{(#1)^{*}}
\newcommand{\alpm}[1]{{\alpha_\sper(#1)}}
\newcommand{\betm}[1]{{\beta_\sper(#1)}}
\newcommand{\olbetm}[1]{{\ol \beta_\sper(#1)}}
\newcommand{\uip}{u^{i,+}}
\newcommand{\uim}{u^{i,-}}
\newcommand{\uipm}{u^{i,\pm}}
\newcommand{\fg}{{f}}
\newcommand{\ftd}{{\varphi}}
\newcommand{\ftn}{{\psi}}
\newcommand{\greq}[1]{{\Phi_q(\cdot- #1)}}
\newcommand{\sepmgre}[1]{{\widehat{\Phi}^\pm(\cdot- #1)}}
\newcommand{\sepmgreq}[1]{{\widehat{\Phi}_q^\pm(\cdot - #1)}}
\newcommand{\sepgre}[1]{{\widehat{\Phi}^+(\cdot - #1)}}
\newcommand{\sepgreq}[1]{{\widehat{\Phi}^+_q(\cdot;#1)}}
\newcommand{\coefpmgre}[2]{{\widehat{\Phi}^\pm(#2 - #1)}}
\newcommand{\coefpmgreq}[2]{{\widehat{\Phi}_q^\pm(#2 - #1)}}
\newcommand{\omehm}{{\Omega^h_M}}
\newcommand{\gamhmp}{{\Gamma^h_M}}
\newcommand{\gamhmm}{{\Gamma^{-h}_M}}
\newcommand{\boul}{{M_L^-}}
\newcommand{\bour}{{M_L^+}}
\newcommand{\spahsm}[1]{{H^{#1}_{\sper}(\omehm)}}
\newcommand{\spatracep}[1]{{H^{#1}_\sper(\gamhmp)}}
\newcommand{\spatracem}[1]{{H^{#1}_\sper(\gamhmm)}}
\newcommand{\raycoefp}[2]{{\widehat{#1}^+(#2)}}
\newcommand{\raycoefm}[2]{{\widehat{#1}^-(#2)}} 
\newcommand{\raycoefpm}[2]{{\widehat{#1}^\pm(#2)}}
\newcommand{\hdualcoefpm}[1]{{\widehat{#1}^\pm}}
\newcommand{\seth}{{H}}
\newcommand{\inc}{{\mathrm{inc}}}
\newcommand{\aher}{a}
\newcommand{\zg}{{z}}
\newcommand{\eg}{u}
\newcommand{\wgg}{{w}}
\newcommand{\vgg}{{v}}
\newcommand{\fug}{{{\varphi}}}
\newcommand{\xg}{{x}}
\newcommand{\Rang}{\mathcal{R}}
\newcommand{\Zd}{\Z^{d-1}}
\newcommand{\domp}{\omega}
\newcommand{\fundnp}[1]{{\Phi(n_p; #1)}}
\newcommand{\tfundnp}[1]{{\widetilde{\Phi}(#1)}}
\newcommand{\wpr}{{w}}
\newcommand{\twpr}{{\widetilde{\wpr}}}
\newcommand{\opes}[1]{{\widetilde{\mathcal{S}}_{#1}}}
\newcommand{\cdomp}{{\domp^c}}
\newcommand{\fc}{{\widetilde{f}}}
\renewcommand{\BB}{{\mathcal{B}}}
\newcommand{\Dint}{{\Lambda}}
\newcommand{\Dpint}{{\mathcal{O}}}
\newcommand{\Dpout}{{\mathcal{O}^c}}
\newcommand{\Dpoutp}{{\mathcal{O}^c_p}}
\newcommand{\Dtot}{{\widehat{D}}}
\renewcommand{\i}{\mathrm{i}}
\renewcommand{\d}[1]{\,\mathrm{d}#1 \,}
\renewcommand{\Re}{\mathrm{Re}\,}
\renewcommand{\Im}{\mathrm{Im}\,}
\newcommand{\ol}[1]{\overline{#1}}
\renewcommand{\epsilon}{\varepsilon}
\newcommand{\loc}{\mathrm{loc}}
\newcommand*{\N}{\ensuremath{\mathbb{N}}}
\newcommand*{\Z}{\ensuremath{\mathbb{Z}}}
\newcommand*{\R}{\ensuremath{\mathbb{R}}}
\newcommand{\normM}[1]{{\llbracket {#1} \rrbracket}}
\newcommand{\interd}[2]{\ensuremath{\llbracket {#1}, {#2} \rrbracket}}
\newcommand{\I}{\mathcal{I}}
\newcommand{\CC}{\ensuremath{\mathbb{C}}}
\newcommand{\X}{\mathbf{X}}
\newcommand{\sper}{\#}
\newcommand{\dsp}{\displaystyle}
\begin{document}

\title[Analysis and Applications of Interior Transmission Problems]{New Interior Transmission Problem Applied to a Single Floquet-Bloch Mode Imaging of Local Perturbations in Periodic Media}

\author{Fioralba Cakoni$^1$,    Houssem Haddar$^2$, Thi-Phong Nguyen$^1$}

\address{$^1$ Department of Mathematics, Rutgers University, 110 Frelinghuysen Road,  Piscataway, NJ 08854-8019, USA.}
\address{$^2$ INRIA, Ecole Polytechnique (CMAP) and Universit\'{e} Paris Saclay,
Route de Saclay, 91128 Palaiseau Cedex, France.}
\ead{fc292@math.rutgers.edu,  Houssem.haddar@inria.fr,  tn242@math.rutgers.edu}
\vspace{10pt}
\begin{indented}
\item[] 
\end{indented}

\begin{abstract}

This paper considers the imaging  of  local  perturbations of an infinite penetrable periodic layer. A cell of this  periodic layer consists of  several bounded inhomogeneities situated in a known homogeneous media. We use  \mfied{a differential linear sampling method} to reconstruct the support of perturbations without using the Green's function of the periodic layer nor reconstruct the periodic background inhomogeneities. The justification of this imaging method relies on the well-posedeness of a nonstandard interior transmission problem, which until now was an open problem except for the special case when the  local perturbation didn't intersect the background inhomogeneities. The analysis of this new interior transmission problem is the main focus of this paper.  We then complete the justification of  our inversion  method and present some numerical examples that confirm the theoretical behavior of the differential indicator function  determining the reconstructable regions in the periodic layer.
\end{abstract}

\section{Introduction}

Nondestructive testing of period media is an important problem with grown interest since periodic material are part of many fascinating engineering structures with many technological use such as nanograss.  In many situation the periodicity of the  healthy periodic material is complicated or difficult to model mathematically, hence computing its Green's function is computationally expensive or even impossible. On the other hand, when looking for flows in such complex media, the option of reconstructing everything, i.e. both periodic structure and defects, may not be viable. The approach used in this paper provides a criteria to reconstruct  the support of anomalies without explicitly know or reconstruct the background. The imaging method is based on the generalized linear sampling method which was first introduced in  \cite{Audib2015a}, \cite{Audib2014}. This method falls in the class of qualitative approaches to inverse scattering. We refer the reader to \cite{Kirsc2008} and \cite{CCH} for a description of various \mfied{aspects} of such approaches. Qualitative methods have been applied to the imaging of many periodic structure, see  \cite{Arens2010},  \cite{Arens2005},  \cite{Bourg2014}, \cite{Elsch2011a}, \cite{Lechl2013b},   \cite{Thi-Phong3},  \cite{armin},  \cite{nguye2012} for a sample of work. In the case of our problem, we use \mfied{an adapted version of so-called differential linear sampling method} which process the measured data against the data coming from the healthy background. The idea of using differential measurements \mfied{for sampling methods} was first  introduced in \cite{Audib2015} where the response of the background was measured, and was adapted to the case of \mfied{ locally perturbed periodic layers} in  \cite{Thi-Phong3}, \cite{tpnguyen}. For the latter, \mfied{the response of the periodic background does not need to measured. It is replaced by the extraction of measurements associated with a single Floquet-Bloch mode to encode some differential behavior for the indicator functions. This extraction requires} information only on the period \mfied{size} of the background.  The justification of this method makes essential use of a non-standard interior transmission problem whose well-posedness was open, and this limited its use to the case when the defect does not intersect the inhomogeneous components of the background. In this paper we provide sufficient conditions for solvability of this non-standard interior transmission problem which allow us to design a differential imaging function for more general location of defects. Let us introduce the problem we consider here.

\noindent
More specifically, we are concerned with nondestructive testing of a penetrable infinite layer in ${\mathbb R}^d$, $d=2,3$ which is periodic with respect to $d-1$ first variables. Let   $L_1, \cdots, L_{d - 1}$,  $L_j > 0, \
j = 1, \cdots, d-1$ denote the periods of each of these  $d-1$ variables, respectively. The $d-1$ periodic refractive index of this periodic layer,  denoted here by $n_p$, from physical consideration is a bounded function,  has  positive real part $\Re(n_p)$ and nonnegative imaginary part $\Im(n_p)\geq 0$. Furthermore for simplicity we assume that this periodic layer is embedded in a homogeneous background with refractive index normalized to one, i.e. $n_p=1$  for $|x_d| > h$  for some fixed $h>0$. This is what we refer to as the healthy material.  We assume that one or finitely many cells of the layer are locally damaged. This means that in a compactly supported region $\omega$ (which can have multiple connected components) the refractive index differs from $n_p$. Let us call $n$ the refractive index of the damaged layer (which is not any longer periodic), i.e. $n\neq n_p$ only in $\omega$. The goal is to determine the support of the damaged region $\omega$  by using the measured scattered field outside the layer due to appropriate incident fields (to become precise later).  The challenging task however is to resolve $\omega$ without an explicit knowledge of $n_p$ (which in practice can have complicated form) nor reconstructing it, but just using the fact that $n_p$ is $d-1$ periodic with known periods $L_1, \cdots, L_{d - 1}$ under some technical restriction which will be explained in the paper.

\noindent
The paper is configured as follows. In the next section we formulate the direct and inverse problem, define the measurements operator and recall some of its properties which are essential to our imaging method.  Section 3 is devoted to introducing the near field operator corresponding to a single Floquet-Bloch mode that unable us to use a differential imaging approach. Most importantly here we study the properties of this operator which bring up the new interior transmission problem.  Section 4 is devoted to the analysis of this new interior transmission problem. In the last section we build the differential imaging function and study its behavior for various positions of defective regions. Here we provide some numerical example showing the viability of our inversion method. 

\section{Formulation of the Problem}
In this section we give a rigorous formulation of the direct and inverse scattering problem we consider here. In order to motivate the new interior transmission problem which is our main concern, we recall the \mfied{differential linear sampling method} that was first introduced in \cite{tpnguyen} (see also  \cite{Thi-Phong3}). This method recovers the support of local perturbations of a periodic layer without  needing to compute the Green's function of the periodic layer.  However to do so we must make some technical mathematical restrictions aimed to preserve some kind of periodicity for the damaged layer. In particular, we truncate  the damaged infinite periodic layer by considering $M$ periods \mfied{(with $M$ large enough to contain the defect)} and extend it periodically, yielding to a $ML:=(ML_1, \cdots, ML_{d - 1})$-periodic layer. We call again $n$ the refractive index of the $ML$-periodic extension of the truncated part.  In this section we formulate rigorously this construction where we base our inversion algorithm. We remark that it is highly desirable to remove this mathematical artifact. 
\subsection{The Direct Scattering Problem}
Here we \mfied{adopt} the notations in  \cite{Thi-Phong3}. Recall that the parameter  $L := (L_1, \cdots, L_{d - 1}) \in \R^{d-1}, \ L_j > 0, \
j = 1, \cdots, d-1$ refers to the periodicity of the media with
respect to the first $d-1$ variables and $M := (M_1, \cdots, M_{d - 1}) \in
\N^{d-1}$ refers to the number of periods in the truncated domain. A function defined in $\R^d$ is called
$L$ periodic if it is periodic with
period $L$ with respect to the $d-1$ first variables.    We consider in the following  $ML-$periodic Helmholtz
equation (vector multiplications is to be understood component
wise, i.e. $ML =  (M_1L_1, \cdots, M_{d - 1}L_{d - 1})$). In this problem, the
  total field $u$ satisfies 
\begin{equation} \label{C4Eq:Helmh}
	\left\{
	\begin{array}{lc}
		\Delta u + k^2 nu = 0 \quad \text{in} ~ \R^d, ~ d = 2,3 \\[1.5ex]
		u ~ \text{is $ML-$periodic}
	\end{array}
	\right.
\end{equation} where $k>0$ is the {\it wave number}. 
\begin{figure}[htpb]
\centerline{\begin{tikzpicture}[scale=0.9]
\draw[blue,thick,dashed] (-6.5,1.9) -- (10.5,1.9);
\draw[blue,thick,dashed] (-6.5,-1.7) -- (10.5,-1.7);
\draw[blue,thick] (-6,-3.2) -- (-6,2.2); 
\draw[blue,thick] (-2,-2) -- (-2,2.2);
\draw[blue,thick] (2,-2) -- (2,2.2); 
\draw[blue,thick] (6,-2) -- (6,2.2);
\draw[blue,thick] (10,-3.2) -- (10,2.2);


\fill[color=red!60, fill=red!50, very thick](-4.,-0.7) rectangle (-3.,0.3);
\fill[color=red!60, fill=red!50, very thick](0,-0.7) rectangle (1.,0.3);
\fill[color=red!60, fill=red!50, very thick](4.,-0.7) rectangle (5.,0.3);
\fill[color=red!60, fill=red!50, very thick](8.,-0.7) rectangle (9.,0.3); 

\fill[color=red!60, fill=red!50, very thick](-5.,0.9) circle (0.75);
\fill[color=red!60, fill=red!50, very thick](-1.,0.9) circle (0.75);
\fill[color=red!60, fill=red!50, very thick](3.,0.9) circle (0.75);
\fill[color=red!60, fill=red!50, very thick](7.,0.9) circle (0.75);

\fill[color=blue!60, fill=blue!40, very thick](-0.2,1.2) circle (0.4);

\small
\draw[blue] (0,2.3) node {$ n = n_p = 1$};
\draw[blue] (-6.4,1.4) node {$ h$};
\draw[blue] (-6.6, -1.2) node {$ -h$};
\draw[black] (-0.2,1.2) node {$ \domp$};
\draw[blue](-1.1,0.7)node {$\Dpint$};

\draw[blue] (0.4,-0.3) node {$ \Dpout$}; 

\draw[blue] (1.6,-1.4) node {$ \Omega_0$};

\draw [blue,ultra thin,<->] (-2,-1.9) -- (2,-1.9);  \draw[blue,thick,dashed] (0,-2.1) node {$L$};

\draw[black] (- 0.6,-2.7) node {$ \Dint: = \Dpint \cup \domp$,}; 

\draw[black] (2.6,-2.7) node {$ \Dtot: = \Dint \cup \Dpout$}; 

\draw [red,thin,<->] (-6,-3.2) -- (10,-3.2);  \draw[red,thick,dashed] (0,-3.6) node {$ ML$};

\end{tikzpicture}}
\caption{Sketch of the geometry for the $ML-$periodic problem}
\label{Fig:ML-periodic}
\end{figure}
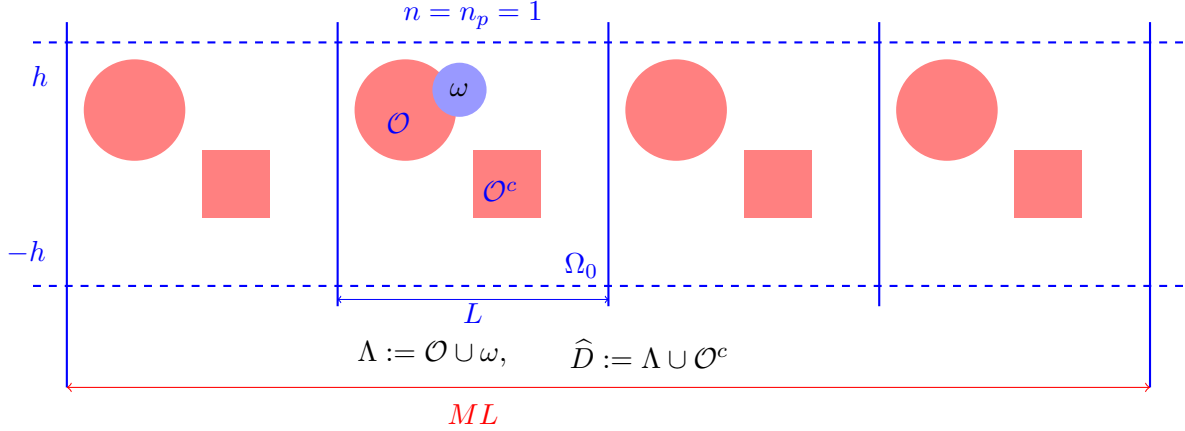
We assume that the index of refraction $n\in L^\infty(\R^d)$ satisfies $\Re(n)\geq n_0>0$,  $\Im(n)\geq 0$ and is $ML-$periodic. Furthermore  $n = n_p$ outside a compact domain $\domp$ where  $n_p \in L^\infty(\R^d)$ is $L$-periodic, and in addition there exists $h >0$ such that $n =1$ for $|x_d| > h$ (see Fig. \ref{Fig:ML-periodic}). Thanks to the $ML-$periodicity, solving equation \eqref{C4Eq:Helmh} in $\R^d$ is equivalent to solving it in the period 
 $$\Omega_M:=\bigcup_{m \in \Zd_M} \Omega_m =  \interd{\boul}{\bour} \times \R $$
  with  $\Omega_m: = \Omega_0 + mL$, $\boul:= \left(\left\lfloor-\frac{M}{2}\right\rfloor + \frac{1}{2}\right)L$, $\bour: =\left(\left\lfloor\frac{M}{2}\right\rfloor + \frac{1}{2}\right) L$, \; $ \Omega_m : =
  \interd{-\frac{L}{2} +mL}{ \frac{L}{2} +mL}\times\R$ and \; $ \Zd_M: = \{ m\in \Z^{d-1}, \textstyle \left\lfloor-\frac{M_\ell}{2}\right\rfloor+1 \leq m_\ell \leq \left\lfloor\frac{M_\ell}{2}\right\rfloor , \; \ell = 1, \ldots, d-1\}$, where we use the notation $\interd{a}{b} := [a_1, b_1] \times \cdots \times
[a_{d-1}, b_{d-1}]$ and $\lfloor\cdot \rfloor$ denotes the floor function. \mfied{We also shall use the notation $\normM{a} := |a_1\cdot a_2 \cdots a_{d-1}|$}. Without loss of generality we assume that there is a local perturbation $\domp$ located in only one period, say $\Omega_0$ (note the case when more periods are defective, the assumption holds true by grouping these cells as one cell  with different period). This problem  is treated in\cite{tpnguyen} under a strict assumption that the local perturbation does not intersect with the periodic background. In this work, we remove this assumption, and allow for the the local perturbation  to be located everywhere in $\Omega_0$.  We call $D_p$ the support of $n_p-1$ and $D = D_p \cup \domp$, note that $n=1$ outside $D$. \mfied{For the justification of our inversion method (that relies on a unique continuation argument) we make the assumption that $\R^d \setminus D$ is connected}.

We  consider down-to-up or up-to-down  incident plane waves of the form
\begin{equation}\label{inc}
	 u^{i,\pm}(x,j)=\frac{-\i}{2\,\olbetm{j}}e^{\i \alpm{j} \ol{x} \pm  \i \olbetm{j} x_d } 
\end{equation}
where 
$$\quad \textstyle{ \alpm{j} := \i\frac{2\pi}{ML}j \quad \mbox{ and } \quad  \betm{j} := \sqrt{k^2 - \alpha^2_\sper(j)}}, \quad \Im( \betm{j}) \ge 0, \quad j \in \Zd$$
and $x=(\ol{x},x_d) \in \R^{d-1}\times \R$. Then the scattered field $u^s = u- u^i$ verifies 
\begin{equation}\label{scat}
	\left\{
	\begin{array}{lc}
		\Delta u^s + k^2 nu^s = -k^2(n-1)u^i \quad \text{in} ~ \R^d, \\[1.5ex]
		u^s ~ \text{is $ML-$periodic}
	\end{array}
	\right.
\end{equation}
and we impose as a radiation condition  the Rayleigh expansions:
\begin{equation} \label{C4:RDC}
\left\{
\begin{array}{lc}
	u^s(\ol{x},x_d) = \sum_{\ell \in \Zd} \raycoefp{u^s}{\ell} e^{\i(\alpm{\ell} \ol{x} + \betm{\ell} (x_d-h))},  \quad \forall \  x_d >h, \\[1.5ex]
	u^s(\ol{x},x_d) = \sum_{\ell \in \Zd} \raycoefm{u^s}{\ell} e^{\i(\alpm{\ell}\ol{x} - \betm{\ell} (x_d+h))}, \quad \forall \ x_d <-h,
\end{array}
\right.
\end{equation}
where the Rayleigh coefficients $\raycoefpm{u^s}{\ell}$ are given by
\begin{equation}\label{RayleighCh2}
\left.
\begin{array}{l l}
\raycoefp{u^s}{\ell} := \dsp \frac{1}{|\interd{\boul}{\bour}|} \int_{\interd{\boul}{\bour}} u^{s}(\ol{x}, h) e^{ - \i \alpm{\ell} \cdot \ol{x}} \d{\ol{x}},  \\[2ex]
\raycoefm{u^s}{\ell}:= \dsp \frac{1}{|\interd{\boul}{\bour}|} \int_{\interd{\boul}{\bour}} u^{s}(\ol{x}, - h) e^{ - \i \alpm{\ell} \cdot \ol{x}} \d{\ol{x}}.
\end{array} 
\right.	
\end{equation}
We shall use the notation
 $$\omehm := \interd{\boul}{\bour} \times ]-h,h[$$
$$\gamhmp:= \interd{\boul}{\bour} \times \{h\}, \quad   \gamhmm:=
\interd{\boul}{\bour} \times \{- h\}.$$
For integer $m$, we denote by $H^m_{\sper}(\Omega_M^h)$ the restrictions to $\Omega_M^h$ of
functions  that are in $H^m_{\mathrm{loc}}(|x_d| \le h)$ and are
$ML-$periodic. The space $\spatracep{1/2}$ is then defined as the space of
traces on $\gamhmp$ of functions in $H^1_\sper(\Omega_M^h)$ and the space
$\spatracep{-1/2}$ is defined as the dual of $\spatracep{1/2}$. Similar definitions are used for $\spatracem{\pm 1/2}$.
\\
More generally for a given  $f \in L^2(\Omega^h_M)$, we  consider the following problem: Find $w \in \spahsm{1}$ satisfying 
\begin{equation}
\label{eq:w}
\Delta w + k^2 n w = k^2(1-n) f
\end{equation}
together with the Rayleigh radiation condition (\ref{C4:RDC}). Then we make the following assumption:
\begin{asp} \label{Ass:nk}
The refractive index $n$ and $k>0$ are such that  \eqref{eq:w}  with $n$ and  with $n$ replaced by  $n_p$ are both well-posed for all $f \in L^2(\Omega^h_M)$. 
\end{asp}
\noindent
We remark that the solution  $w \in \spahsm{1}$ of \eqref{eq:w} can be
extended to a function in $\Omega_M$ satisfying $\Delta \wgg + k^2 n \wgg =
k^2(1-n) f$, using the Rayleigh expansion \eqref{C4:RDC}. We denote by $\fundnp{\cdot}$ the fundamental solution to 
\begin{equation}\label{phi}
\left\{
\begin{array}{lc}
	\Delta \fundnp{\cdot}  + k^2 n_p \fundnp{\cdot} = - \delta_0, \\[1.5ex]
	\fundnp{\cdot} \  \text{is} \ ML - \text{periodic}, \\[1.5ex]
	\text{and the Rayleigh radiation condition (\ref{C4:RDC}). }
\end{array}
\right.
\end{equation}
Then $w$ has the representation as 
\begin{equation}
\label{eq:2forw}
	w(x) = -\int_{D} \Big(k^2(n_p - n)w + k^2(1 - n) f \Big)(y) \fundnp{x-y} \d{y}.
\end{equation}
For sufficient conditions that guaranty Assumption \ref{Ass:nk} we refer the reader to \cite{Thi-Phong2},  \cite{KL}, \cite{tpnguyen} and the references therein.

\subsection{The Inverse Problem}
The inversion method is based on the so-called the Generalized Linear Sampling Method, which was first introduced in  \cite{Audib2015a}, \cite{Audib2014} (see also  \cite[Chapter 2]{CCH}), augmented with the idea of \mfied{differential imaging} introduced in  \cite{Audib2015} which was  adapted to this problem in \cite{tpnguyen}.

\noindent 
As described above we have two choices of interrogating waves. If  we use  down-to-up (scaled) incident plane waves $\uip(x;j)$ defined by (\ref{inc}), then our measurements (data for the inverse problem) are  given by the Rayleigh sequences 
$$
\raycoefp{u^s}{\ell;j},  \quad  (j, \ell) \in \Zd \times \Zd,  
$$
whereas if we use up-to-down (scaled) incident plane waves $\uim(x;j)$ defined by (\ref{inc}) then our  measurements are given the Rayleigh sequences
$$
\raycoefm{u^s}{\ell;j}, \quad  (j, \ell) \in \Zd \times \Zd.
$$
These measurements  define  the so-called near field (or data) operator which is used to derive the indicator function of the defect. More specifically, let us consider the (Herglotz) operators $\opeh^{+}: \ell^2(\Zd) \rightarrow L^2(D)$ and $\opeh^{-}: \ell^2(\Zd) \rightarrow L^2(D)$ defined by 
\begin{equation} \label{2defH}
\opeh^{\pm}\aher:= \sum_{j \in \Zd} \aher(j) \uipm(\cdot; j)\big|_D, \quad \forall \, \aher = \{\aher(j)\}_{j \in \Zd} \in \ell^2(\Zd).
\end{equation}Then $\opeh^{\pm}$ is compact and its adjoint $\Odual{\opeh^{\pm}}: L^2(D) \to \ell^2(\Zd) $ is given by  \cite{Thi-Phong3}
  \begin{equation}\label{2adjointH}
\Odual{\opeh^{\pm}} \fug := \{\hdualcoefpm{\fug}(j)\}_{j \in \Zd}, \quad \mbox{where} \quad \hdualcoefpm{\fug}_j : = \int_{D} \fug(\xg) \ol{\uipm(\cdot; j)}(\xg)\d{\xg}.
\end{equation} 
Let us denote by  $\seth_{\inc}^{\pm}(D)$ the closure of the range of $\opeh^{\pm}$ in $L^2(D)$. We then consider the (compact) operator $\opeg^{\pm}:\seth_{\inc}^{\pm}(D) \rightarrow \ell^2(\Zd)$ defined by
\begin{equation} \label{defG}
{\opeg^{\pm}} (f) := \{\raycoefpm{w}{\ell}\}_{\ell \in\Zd},
\end{equation}
where $\{\raycoefpm{w}{\ell}\}_{\ell \in\Zd}$ is the Rayleigh sequence of $ \wgg \in\spahsm{1}$ the solution of \eqref{eq:w}. We now define the sampling operators $\open^{\pm}: \ell^2(\Zd) \rightarrow \ell^2(\Zd) $ by
\begin{equation}
\open^{\pm} (\aher) = {\opeg^{\pm}}  \, \opeh^\pm(\aher).
\end{equation}
By linearity of the operators ${\opeg^{\pm}}$ and $\opeh^\pm$ we also get an equivalent definition of $\open^{\pm}$ directly in terms of measurements as
\begin{equation}
[\open^{\pm} (\aher)]_\ell = \sum_{j \in \Zd} a(j) \, \raycoefpm{u^s}{\ell;j} \quad \ell \in \Zd.
\end{equation}
 The following properties of  $\opeg^{\pm}$ and and $\opeh^\pm$ are crucial to our inversion method.  To state them, we must recall the standard  {\it interior transmission problem}:
$({\eg}, \vgg) \in
L^2(D) \times L^2(D)$ such that $
{\eg}-\vgg \in H^2(D) $
 and 
\begin{equation} \label{oitp}
\left\{ \begin{array}{lll}
\Delta {\eg} + k^2 n {\eg} = 0 \quad& \mbox{ in } \; D, 
\\[6pt]
\Delta \vgg + k^2 \vgg = 0  \quad &\mbox{ in } \; D,
\\[6pt]
 {\eg} - \vgg= {\ftd} \quad &\mbox{ on } \; \partial D,
\\[6pt]
\partial ({\eg} - \vgg)/\partial \nu = {\ftn} \quad &\mbox{ on } \; \partial D,
\end{array}\right.
\end{equation}
for given $({\ftd},\ftn) \in H^{3/2}(\partial D) \times H^{1/2}(\partial D)$
where $\nu$ denotes the outward normal on $\partial D$. $k$ is called  a {\it transmission eigenvalue} if the homogeneous problem  (\ref{oitp}), i.e. with ${\ftd}=0$ and ${\ftn}=0$, has non-trivial solutions. Up-to-date results on this problem can be found in \cite[Chapter 3]{CCH} where in particular one finds sufficient solvability conditions. In the sequel we make the following assumption. \mfied{If the boundary of $D$  intersects the boundary of $\Omega_0$, then the previous interior transmission problem should be augmented with periodicity conditions on $\partial D \cap \partial \Omega_M$. Since this condition does not affect the assumptions on the solvability of the interior transmission problem (in $H^2(D)$ with periodic conditions on $\partial D \cap \partial \Omega_M$) nor requires any substantial modification of the arguments below (other than changing the solution space), we make the choice of simplifying this technicality and assume that $\partial D \cap \partial \Omega_0 =\emptyset$}.
\begin{asp} \label{HypoLSM}
\mfied{$\partial D \cap \partial \Omega_0 =\emptyset$} and the refractive index $n$ and \mfied{the wave number} $k>0$ are such that \eqref{oitp}
has a unique solution.
\end{asp}
\noindent
In particular,  if $\Re(n-1)>0$ or $-1<\Re(n-1)<0$ uniformly in a neighborhood of $\partial D$  inside $D$  the interior transmission problem (\ref{oitp}) satisfies the Fredholm alternative, and the  set of real standard transmission eigenvalues is discrete (possibly empty).  Thus Assumption \ref{HypoLSM} holds  as long as $k>0$ is not a transmission eigenvalue.

\noindent
 From now on, for $\zg \in \omehm$, we denote by $\sepmgre{z}: = \{ \coefpmgre{z}{\ell}\}_{\ell \in \Zd}$  the Rayleigh sequences of $\Phi(n_p,z)$ with $n_p=1$  define in (\ref{phi}) given by 
\begin{equation}\label{lunch}
	\coefpmgre{z}{\ell}:= \textstyle{\frac{\i}{2\normM{ML} \beta_\sper(\ell)}} e^{-\i (\alpm{\ell} \ol{\zg} - \betm{\ell}|\zg_d \mp h|)} .
\end{equation}
\begin{lemma}[Lemma 3.3 in  \cite{Thi-Phong3}]  \label{lemHerg} The operator $\opeh^\pm$ is compact and
  injective. Let  $\seth_{\inc}^{\pm}(D)$ be  the closure of the range of $\opeh^\pm$ in $L^2(D)$. Then 
\begin{equation} \label{eq:forHinc}
\seth_{\inc}^{\pm}(D) = \seth_{\inc}(D) := \{\vgg \in L^2(D): \;\;\Delta \vgg
+ k^2 \vgg = 0 \mbox{ in } D\}.
\end{equation}
\end{lemma}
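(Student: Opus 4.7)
The plan is to verify the three claims separately. First, a direct calculation using $\alpm{j}^2 + \olbetm{j}^2 = k^2$ (valid whether $\betm{j}$ is real or purely imaginary) shows that each $\uipm(\cdot;j)$ satisfies $(\Delta + k^2)u = 0$ classically in $\R^d$, so every element of the range of $\opeh^\pm$ is a Helmholtz solution on $D$; and since $\seth_{\inc}(D)$ is closed in $L^2(D)$ (weak $L^2$-limits of solutions to $\Delta v + k^2 v = 0$ remain solutions by hypoellipticity), this gives $\overline{\mathrm{range}(\opeh^\pm)} \subset \seth_{\inc}(D)$. Compactness I would obtain by a standard interior-regularity argument: on a strict enlargement $D' \Supset D$ with $\ol{D'} \subset \omehm$, every Helmholtz solution $u$ satisfies a Caccioppoli-type bound $\|u\|_{H^1(D)} \lesssim \|u\|_{L^2(D')}$, so $\opeh^\pm$ factors through $H^1(D)$ and inherits compactness from the Sobolev embedding $H^1(D) \hookrightarrow L^2(D)$.

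For injectivity, I would suppose $\opeh^\pm \aher = 0$ in $L^2(D)$ and view $v := \sum_j \aher(j)\uipm(\cdot;j)$, as the limit of finite partial sums each satisfying $(\Delta + k^2)v = 0$ globally, as a Helmholtz solution in $\R^d$ (the convergence of the evanescent contributions off $D$ being controlled by combining the $\ell^2$-decay of $\aher(j)$ with the algebraic factor $1/|\olbetm{j}|$ that sits in front of each $\uipm(\cdot;j)$). The hypothesis $v|_D = 0$, the connectedness of $\R^d \setminus D$, and unique continuation then force $v \equiv 0$ in $\R^d$. A Fourier expansion in $\ol x$ on a horizontal slice $\{x_d = x_d^0\}$ inside a single period, where the modes $\{e^{\i\alpm{j}\ol x}\}_{j\in\Zd}$ form an orthogonal family on $\interd{\boul}{\bour}$, then separates the Floquet-Bloch components and gives $\aher(j) = 0$ for every $j$.

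For the density statement I would argue by duality using the adjoint formula \eqref{2adjointH}. Any $\fug \in \seth_{\inc}(D)$ orthogonal to $\mathrm{range}(\opeh^\pm)$ satisfies $\Odual{\opeh^\pm}\fug = 0$, i.e.\ $\int_D \fug(y)\,\ol{\uipm(y;j)}\,dy = 0$ for every $j \in \Zd$. Comparing with the Rayleigh decomposition \eqref{lunch} of $\Phi(\cdot - y)$ (taken with $n_p = 1$), these scalar products coincide, up to a nonvanishing factor, with the Rayleigh coefficients on $\gamhmpm$ of the volume potential $W(x) := -\int_D \fug(y)\,\Phi(x-y)\,dy$. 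Their vanishing forces $W \equiv 0$ in one of the half-spaces $\{\pm x_d > h\}$ and hence, by unique continuation in the connected set $\R^d\setminus D$, in the whole exterior of $D$. The classical jump relations for volume potentials then yield $W|_D \in H^2(D)$ with $W=\partial_\nu W = 0$ on $\partial D$ and $(\Delta + k^2)W = -\fug$ in $D$; integration by parts against any $\vgg \in \seth_{\inc}(D)$ produces $\int_D \fug\,\ol{\vgg}\,dx = 0$, so $\fug \perp \seth_{\inc}(D)$ in $L^2(D)$, hence $\fug = 0$, which combined with the first paragraph yields the claimed equality. The main obstacle I anticipate is the careful treatment of evanescent Rayleigh-Bloch modes, needed both to justify convergence of the Herglotz-type series off $D$ in the injectivity step and to make the jump-relation computation on $\partial D$ fully rigorous when $D$ reaches deep into the region where the evanescent $\uipm(\cdot;j)$ exhibit exponential growth.
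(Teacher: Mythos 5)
The paper does not prove this lemma: it imports it wholesale from Lemma~3.3 of \cite{Thi-Phong3} (the citation sits in the statement's header), so there is no in-paper argument to measure yours against. Judged on its own terms, your outline follows the standard route for Herglotz-type operators in periodic waveguides, and your density argument is essentially complete: identifying $\Odual{\opeh^{\pm}}\fug$ with the Rayleigh coefficients of the volume potential $W=-\int_D\fug(y)\gre{y}\,dy$ up to nonvanishing factors, concluding $W=0$ in $\{\pm x_d>h\}$, continuing uniquely through the connected complement of $D$, obtaining $W\in H^2_0(D)$, and integrating by parts against an arbitrary element of $\seth_{\inc}(D)$ are exactly the right steps.

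The two soft spots are the ones you flag yourself, and as written they are genuine gaps rather than routine verifications. First, the compactness argument is circular: the Caccioppoli/Rellich factorization through $H^1(D)$ only converts \emph{boundedness} of $\opeh^{\pm}:\ell^2(\Zd)\to L^2(D')$ into compactness into $L^2(D)$, and boundedness is precisely what is at stake, since for the (all but finitely many) evanescent indices one has $|\uipm(x;j)|=\tfrac{1}{2}|\betm{j}|^{-1}e^{\pm\tau_j x_d}$ with $\tau_j=|\betm{j}|\to\infty$. The clean repair is to show directly that $\opeh^{\pm}$ is Hilbert--Schmidt, i.e.\ $\sum_{j}\|\uipm(\cdot;j)\|^2_{L^2(D)}<\infty$; this gives boundedness and compactness in one stroke, and it is exactly where one must pin down the normalization of the evanescent phases relative to the planes $x_d=\pm h$ so that the exponential actually decays on $\ol{D}$. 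Second, in the injectivity step the claim that $v=\sum_j a(j)\uipm(\cdot;j)$ defines a Helmholtz solution ``in $\R^d$'' is false, and the justification offered (the algebraic factor $1/|\olbetm{j}|$ combined with the $\ell^2$ decay of $a$) cannot work: no algebraic factor controls $e^{\tau_j x_d}$ on the side where the evanescent modes grow, and the series genuinely diverges there. What is true, and suffices, is that the series converges with all derivatives on a half-space containing $D$ together with a full cross-section $\interd{\boul}{\bour}\times\{x_d^0\}$; unique continuation inside that connected region, followed by orthogonality of $\{e^{\i\alpm{j}\ol x}\}_{j\in\Zd}$ over the full $ML$-period $\interd{\boul}{\bour}$ (not over a single $L$-cell, as your phrasing suggests), yields $a(j)=0$ for every $j$ with $\betm{j}\neq0$. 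With those two repairs the proposal becomes a complete proof.
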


\begin{theorem}[Theorem 3.5 in  \cite{Thi-Phong3}] \label{TheoG}
Assume that Assumptions \ref{Ass:nk} and \ref{HypoLSM} hold. Then the
operator ${\opeg^{\pm}}:  \seth_{\inc}(D) \rightarrow \ell^2(\Zd)$ defined by (\ref{defG}) is injective with dense range. Moreover $\sepmgre{z}$
belongs to $\Rang(\opeg^{\pm})$ if and only if $\zg \in D$.
\end{theorem}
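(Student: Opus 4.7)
The plan is to split the statement into three parts---injectivity, the range characterization, and density of the range---using Assumption \ref{HypoLSM} (well-posedness of \eqref{oitp}) together with unique continuation as the main ingredients.

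For \emph{injectivity}, I would take $f \in \seth_{\inc}(D)$ with $\opeg^{\pm} f = 0$ and let $\wgg \in \spahsm{1}$ be the associated solution of \eqref{eq:w}. The vanishing of all Rayleigh coefficients together with \eqref{C4:RDC} forces $\wgg \equiv 0$ for $|x_d|>h$; since $n=1$ outside $D$ and $\R^d \setminus D$ is connected, unique continuation extends this to $\wgg \equiv 0$ on $\R^d \setminus D$. Interior elliptic regularity near $\partial D$ then yields $\wgg|_D \in H^2(D)$ with zero Cauchy data on $\partial D$. Setting $\eg := \wgg + f$ and $\vgg := f$ produces a pair solving the homogeneous version of \eqref{oitp}, so Assumption \ref{HypoLSM} forces $f = \vgg = 0$.

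For the \emph{range characterization}, in the direction $\zg \in D$ I would use that $\gre{\zg}|_{\partial D}$ and $\partial_\nu \gre{\zg}|_{\partial D}$ lie in $H^{3/2}(\partial D) \times H^{1/2}(\partial D)$, and apply Assumption \ref{HypoLSM} to obtain $(\eg,\vgg)$ solving \eqref{oitp} with these boundary data. I would then glue $\wgg := \eg - \vgg$ inside $D$ with $\wgg := \gre{\zg}$ on $\omem \setminus \ol{D}$ (extended by its Rayleigh expansion outside $\omehm$). Cauchy matching on $\partial D$ gives $\wgg \in \spahsm{1}$, and a direct distributional computation---using $\Delta \gre{\zg} + k^2 n_p \gre{\zg} = 0$ in $\R^d \setminus \{\zg\}$ together with $n = n_p$ outside $\domp$---yields $\Delta \wgg + k^2 n \wgg = k^2(1-n)\vgg$ on $\omem$, so $\opeg^{\pm}\vgg = \sepmgre{\zg}$. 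Conversely, for $\zg \notin D$, suppose $\sepmgre{\zg} = \opeg^{\pm} f$ with associated $\wgg$; then $\wgg - \gre{\zg}$ has vanishing Rayleigh coefficients and satisfies $(\Delta+k^2)(\wgg - \gre{\zg}) = 0$ on the connected set $\R^d \setminus (D \cup \{\zg\})$. Unique continuation from $|x_d|>h$ forces $\wgg = \gre{\zg}$ on that set, contradicting the fact that $\wgg$ is $H^1$ at $\zg$ while $\gre{\zg}$ is singular there.

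For \emph{density}, the previous step places $\{\sepmgre{\zg}: \zg \in D\}$ inside $\Rang(\opeg^{\pm})$, so it suffices to show this family is total in $\ell^2(\Zd)$. Using the explicit formula \eqref{lunch} and the exponential decay of the evanescent modes, the map $\zg \mapsto \langle g, \sepmgre{\zg}\rangle_{\ell^2(\Zd)}$ is real-analytic on the slab $\{|\zg_d|<h\}$; if it vanishes on the open set $D$ then it vanishes on the whole slab by analytic continuation. Integrating against $e^{i\alpm{\ell}\ol{\zg}}$ over $\interd{\boul}{\bour}$ for each fixed $\zg_d$ and invoking the orthogonality of the Floquet modes then isolates each coefficient $g(\ell) = 0$, proving totality. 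The hard part will be the gluing step in the forward half of the range characterization: one must track the $H^2$ regularity of $\eg - \vgg$ across $\partial D$ and verify compatibility between the Rayleigh extension of $\gre{\zg}$ and its direct evaluation on $\gamhmpm$, which is where the bulk of the technical bookkeeping resides.
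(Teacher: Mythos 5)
A preliminary remark: the paper does not prove Theorem \ref{TheoG} at all --- it imports the statement verbatim from \cite{Thi-Phong3} --- so there is no in-paper proof to compare against, and your proposal has to be judged against the standard argument. Your treatment of injectivity and of the range characterization \emph{is} that standard argument: vanishing Rayleigh coefficients plus unique continuation through the connected set $\R^d\setminus D$ reduce injectivity to the uniqueness part of Assumption \ref{HypoLSM}, and the two halves of the range statement follow from solving \eqref{oitp} with boundary data $\gre{\zg}$ (gluing $u-v$ to $\gre{\zg}$ across $\partial D$) and from the singularity mismatch at $\zg$, respectively; both are correct. One small imprecision: for $\opeg^{+}$ only the coefficients $\raycoefp{\wgg}{\ell}$ vanish, so the Rayleigh expansion gives $\wgg=0$ only for $x_d>h$, not for $|x_d|>h$; this is harmless, since the unique continuation step propagates the vanishing to all of $\R^d\setminus D$ anyway, but the deductions should be ordered that way. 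Where you genuinely depart from the usual treatment is the density of the range. The standard route is to prove injectivity of the adjoint $\Odual{\opeg^{\pm}}$ on $\ell^2(\Zd)$ by a duality/Herglotz-type argument, which meshes with the factorization machinery of Theorem \ref{TheoFactorization}. You instead deduce density from the totality in $\ell^2(\Zd)$ of the family $\{\sepmgre{\zg}:\zg\in D\}$, which the forward half of your range characterization has already placed inside $\Rang(\opeg^{\pm})$, using the explicit formula \eqref{lunch}, real-analyticity of $\zg\mapsto\langle g,\sepmgre{\zg}\rangle$ on the slab (the exponential decay of the evanescent modes supplies the locally uniform convergence of the complexified series), analytic continuation from the open set $D$, and Floquet-mode orthogonality to isolate each $g(\ell)$. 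This is correct and has the merit of being self-contained --- it uses nothing beyond the range inclusion you have already proved --- at the cost of some convergence bookkeeping that the adjoint argument avoids.
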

\noindent
Another main ingredient is a symmetric factorization of  an appropriate operator given in terms of $\open^{\pm}$. To this end, for a generic operator $F:H\to H$, where $H$ is a Hilbert space, with adjoint $F^*$ we define
\begin{equation}\label{sharp}
 	\opef_{\sharp} := |\Re(\opef) |+ |\Im(\opef) |
 \end{equation}
 where \; $\Re(\opef) := \frac{1}{2}
\left(\opef +\Odual{\opef}\right)$, \; $\Im(\opef) := \frac{1}{2\i}
\left(\opef - \Odual{\opef}\right)$. 

\noindent
Now if $\opet: L^2(D) \rightarrow L^2(D)$ is defined by 
\begin{equation}  \label{defT}
	\opet \fg:= k^2(n - 1) (\fg + \wgg|_D)
\end{equation}
with $\wgg$ being the solution of \eqref{eq:w}, we have the following result under Assumptions \ref{Ass:nk}, \ref{HypoLSM}.
\begin{theorem}[Theorem 4.2 in  \cite{Thi-Phong3}] \label{TheoFactorization}
	The following factorization holds
	\begin{equation} \label{titifactN}
		\open_{\sharp}^{\pm} = \Odual{\opeh^{\pm}} \, \opet_{\sharp} \, \opeh^\pm,
	\end{equation} where $\opet_{\sharp}: L^2(D) \to L^2(D)$ is self-adjoint and coercive on $\seth_{\inc}(D)$.  Moreover, $\zg \in D$ if and only if  \; $\sepmgre{z} \in \Rang\left((\open_{\sharp}^{\pm})^{1/2}\right)$.
\end{theorem}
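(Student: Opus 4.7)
The plan is to establish the factorization and the range characterization in four concatenated steps using the standard tools of the Generalized Linear Sampling Method (see \cite[Chapter~2]{CCH}).

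First I would derive the raw (unsymmetrized) factorization $\open^\pm = \Odual{\opeh^\pm}\,\opet\,\opeh^\pm$ by a direct calculation. For $\aher \in \ell^2(\Zd)$, set $\fg := \opeh^\pm \aher \in \seth_{\inc}(D)$ and let $\wgg$ solve \eqref{eq:w}. Rewriting the PDE as $(\Delta + k^2)\wgg = -k^2(n-1)(\fg + \wgg|_D) = -\opet \fg$ in $\Omega_M$, with Rayleigh radiation and $ML$-periodicity, the volume-potential representation using the free-space $ML$-periodic Green's function (whose Rayleigh coefficients are given by \eqref{lunch}) produces an explicit formula for $\raycoefpm{\wgg}{\ell}$ that, after inspection, coincides with $[\Odual{\opeh^\pm}(\opet \fg)]_\ell$ given by \eqref{2adjointH}. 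This is exactly $\opeg^\pm = \Odual{\opeh^\pm}\, \opet$ on $\seth_{\inc}(D)$, and composing with $\opeh^\pm$ gives the claimed factorization of $\open^\pm$.

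Next, the passage from $\opet$ to $\opet_\sharp$ uses the abstract symmetric-factorization result of Kirsch (see \cite[Theorem 2.15]{CCH} and \cite{Audib2015a}): once $\opet$ is split into self-adjoint real and imaginary parts as in \eqref{sharp} with $\Re(\opet)$ Fredholm of index zero and $\Im(\opet)$ of definite sign (both inherited from the sign conditions $\Re(n)\ge n_0>0$ and $\Im(n)\ge 0$ and Assumption~\ref{Ass:nk}), the symmetric identity $\open_{\sharp}^{\pm} = \Odual{\opeh^\pm}\, \opet_\sharp\, \opeh^\pm$ follows automatically, and self-adjointness of $\opet_\sharp$ is built into its definition.

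The analytic core, and the main obstacle, is proving that $\opet_\sharp$ is coercive on $\seth_{\inc}(D)$. My plan is a classical contradiction argument: suppose there is a sequence $\fg_n \in \seth_{\inc}(D)$ with $\|\fg_n\|_{L^2(D)} = 1$ and $|(\opet_\sharp \fg_n, \fg_n)_{L^2(D)}| \to 0$. Expanding $(\opet \fg_n, \fg_n)_{L^2(D)} = k^2\int_D (n-1)(\fg_n + \wgg_n|_D)\overline{\fg_n}\d{x}$ with $\wgg_n$ the solution of \eqref{eq:w} associated with $\fg_n$, one integrates by parts on $\omehm$ using the PDE for $\wgg_n$ and handles the boundary terms on $\gamhmpm$ via the Rayleigh expansion (only modes with $\Im(\betm{\ell}) \ge 0$ contribute, with the correct sign). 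This produces bounds on $\|\wgg_n\|_{H^1}$ and a weighted $L^2$ bound on $\fg_n$ inside $D$. Extracting a weakly convergent subsequence, using elliptic regularity of $\wgg_n$ together with the well-posedness of the standard interior transmission problem granted by Assumption~\ref{HypoLSM}, one forces $\fg_n \to 0$ strongly in $L^2(D)$, contradicting $\|\fg_n\|=1$. The delicate point is that $\seth_{\inc}(D)$ consists of $L^2$ solutions of $(\Delta + k^2)\vgg = 0$ in $D$ (by Lemma~\ref{lemHerg}), so one must invoke \eqref{oitp} with weak interior data and track the appropriate Sobolev trace spaces on $\partial D$; the geometric hypothesis $\partial D \cap \partial \Omega_0 = \emptyset$ from Assumption~\ref{HypoLSM} keeps the ITP on the usual $H^2(D)$ framework.

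Finally, the range characterization follows from the Kirsch--Grinberg range identity: since $\open_{\sharp}^{\pm} = \Odual{\opeh^\pm}\, \opet_\sharp\, \opeh^\pm$ is a symmetric factorization with $\opet_\sharp$ self-adjoint and coercive on $\overline{\Rang(\opeh^\pm)} = \seth_{\inc}(D)$ (the latter equality from Lemma~\ref{lemHerg}), one obtains $\Rang((\open_{\sharp}^{\pm})^{1/2}) = \Rang(\Odual{\opeh^\pm})$. Combining this with Theorem~\ref{TheoG}, which characterizes $\sepmgre{z} \in \Rang(\opeg^\pm)$ precisely when $\zg \in D$, together with the identity $\opeg^\pm = \Odual{\opeh^\pm}\, \opet$ and the invertibility of $\opet$ inherited from the coercivity of $\opet_\sharp$, yields the sought equivalence $\zg \in D \Longleftrightarrow \sepmgre{z} \in \Rang((\open_{\sharp}^{\pm})^{1/2})$.
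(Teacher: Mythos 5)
The paper itself does not prove this theorem: it is imported verbatim as Theorem 4.2 of \cite{Thi-Phong3}, so your proposal can only be measured against the standard argument of that reference and of \cite[Chapter 2]{CCH}. Your overall architecture (raw factorization $\open^{\pm}=\Odual{\opeh^{\pm}}\,\opet\,\opeh^{\pm}$, the abstract $\sharp$-factorization theorem, coercivity of the middle operator, then a range identity) is indeed that argument. Two points, however, would break your write-up as it stands. First, the hypotheses you feed into the abstract theorem are wrong: $\Re(n)\ge n_0>0$ and $\Im(n)\ge 0$ do \emph{not} make $\Re(\opet)$ a compact perturbation of a coercive operator. The leading part of $\opet$ is multiplication by $k^2(n-1)$, so it is the \emph{contrast} $n-1$ that must be bounded away from zero with $\Re(n-1)$ of a definite sign (at least in a neighborhood of $\partial D$ inside $D$); this is precisely the condition the paper alludes to in the remark following Assumption \ref{HypoLSM}, and without it the coercivity of $\opet_{\sharp}$ on $\seth_{\inc}(D)$ can fail. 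Relatedly, the identity $\open^{\pm}_{\sharp}=\Odual{\opeh^{\pm}}\,\opet_{\sharp}\,\opeh^{\pm}$ is not ``automatic'' from splitting $\opet$ as in \eqref{sharp}: in general $|\Odual{\opeh^{\pm}}A\,\opeh^{\pm}|\neq \Odual{\opeh^{\pm}}|A|\,\opeh^{\pm}$, and the $\opet_{\sharp}$ appearing in the theorem is the operator \emph{constructed} by the abstract result of \cite{CCH}, not literally $|\Re(\opet)|+|\Im(\opet)|$; you half-acknowledge this by citing the abstract theorem, but your justification conflates the two.

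Second, your range characterization does not close in the direction $\zg\notin D$. The abstract theory gives $\Rang\left((\open^{\pm}_{\sharp})^{1/2}\right)=\Rang\left(\Odual{\opeh^{\pm}}\right)$, so what is actually needed is the statement ``$\sepmgre{z}\in\Rang(\Odual{\opeh^{\pm}})$ if and only if $\zg\in D$''. Deducing this from Theorem \ref{TheoG} requires $\Rang(\opeg^{\pm})=\Rang(\Odual{\opeh^{\pm}})$, and that does \emph{not} follow from ``invertibility of $\opet$ inherited from the coercivity of $\opet_{\sharp}$'' --- coercivity of $\opet_{\sharp}$ says nothing about $\opet$ itself. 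One either proves the $\Odual{\opeh^{\pm}}$-characterization directly (for $\zg\in D$ by exhibiting a preimage through a cutoff of $\gre{z}$, for $\zg\notin D$ by unique continuation and the singularity of $\gre{z}$ at $\zg$, exactly as in the $q$-mode analogue treated in Section 3), or one shows that $P\opet|_{\seth_{\inc}(D)}$ is an isomorphism of $\seth_{\inc}(D)$ (with $P$ the orthogonal projection onto $\seth_{\inc}(D)$) via a Fredholm argument using the coercive-plus-compact splitting of $\opet$ together with the injectivity of $\opeg^{\pm}$. Either route requires an argument you have not supplied.
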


\noindent
The above theorem provides a rigorous method to recover the support of $D$. However this is not satisfactory since the aim is find only the support of $\omega$ and trying to reconstruct everything may not be feasible due to possible complicated structure of the periodic media and even useless if $\omega\subset D_p$. Our goal is to derive an imaging method that resolves only $\omega$ without knowing or recovering $D_p$. This leads us to introducing next the sampling operator for a single Floquet-Bloch mode whose analysis will bring up a new interior transmission problem. 

\noindent
We end this section by introducing some more notations to be used in the sequel. 
\mfied{\begin{definition}
 A function $u$ is called quasi-periodic with parameter $\xi = (\xi_1, \cdots ,\xi_{d - 1})$ and period $L = (L_1, \cdots , L_{d-1})$, with respect to the first $d-1$ variables (briefly denoted as $\xi-$quasi-periodic with period $L$) if:
\begin{equation*}
	u(\ol{x} + (jL), x_d) = e^{i\xi \cdot (jL)} u(\ol{x}, x_d), \quad \forall j \in \Z^{d-1}.
\end{equation*} 
\end{definition}}
 Let $q$ be a fixed parameter in $\Zd_M$, we denote by $\greq{z}$ the solution to 
\begin{equation}\label{phiq}
\Delta \greq{z} + k^2 \greq{z} =  -\delta_{z}\qquad \mbox{in} \;\Omega_0
\end{equation}
 and is $\alpha_q$ quasi-periodic with period $L$ \mfied{with $\alpha_q :=  2\pi  q/ (ML)$}. The Rayleigh coefficients $\sepmgreq{z}$ of $\greq{z}$ are given by 
\begin{equation}\label{hhh}
	\hspace*{-0.5cm}\coefpmgreq{z}{j} =  \left \{
\begin{array}{cl} 
   \textstyle{\frac{\i}{2\normM{L} \betm{q + M\,\ell}}} e^{-\i (\alpm{q + M\,\ell} \ol{\zg} - \betm{q + M\,\ell}|\zg_d \mp h|)}  & \, \mbox{if} \, j = q + M \ell, \; \ell \in \Zd, \\[1.5ex]
 0 &\, \mbox{if} \, j \neq q + M \ell, \;  \ell \in \Zd.
\end{array}
 \right.
\end{equation}
Furthermore we assume that each period of $D_p$ is composed by $J \in \N$ disconnected components and the defect $\domp$  as well as the  components that contains or have non-empty intersection with $\domp$ are in one cell, which we denote by $\Omega_0$ (otherwise we could rearrange the period). For convenience, we now introduce some additional notations.  We denote by $\Dpint$ the union of the components of $D_p \cap \Omega_0$ that have nonempty intersection with $\domp$,  and by $\Dpout$ its complement in $D_p\cap\Omega_0$, i.e the union of all the components of $D_p \cap \Omega_0$ that do not intersect $\omega$. Furthermore, we denote by $\Dint: = \Dpint \cup \domp$ and by $\Dtot : = \Dint \cup \Dpout$. Obviously,  $\Dtot = D \cap \Omega_0$. (see also Fig. \ref{Fig:ML-periodic} and note that if $\domp$ does not intersect with $D_p$ then $\Dpint \equiv \emptyset$, $\Dpout \equiv D_p \cap \Omega_0$ and $\Dint = \domp$). We consider the following $ML$-periodic copies of the aforementioned regions
\begin{equation}
\label{nota}	
 \Dpoutp =  \bigcup_{m \in \Z_M} \Dpout + mL,   \quad  \Dint_p := \bigcup_{m \in \Z_M} \Dint + mL  \quad \text{and}  \quad  \Dtot_p := \bigcup_{m \in \Z_M} \Dtot + mL
\end{equation} Remark that $\Dtot_p \equiv D_p \cup \big(\cup_{m \in \Z_M} \domp + mL\big)$ contains $D$ and the $L$-periodic copies of $\domp \setminus D_p$. {We remark that $n = n_p = 1$ in $\Dtot_p \setminus D$.}

\section{The Near Field Operator for a Single Floquet-Bloch Mode}

Let $\aher \in \ell^2(\Zd)$, we define for $q \in
\Zd_M$, the element $a_q \in \ell^2(\Zd)$ by
\[ 
	\aher_q(j) := \aher(q + jM).
\]
We then define the operator $\opei_q : \ell^2(\Zd) \to \ell^2(\Zd)$, which transforms $\aher \in \ell^2(\Zd)$ to $\tilde \aher \in \ell^2(\Zd)$ such that 
\begin{equation}\label{iop}
\tilde{\aher}_q = \aher\quad \mbox{and} \quad \tilde{\aher}_{q'} = 0 \; \mbox{ if } \; q \neq q'. 
\end{equation}
We remark that $\opei^{*}_q (\aher) = \aher_q$, where $\opei^{*}_q: \ell^2(\Zd)
\to \ell^2(\Zd)$ is the dual of the operator $\opei_q$. 
The single Floquet-Bloch mode Herglotz operator $\opeh^{\pm}_q: \ell^2(\Zd) \to L^2(D)$
is defined by 
\begin{equation} \label{2defHq}
	\opeh^{\pm}_q \aher := \opeh^{\pm} \opei_q \aher = \sum_{j} \aher(j) \uipm(\cdot; q + jM)|_{D}
\end{equation}
and the  single Floquet-Bloch mode  near field (or data) operator $\open^{\pm}_q:
\ell^2(\Zd) \to \ell^2(\Zd)$  is defined by 
\begin{equation} \label{2defNq} 
	\open^{\pm}_q \,\aher  = \opei^{*}_{q} \, \open^{\pm} \,  \opei_q \, \aher.
\end{equation} 
We remark that $\opeh^{\pm}_q \aher$ is an $\alpha_q-$quasi-periodic function
with period $L$. The sequence $\open^{\pm}_q \,\aher $ corresponds to the
Fourier coefficients of the $\alpha_q-$quasi-periodic component of the
scattered field in the decomposition \eqref{for:decompw}. This operator is then somehow
associated with $\alpha_q-$quasi-periodicity. One immediately sees from the factorization $\open^{\pm} = (\opeh^{\pm})^{*} \,  \opet \,
\opeh^{\pm}$ that the following factorization holds.
\begin{equation} \label{2FactfNq} 
	\open^{\pm}_q = (\opeh^{\pm}_q)^{*} \, \opet \, \opeh^{\pm}_q.
\end{equation}
For later use we also define the operator  $\opeg^{\pm}_q: \ol{\Rang(\opeh^{\pm}_q)} \to \ell^2(\Zd)$  by 
\begin{equation} \label{2defGq} 
	 \opeg^{\pm}_q = (\opeh^{\pm}_q)^{*} \opet |_{\ol{\Rang(\opeh^{\pm}_q)} }
\end{equation} where the operator $\opet$ is defined by \eqref{defT}. 

\noindent 
Observing that
\[
	\varphi(j;\ol{x}): =e^{\i \alpm{j} \ol{x} }  = e^{\frac{2\pi}{ML}j \ol{x}}, \qquad j\in \Z
\] is a Fourier basic of $ML$ periodic function in $L^2(\Omega_M)$, we have that any $w \in L^2(\Omega_M)$ which is $ML$ periodic, has the expansion 
\begin{equation}
	w(x) = \sum_{j \in \Z}\widehat{w}(j,x_d) \varphi(j;\ol{x}), \qquad \text{where} \qquad \widehat{w}(j,x_d) := \frac{1}{\normM{ML}} \int_{\Omega_M} w(x) \ol{\varphi(j;\ol{x})} \d{\ol{x}}.
\end{equation} Spliting $j$ by module $M$ we can arrange the expansion of $w$ as 
\begin{equation}
	w(x) = \sum_{q \in \Z_M} \Big(\sum_{\ell \in \Z} \widehat{w}(q+M\ell,x_d) \varphi(q+M\ell; \ol{x}) \Big),
\end{equation}
where $ \varphi(q+M\ell; \ol{x})$ is $\alpha_q-$quasi-periodic with period $L$, here $\alpha_q: = \frac{2\pi}{L}q$. Letting
\[
	w_q: = \sum_{\ell \in \Z} \widehat{w}(q+M\ell,x_d) \varphi(q+M\ell; \ol{x})
\] we have that  $w_q$ is $\alpha_q-$quasi-periodic with period $L$. Thus any $ML-$periodic  function $w \in L^2(\Omega_M)$ can be decomposed 
\begin{equation}
\label{for:decompw}
w = \sum_{q\in \Z_M} w_q
\end{equation} where $w_q$ is $\alpha_q-$quasi-periodic with period $L$. Moreover, by the orthogonality of the Fourier basic $\{ \varphi(j; \cdot) \}_{j \in \Z}$, we have that 
\begin{equation}
\raycoefpm{w_q}{j} = 0 \quad \text{if} \; j \neq q+ M\ell, \; \ell \in \Z \qquad \text{and} \qquad \raycoefpm{w}{q+M\ell} = \raycoefpm{w_q}{q+M\ell}
\end{equation} where $\raycoefpm{w_q}{j}$ the Rayleigh sequence of $w_q$ defined in \eqref{RayleighCh2}. By definition of $\opeg^{\pm}_q$, we see that $\opeg^{\pm}_q(f)$ is a Rayleigh sequence of $\raycoefpm{w}{j}$ at all indices $j = q + M\ell, \ell \in \Z$, where $w$ is solution of \eqref{eq:w}. Seeing also the line above that theses coefficients come from the Rayleigh sequence of $w_q$ where $w_q$ is one of the component of $w$ using the decomposition \eqref{for:decompw}, which is $\alpha_q-$ quasi periodic.   We now assume that $f|_{D_p}$ is $\alpha_q-$quasi-periodic. Then,  using the decomposition \eqref{for:decompw} for $w$, and that fact that $n_p$ is periodic, $f$ is $\alpha_q-$quasi-periodic and $n - n_p$ is compactly supported in one period $\Omega_0$, \eqref{eq:w} becomes
\begin{equation} 
	\Delta w_q + k^2n_p w_q  = k^2(n_p - n) w + k^2(1-n)f \qquad \text{in} \quad \Omega_0. 
\end{equation} Denoting by $\widetilde{w} : = w - w_q$, the previous equation is equivalent to 
\begin{equation} 
\label{eq:wq}
	\Delta w_q + k^2n w_q  = k^2(n_p - n) \widetilde{w} + k^2(1-n)f \qquad \text{in} \quad \Omega_0.
\end{equation} 
Therefore, operator $\opeg_q^{\pm}:\ol{\Rang(\opeh^{\pm}_q)} \rightarrow \ell^2(\Zd)$ can be equivalently defined as
\begin{equation} \label{defGqequiv}
\opeg^{\pm}_q (f) := \opei^{*}_q\{\raycoefpm{w_q}{\ell}\}_{\ell \in\Zd},
\end{equation} where $w_q$ solution of \eqref{eq:wq} and $w_q + \widetilde{w}$ is solution of \eqref{eq:w}. This definition is helpful for proving the  following properties for  $\opeh^{\pm}_q$ and $\opeg^{\pm}_q$ that are the counterpart results to  Lemma \ref{lemHerg} and  Lemma \ref{TheoG}, now needed for the operator $\open^{\pm}_q$.
\begin{lemma}
\label{lemHergq}
The operator $\opeh^{\pm}_q$ is injective and 
$$\ol{\Rang(\opeh^{\pm}_q)} = \seth^q_{\inc}(D) := \{v \in L^2(D), \quad   \Delta v + k^2 v = 0 \; \mbox{in } D \;  \mbox{and} \; v|_{D_p} \; \mbox{is $\alpha_q-$quasi-periodic}\}.$$
\end{lemma}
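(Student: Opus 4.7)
The plan is to adapt the argument of Lemma~\ref{lemHerg} to the single Floquet-Bloch mode setting, with the $\alpha_q$-quasi-periodic constraint on $D_p$ entering as the key new ingredient.

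First I would establish injectivity from the factorization $\opeh^{\pm}_q = \opeh^{\pm} \opei_q$: since $\opei_q$ is an isometric embedding onto sequences supported on indices of the form $q + jM$ and $\opeh^{\pm}$ is injective by Lemma~\ref{lemHerg}, their composition is injective. For the forward inclusion $\ol{\Rang(\opeh^{\pm}_q)} \subset \seth^q_{\inc}(D)$, I would verify that each plane wave $\uipm(\cdot; q+jM)$ is an entire Helmholtz solution and is $\alpha_q$-quasi-periodic with period $L$ on $\R^d$ (because $\alpha_\sper(q+jM)$ differs from $\alpha_q$ by a $\frac{2\pi}{L}$-lattice vector). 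Both the Helmholtz equation (in the weak/$L^2$ sense) and $\alpha_q$-quasi-periodicity on $D_p$ pass to finite linear combinations and to $L^2(D)$ limits via an a.e.\ subsequence.

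The main work lies in the reverse inclusion, which I would prove by Hahn-Banach. Suppose $\varphi \in L^2(D)$ annihilates $\Rang(\opeh^{\pm}_q)$, so that $\hdualcoefpm{\varphi}(q+jM) = 0$ for every $j \in \Zd$, and introduce the single Floquet-Bloch volume potential
\[
 \Psi(x) := \int_D \varphi(y) \, \greq{y}(x) \, dy,
\]
which is $\alpha_q$-quasi-periodic in $x$ with period $L$ by construction. Using the explicit Rayleigh coefficients of $\greq{y}$ in \eqref{hhh} and comparing with \eqref{2adjointH}, one checks that the Rayleigh coefficients $\hat\Psi^{\pm}(q+\ell M)$ of $\Psi$ agree with $\hdualcoefpm{\varphi}(q+\ell M)$ up to non-vanishing phase factors, hence vanish. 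Because $\Psi$ is $\alpha_q$-quasi-periodic with period $L$, its Rayleigh expansion is supported on the sub-lattice $\{q + \ell M\}$ only, so $\Psi \equiv 0$ for $|x_d|>h$. Unique continuation in the connected set $\R^d \setminus \Dtot_p$ combined with the quasi-periodic extension then propagates $\Psi \equiv 0$ to the exterior of $\Dtot_p$ within every period.

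To conclude, since $\Psi$ is the volume potential of an $L^2$ source, its Cauchy data is continuous across $\partial D$, giving $\Psi = \partial_\nu \Psi = 0$ on $\partial D$. Writing $\Delta \Psi + k^2 \Psi = -\varphi$ inside $D \cap \Omega_0$ and using the compatible quasi-periodic extension in other periods, Green's identity applied to $\Psi$ and any $v \in \seth^q_{\inc}(D)$ yields (up to conjugation adjustments) $\int_D v \ol \varphi \, dx = 0$, the interior term collapsing via $\Delta v + k^2 v = 0$ and the boundary term vanishing thanks to the zero Cauchy data of $\Psi$. The hard part will be the bookkeeping: the matching $\alpha_q$-quasi-periodicity of $\Psi|_{D_p}$ and $v|_{D_p}$ is exactly what cancels cross-period residuals when applying Green's identity over the multi-period domain $D$, and Assumption~\ref{HypoLSM} (ensuring $\partial D \cap \partial \Omega_0 = \emptyset$) is what guarantees a clean decomposition. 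This is precisely why the density holds over $\seth^q_{\inc}(D)$ rather than the larger space $\seth_{\inc}(D)$: only the quasi-periodic solutions on $D_p$ share the structure imposed by the single-mode potential $\Psi$.
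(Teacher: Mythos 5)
Your injectivity argument and the forward inclusion $\ol{\Rang(\opeh^{\pm}_q)}\subset\seth^q_{\inc}(D)$ are fine, and the first half of your density argument (vanishing of the Rayleigh coefficients of the single-mode potential $\Psi$ on the sub-lattice $\{q+\ell M\}$, hence $\Psi=0$ for $|x_d|>h$, hence $\Psi=0$ outside $\Dtot_p$ by unique continuation) is exactly what the paper does. The gap lies in the two assertions you use to close the argument. First, $\Delta\Psi+k^2\Psi=-\varphi$ does \emph{not} hold on $D\cap\Omega_0$: the kernel $\Phi_q(\cdot-y)$ is $\alpha_q$-quasi-periodic and therefore carries a Dirac singularity at every translate $y+mL$, so for $x\in D_p$ one obtains the lattice sum $\Delta\Psi+k^2\Psi=-\sum_{m}e^{\i\alpha_q mL}\varphi(x-mL)$ over those $m$ with $x-mL\in D$. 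This collapses to a multiple of $\varphi(x)$ only when $\varphi|_{D_p}$ is itself $\alpha_q$-quasi-periodic --- but you took $\varphi$ to be a \emph{general} annihilator in $L^2(D)$, for which no such collapse occurs. In addition, $\Psi$ carries a nonzero source $-e^{\i\alpha_q mL}\varphi(\cdot-mL)$ on each translated copy $\cdomp+mL\subset\Dtot_p\setminus D$, $m\neq 0$, which your Green's identity over $D$ never sees.

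Second, and more seriously, the claim that $\Psi$ has vanishing Cauchy data on $\partial D$ fails precisely in the new situation this lemma is meant to cover. When $\domp$ straddles $\partial D_p$, the pieces $(\partial D_p\cap\domp)+mL$ for $m\neq0$ belong to $\partial D$ but are \emph{interior} to $\Dtot_p$ (they separate $D_p\cap\Omega_m$ from the copy $\cdomp+mL$), and there $\Psi$ is in general nonzero; your unique continuation step only gives zero Cauchy data on $\partial D\cap\partial\Dtot_p$. Consequently the boundary terms in your final Green's identity do not vanish and $\int_D v\,\ol{\varphi}=0$ does not follow. The paper avoids both problems by restricting from the outset to $\varphi\in\seth^q_{\inc}(D)$ (so the source on $D_p$ is a clean multiple of $\varphi$), by using the full membership $u:=\Psi\in H^2_0(\Dtot_p)$ rather than vanishing data on $\partial D$, and by pairing over all of $\Dtot_p$ with the function $\widetilde\varphi$ obtained by extending $\varphi|_{\cdomp}$ quasi-periodically onto the copies, which yields $\int_{D_p}|\varphi|^2+\normM{M}\int_{\cdomp}|\varphi|^2=0$ and hence $\varphi=0$. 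Your closing remark that ``the hard part will be the bookkeeping'' is accurate: the bookkeeping is exactly where the argument currently breaks.
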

\begin{proof}
The proof of this lemma follows the lines of the proof of Lemma 5.1 in  \cite{Thi-Phong3} slightly modified  to account for more general location of the defect. $\opeh^{\pm}_q$ is injective since $\opeh^{\pm}$ is injective and $\opei_q$ is injective. We now prove that $(\opeh^{\pm}_q)^{*}$ is injective on $\seth^q_{\inc}(D)$. Let $\varphi \in \seth^q_{\inc}(D)$ and assume $(\opeh^{\pm}_q)^{*} (\varphi) = 0$. We define
\[
		u(x): = \frac{1}{\normM{M}}\int_{D} \Phi_q(x - y) \varphi(y)\d{y}. 
\] where $\Phi_q(x)$ has expansion
\begin{equation}
\label{form:phiq}
 \Phi_q(x) =  \frac{\i}{2\normM{L}}  \sum_{\ell \in \Zd} \frac{1}{\betm{q + M\ell}} e^{\i \alpm{q + M\ell} \ol{x} + \i\betm{q + M\ell} |x_d|}.
\end{equation} with $\alpm{q + M\ell} = \frac{2\pi}{ML}(q+M\ell)$ and $ \betm{q + M\ell} = \sqrt{k^2 - \alpm{q + M\ell}^2}$. By definition of $u$ and the expansion of $\Phi_q$ we have that 
\begin{equation}
\label{eq:coef:j}
\begin{array}{ll}
	\raycoefp{\eg}{j} &= \displaystyle  \frac{1}{|\interd{\boul}{\bour}|} \int_{\interd{\boul}{\bour}} \Big[ \frac{1}{\normM{M}}\int_{D} \Phi_q((\ol{x},h) - y) \varphi(y) \d{y} \Big] e^{ - \i \alpm{j} \cdot \ol{x}} \d{\ol{x}} \\[5ex] 
	 &=  \left\{\begin{array}{cl}
	 	\displaystyle  \int_{D} \varphi(y) \frac{\i }{2\betm{j}} e^{\i \alpm{j} - \i \betm{j}(x_d - h)} =  \left(\Odual{\opeh^{+}}(\varphi)\right)(j), & \text{if} \; j = q+M\ell \\[3ex]
	 	0 & \text{if} \; j \neq q + M \ell
	 \end{array}  \right. 
\end{array}
\end{equation}
which implies that $ \raycoefpm{u}{j} = 0$ for all  $j \neq q + M\ell$ and $ \raycoefpm{\eg}{q+M\ell} = \left(\Odual{\opeh^{\pm}}(\varphi)\right)(q+M\ell) =((\opeh^{\pm}_q)^{*} (\varphi))(\ell) = 0$. Therefore $u$ has all Rayleigh coefficients equal 0, which implies that 
\[ 
	u = 0, \quad \; \mbox{for} \; \pm x_d > h.
\]
We now observe that  for all $y \in D$, $\Delta \Phi_q(\cdot ; y) +
k^2 \Phi_q(\cdot ; y) = 0$ in the complement of $ \Dtot_p$. This implies that
$$\Delta u + k^2 u = 0 \quad \text{in } \; \R^d \setminus \Dtot_p. $$ 
Using a unique continuation argument we infer that $ u = 0$ in $\Omega_M
\setminus \Dtot_p$. Therefore, $u \in H^2_0(\Dtot_p)$ by the
regularity of volume potentials. We now consider two cases:

\bigskip

\noindent{\it If \; $\domp \subset D_p$}, then $\Dtot_p \equiv D_p$, i.e $u \in H^2_0(D_p)$. Moreover, by definition, $u$ verifies $\Delta u + k^2 u = -\varphi$ in $D_p$ and using the fact that $\Delta \varphi + k^2 \varphi = 0$ in $ D_p$ we finally have
\begin{equation}
-\|\varphi\|^2_{L^2(D_p)} = \int_{D_p} (\Delta u + k^2 u) \ol{\varphi} \d{x} = \int_{D_p} u (\Delta \ol{\varphi} + k^2 \ol{\varphi}) \d{x} = 0.
\end{equation}
This proves that $\varphi = 0$ and yields the injectivity of $(\opeh^{\pm}_q)^{*}$ on $\seth^q_{\inc}(D)$.

\medskip

\noindent{\it If \; $\domp \not \subset D_p$}, let denote by  $\cdomp: = \domp \setminus D_p$ then $\cdomp \neq \emptyset$. We then rewrite $u(x)$ as 
\[
		u(x): = \frac{1}{\normM{M}}\int_{\cdomp} \Phi_q(x - y) \varphi(y)\d{y} +
                \frac{1}{\normM{M}} \int_{D_p } \Phi_q(x - y) \varphi(y)\d{y}.
\]  Since $\varphi|_{D_p}$ and $\Phi_q$ are $\alpha_q-$quasi-periodic
functions with period $L$, then for $x \in D_p$ we have 
\[ 
	\int_{D_p \cap \Omega_{m}} \Phi_q(x - y) \varphi(y)\d{y} = \int_{D_p \cap \Omega_0} \Phi_q(x - y) \varphi(y)\d{y}, \quad \; \forall \; m \in \Z^{d-1}_M.
\]   Therefore, for $x \in D_p \cap \Omega_m$.
\begin{equation}
\label{eq:2DpOm}
	u(x): = \frac{1}{\normM{M}}\int_{\cdomp} \Phi_q(x - y) \varphi(y)\d{y} +
                \int_{D_p \cap \Omega_m} \Phi_q(x - y) \varphi(y)\d{y} 
\end{equation}
We recall that $\Delta \Phi_q(x -  \cdot) + k^2 \Phi_q(x - \cdot) = -\delta_x$ in
$\Omega_m$ and $\Delta \Phi_q(x - \cdot) + k^2 \Phi_q(x -\cdot) = 0$ in
$\cdomp$ (by quasi-periodicity of $\Phi_q(x-\cdot)$). Hence, from \eqref{eq:2DpOm} we obtain that for $m\in \Zd_M$,
\begin{equation}\label{2fintas1}
\Delta u(x) + k^2 u(x) =  - \varphi(x) \mbox{ in } D_p \cap \Omega_m. 
\end{equation}
Let us set for  $m\in \Zd_M$
$$ \varphi_m(x) := e^{\i \alpha_q \cdot mL} \varphi(x - mL) \mbox{  for  } x \in
\cdomp + mL.$$
Then we have, using the  $\alpha_q-$quasi-periodicity of $\Phi_q$ that  for $ x \in \cdomp+ mL$
\[
		u(x): = \frac{1}{\normM{M}}\int_{\cdomp+ mL} \Phi_q(x - y) \varphi_m(y)\d{y} +\frac{1}{\normM{M}}
                \int_{D_p} \Phi_q(x - y) \varphi(y)\d{y} 
\] where $\Delta \Phi_q(x - \cdot ) + k^2 \Phi_q(x - \cdot) = - 0$ in
$D_p$ and  $\Delta \Phi_q(x - \cdot) + k^2 \Phi_q(x - \cdot) = -\delta_x$ in
$\cdomp + mL$. We then get 
\begin{equation}\label{2fintas2}
	\Delta u(x) + k^2 u(x) = -  \varphi_m  \mbox{ in } \cdomp + mL.
\end{equation}
Now define the function $\widetilde \varphi $ by 
$$
\widetilde \varphi = \varphi \mbox{ in } D_p \mbox{ and } \widetilde \varphi =
\varphi_m \mbox{ in }  \cdomp + mL  \quad \mbox{
   with } \quad m\in \Zd_M.$$
 Clearly
 $$\Delta \widetilde \varphi + k^2 \widetilde\varphi = 0 \mbox{ in } \Dtot_p$$
Since  $u \in H^2_0(\Dtot_p)$ we then have 
\[
	\int_{D_p} (\Delta u + k^2 u) \ol{\widetilde \varphi}  =0.
\] This implies according to \eqref{2fintas1} and \eqref{2fintas2} that
\[
		\int_{D_p}|\varphi|^2 \d{x} +  \normM{M} \int_{\cdomp}|\varphi|^2 = 0,  
\]
which implies $\varphi = 0$ \; in \; $D$. This proves the injectivity of $\Odual{\opeh^{\pm}}$ on $\seth^q_{\inc}(D)$ and hence proves the Lemma.
\end{proof}

\noindent
The following theorem concerning the properties $\opeg^{\pm}_q$ requires the solvability of a new interior transmission problem (to be formulated later), which  up to this study was an open problem except for the case when $\omega\cap D_p=\emptyset$ investigated in \cite{tpnguyen}. 

\begin{asp} \label{as-nitp} The refractive index $n$ and $k>0$   are such that  the new interior transmission problem defined in Definition \ref{nitp} has a unique solution.
\end{asp}

\noindent
Section \ref{snitp} is dedicated to derive sufficient conditions for which Assumption \ref{as-nitp} holds true. 
\begin{theorem}
\label{Lem:injectiveG_q} Suppose that Assumptions \ref{Ass:nk}, \ref{HypoLSM} and \ref{as-nitp} hold. Then the operator $\opeg^{\pm}_q : \seth^q_{\inc}(D) \to \ell^2(\Zd)$ is injective with dense range. 
\end{theorem}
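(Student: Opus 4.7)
The plan is to mirror the proof of Theorem \ref{TheoG}, replacing the standard interior transmission problem \eqref{oitp} by the new interior transmission problem of Definition \ref{nitp}, whose solvability is granted by Assumption \ref{as-nitp}. Injectivity of $\opeg^{\pm}_q$ will be reduced to a homogeneous instance of that problem, and density of the range will follow from a duality argument that produces an adjoint instance of the same problem.

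For injectivity, take $f \in \seth^q_{\inc}(D)$ with $\opeg^{\pm}_q f = 0$ and let $w \in \spahsm{1}$ be the associated solution of \eqref{eq:w}, decomposed as $w = w_q + \widetilde w$ via \eqref{for:decompw}. The equivalent definition \eqref{defGqequiv} then gives $\raycoefpm{w_q}{q + M\ell} = 0$ for all $\ell \in \Zd$. Since $w_q$ is $\alpha_q$-quasi-periodic of period $L$, its Rayleigh expansion only involves indices of the form $q + M\ell$, so all its Rayleigh coefficients vanish; hence $w_q \equiv 0$ on $\{|x_d| > h\}$. Because $n = n_p = 1$ and $f = 0$ outside $D$, the function $w_q$ is a Helmholtz solution on the connected open set $\R^d \setminus D$, and unique continuation yields $w_q \equiv 0$ there. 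In particular $w_q$ has vanishing Cauchy data on $\partial \Dtot$, so $w_q \in H^2_0(\Dtot)$. Combining \eqref{eq:w} and \eqref{eq:wq} yields $\Delta \widetilde w + k^2 n_p \widetilde w = 0$ in $\Dtot$; in $\Dpout$, where $n = n_p$, equation \eqref{eq:wq} reduces to $\Delta w_q + k^2 n_p w_q = k^2(1 - n_p) f$, and in $\Dint$ it retains the coupling term $k^2(n_p - n)\widetilde w$. Together with $\Delta f + k^2 f = 0$ in $\Dtot$ and the $\alpha_q$-quasi-periodicity of $f|_{D_p}$ inherited from $f \in \seth^q_{\inc}(D)$, the triple $(f, w_q, \widetilde w)$ is a solution of the homogeneous new interior transmission problem (Definition \ref{nitp}), and Assumption \ref{as-nitp} forces $f = 0$.

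For dense range, I would prove that $(\opeg^{\pm}_q)^*: \ell^2(\Zd) \to \seth^q_{\inc}(D)$ is injective. From the factorization $\opeg^{\pm}_q = (\opeh^{\pm}_q)^* \opet|_{\seth^q_{\inc}(D)}$, an element $a \in \Ker((\opeg^{\pm}_q)^*)$ is characterized by $\opet^* \opeh^{\pm}_q a$ being $L^2(D)$-orthogonal to $\seth^q_{\inc}(D)$. Unpacking the definition \eqref{defT} of $\opet$ in this orthogonality relation, the function $\opeh^{\pm}_q a$ together with the solution of the adjoint scattering problem driven by $a$ form a homogeneous solution of the adjoint new interior transmission problem (the analogue of Definition \ref{nitp} with $n, n_p$ replaced by $\bar n, \bar n_p$); Assumption \ref{as-nitp} applied to the conjugated coefficients then forces $a = 0$.

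The main technical obstacle is the precise identification of $(f, w_q, \widetilde w)$ with a homogeneous solution of Definition \ref{nitp}: one has to keep track of the distinct interior equations on $\Dint$ versus $\Dpout$, the vanishing Cauchy data on $\partial \Dtot$, the $\alpha_q$-quasi-periodicity of $f|_{D_p}$, and the constraint that $\widetilde w$ carries no $\alpha_q$-Bloch component (which is what makes $\widetilde w$ enter through the background operator $\Delta + k^2 n_p$ rather than $\Delta + k^2 n$). Once that identification is in place, both halves of the theorem are immediate consequences of the well-posedness result developed in Section \ref{snitp}.
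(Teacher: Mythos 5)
Your overall skeleton matches the paper's: reduce injectivity to a homogeneous instance of the new interior transmission problem and invoke Assumption \ref{as-nitp}. But the step you defer as ``the main technical obstacle'' is not a technicality --- it is the substance of the paper's proof, and without it the reduction does not go through. Definition \ref{nitp} is posed on $\Dint$ alone and its coupling term is $k^2(n_p-n)\opes{k}(f)$ for the \emph{specific} nonlocal operator $\opes{k}$ of \eqref{def:2Stilde}; your triple $(f,w_q,\widetilde w)$ only becomes a solution of that problem once you have proved $\widetilde w = w - w_q = \opes{k}(f)$ in $\Dint$. The paper establishes this by: (i) first using Assumption \ref{HypoLSM} (the \emph{standard} ITP) on $\Dpout$, where $n=n_p$, to conclude $w_q=f=0$ there --- a step you never carry out, although it is needed precisely because the new ITP does not see $\Dpout$; (ii) extending $f|_{\Dint}$ $\alpha_q$-quasi-periodically to $\fc$ on $\Dint_p$ and writing $w$ via the representation formula \eqref{eq:2forw1.1} with the periodic fundamental solution; (iii) exploiting $w_q\in H^2_0(\Dint)$ to show the integral over $\Dint$ is orthogonal to solutions of $\Delta\theta+k^2n_p\theta=0$, leaving only the contribution from $\Dint_p\setminus\Dint$, which by quasi-periodicity is exactly $\opes{k}(f)$; and (iv) invoking the forward well-posedness (Assumption \ref{Ass:nk}) to identify $w-w_q$ with this potential on $\Dint$. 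None of this is in your proposal, so the invocation of Assumption \ref{as-nitp} is unsupported. A smaller slip: $w_q$ does not satisfy the free Helmholtz equation on all of $\R^d\setminus D$, because on the quasi-periodic copies of $\omega\setminus D_p$ in the cells $\Omega_m$, $m\neq 0$, the quasi-periodicity of $w_q$ transports the defect equation from $\Omega_0$; the correct domain for the unique continuation argument is $\Omega_M\setminus\Dtot_p$, which still yields $w_q\in H^2_0(\Dtot_p)$.

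On the dense-range half, the paper's displayed proof in fact only establishes injectivity (density is handled as in the cited references via the range characterization of Theorem \ref{Lem:injectiveG_q2}). Your duality sketch is a reasonable outline, but the claim that the resulting adjoint problem is covered by ``Assumption \ref{as-nitp} applied to the conjugated coefficients'' is an additional assertion: that assumption is stated for $(n,n_p)$, not for $(\bar n,\bar n_p)$, and the equivalence of the two uniqueness statements would itself need an argument.
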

\begin{proof}
To prove the injectivity of $G_q$ we assume that $f \in \seth^q_{\inc}(D)$ such that $\opeg_q(f) = 0$. Let $w$ be solution of \eqref{eq:w} with data $f$. 
Since $f|_{D_p}$ is $\alpha_q-$quasi-periodic and $n_p - n$ is compactly supported of period $\Omega_0$ we then have from \eqref{defGqequiv} that 
\begin{equation*} 
\opeg^{\pm}_q (f) := \opei^{*}_q\{\raycoefpm{w_q}{\ell}\}_{\ell \in\Zd},
\end{equation*} where $\{\raycoefpm{w_q}{\ell}\}_{\ell \in\Zd}$ is the Rayleigh sequence of $w_q$ and $w_q$ is solution to
\begin{equation} \label{eq:2wq1}
	\Delta w_q + k^2n w_q  = k^2(n_p - n) (w-w_q) + k^2(1-n)f \qquad \text{in} \quad \Omega_0 
\end{equation} 
In particular we have that $\Delta w_q + k^2 w_q = 0$ in $\Omega_M \setminus \Dtot_p$ (we recall notations in (\ref{nota})).
Using a similar unique continuation argument as at the beginning of the
proof of Lemma \ref{lemHergq} we deduce that
\[
	w_q = 0 \quad \mbox{in} \quad \Omega_M \setminus \Dtot_p.
\] Actually, if $\domp \subset D_p$ then $\Dtot_p \equiv D_p$ thus  $f|_{\Dtot_p}$ is $\alpha_q-$quasi-periodic. However, if $\domp \not\subset D_p$, $f|_{\Dtot_p}$ is not $\alpha_q-$quasi-periodic. To restore  $\alpha_q-$quasi-periodicity, we introduce an other function $\fc$ by  
\begin{equation}
\label{def:ftilde}
	\fc: =  \left\{\begin{array}{ll}
		f &\quad \text{in} \quad \Omega_M \setminus \Dint_p  \\ [1.5ex]
		e^{i \alpha_q mL} f|_{\Dint} &\quad \text{in} \quad \Dpint + mL, \quad \forall \; m\in \Z_M.
\end{array}	
\right.	
\end{equation} i.e we keep  $f$  the same outside $\Dint_p$ and extend the values of $f$ in $\Dint$ by $\alpha_q$-quasi-periodicity to $\Dint_p$. Since $f|_{D_p}$ is $\alpha_q$-quasi-periodic then  the definition of $\fc$ implies that $\fc = f$ also  in $\Dint_p \cap D_p$. Therefore $\fc = f$ in $D$ and $\fc = f$ in $\Omega_M \setminus \Dtot_p$ (in other words $\fc \neq f$ in $\Dtot_p \setminus \ol{D}$) and $\fc|_{\Dtot_p}$ is $\alpha_q$-quasi-periodic with period $L$.  We remark that in the special case  when $\domp \subset D_p$, $\fc \equiv f$ and $\Dtot_p \equiv D_p$.  We now write \eqref{eq:2wq1}  in terms of $\fc$ (since $\fc \equiv f$ in $D$) 
\begin{equation} \label{eq:2w1.1}
	\Delta w_q + k^2n w_q  = k^2(n_p - n) (w-w_q) + k^2(1-n) \fc \qquad \text{in} \quad \Omega_0, 
\end{equation} hence it is enough to prove that $\fc = 0$ in $\Dtot_p$. To this end, using the fact that $\fc = f$ in $\Dtot$ (recall that $\Dtot = \mfied{\Dint \cup \Dpout = } \Dtot_p \cap \Omega_0 = D \cap \Omega_0)$  then $\fc$ verifies 
\[
	\Delta \fc + k^2 \fc = 0 \qquad \text{in} \quad \Dtot_p
\] and by the $\alpha_q$-quasi-periodicity of $w_q$ and $\fc$, it is sufficient to prove that $\fc = 0$ in the cell $\Omega_0$, i.e,  proving that the following problem 
\begin{equation}
\label{eq:itpGq}
\left \{
	\begin{array}{ll}
		\Delta w_q + k^2n w_q  =  k^2(n_p - n) (w-w_q) + k^2(1-n)\fc & \text{in} \quad  \Dtot,\\[1.5ex]
		\Delta \fc + k^2 \fc  = 0 & \text{in} \quad  \Dtot,
	\end{array}
\right.
\end{equation}
\mfied{has trivial solution $(w_q, \fc ) \in H_0^2(\Dtot) \times L^2(\Dtot)$}.
 Since in $\Dpout$, $n = n_p$ and $\Dpout \cap \Dint = \emptyset$, then  \mfied{$(w_q, f) \in w_q \in H_0^2(\Dpout)\times L^2(\Dpout)$} verifies
\begin{equation}
\label{eq:2itpwq1}
\left \{
	\begin{array}{lc}
		\Delta w_q + k^2n w_q  =  k^2(1-n)f & \text{in} \quad   \Dpout,\\[1.5ex]
		\Delta f + k^2 f  = 0 & \text{in} \quad   \Dpout.
		\end{array}
\right.
\end{equation}
 Assumption \ref{HypoLSM} implies that equation \eqref{eq:2itpwq1} has a trivial solution,  and therefore
 \[
	w_q  = f = 0 \quad \text{in} \quad \Dpout.
\] 
It remains  to prove that $f = 0$ in $ \Lambda$. We can now rewrite (\ref{eq:itpGq}) as a problem only in
 $\Dint$
\begin{equation}
\label{eqnew0}
\left|
	\begin{array}{lc}
		w_q \in H_0^2(\Dint) \mfied{ \mbox{ and } \fc \in L^2(\Dint)} \\[1.5ex]
		\left\{
		\begin{array}{ll}
		\Delta w_q + k^2n w_q  =  k^2(n_p - n) (w-w_q) + k^2(1-n)\fc & \text{in} \quad  \Dint,\\[1.5ex]
		\Delta \fc + k^2 \fc  = 0 & \text{in} \quad  \Dint.
		\end{array}
		\right.
	\end{array} 
\right.
\end{equation}
To deal with this problem, we first \mfied{express the quantity } $w - w_q $ in terms of $\fc$ using the property that $\fc = 0$ outside $\Dint$. To this end, recalling that  $\fc = f$ in $D$, we can  write \eqref{eq:w}  in terms of $\fc$ as
\begin{equation} \label{eq:2w1.1}
	\Delta w + k^2 n_p w = k^2 (n_p - n) w + k^2 (1 - n) \fc
\end{equation}
and then have
\begin{equation}
\label{eq:2forw1.1}
	w(x) = -\int_{D} \Big(k^2(n_p - n)w + k^2(1 - n) \fc \Big)(y) \fundnp{x-y} \d{y}.
\end{equation}
Using the facts that $\fc = 0$ and $n = n_p$ in  $\Dpoutp$, i.e. $n_p = n = 1$ in $\Dint_p \setminus D$ we have 
\begin{eqnarray}
\label{eq:reforw}
	w(x) &=& -  \int_{D \setminus \Dpoutp} \Big(k^2(n_p - n)w + k^2(1 - n) \fc \Big)(y) \fundnp{x-y} \d{y}. \nonumber \\
	&=& - \int_{\Dint_p} \Big(k^2(n_p - n)w + k^2(1 - n) \fc \Big)(y) \fundnp{x-y} \d{y}. \nonumber  \\
	&=& -k^2 \int_{\Dint} \Big((n_p - n)w +(1 - n) \fc \Big)(y) \fundnp{x-y} \d{y} \nonumber \\ 
	&-& k^2 \int_{\Dint_p \setminus \Dint} (1 - n_p) \fc (y) \fundnp{x-y} \d{y}.
\end{eqnarray}
Moreover, \mfied{since $w_q \in H^2_0(\Dint)$}, then
for all $\theta \in H^2(\Dint)$ satisfying $\Delta \theta + k^2 n_p \theta = 0$ we have
\begin{equation}
\int_{\Dint} \Big( \Delta w_q + k^2 n_p w_q\Big)	\mfied{\theta} \d x = 0,
\end{equation}
implying \mfied{from equation \eqref{eqnew0},} that
\begin{equation}
	\int_{\Dint} \Big( \mfied{(n_p - n)} w + (1- n) \fc \Big)	\mfied{\theta}\d x = 0.
\end{equation}
Remark that for $x \notin \Dint$, \mfied{$\Delta \fundnp{x-y} + k^2 n_p \fundnp{x-y} = 0$} for all $ y\in\Dint$. Combined with $\fc = 0$ outside $\Dint_p$, we then conclude from \eqref{eq:reforw} that 
\begin{equation}
	w(x)  = - \int_{\Dint_p \setminus \Dint} k^2(1 - n_p) \fc (y) \fundnp{x-y} \d{y} \quad \mbox{ for } x \notin \Dint. 
\end{equation}
Next we define 
\begin{equation}
\label{def:twpr}
	\twpr(x) =  - \int_{\Dint_p \setminus \Dint} k^2(1 - n_p) \fc(y) \fundnp{x-y} \d{y}, \quad \, x \in \Omega_M.
\end{equation}
then $\wpr - \twpr \in H^2_0(\Dint)$ and $ \Delta \twpr + k^2 n_p \twpr  = 0$ in $\Dint$. We now keep $w$ and $w_q$ as above and let  $\widehat{w}: = w_q + \twpr$ in $\Dint$ which obviously verifies 
\[
	\Delta \widehat{w} + k^2n \widehat{w}  =   k^2(1-n)\fc \quad \text{in} \; \Dint.
\] By Assumption \ref{Ass:nk} we have $w = \widehat{w}$ in $\Dint$. This proves that $\twpr = w - w_q$ in $\Dint$. 
 Moreover, using the $\alpha_q$ quasi-periodicity of $\fc$ in $\Dint_p$ and the periodicity of $n_p$ we have that 
\begin{align*}
	\int_{\Dint + mL}k^2(1 - n_p) \fc(y) \fundnp{x-y} \d{y} &=  \int_{\Dint} k^2(1 - n_p) \fc(y + mL) \fundnp{x -(y + mL)} \d{y}\\
	&= e^{\i \alpha_q m L} \int_{\Dint} k^2(1 - n_p) \fc(y) \fundnp{x -mL - y} \d{y}.
\end{align*}
Letting  for $y \in \Dint$
\begin{equation}
\label{formul:tphi}
	\tfundnp{x,y}: = \sum_{0 \neq m \in \Z_M} e^{\i \alpha_q m L} \fundnp{x-mL-y},
\end{equation}  we see that $\twpr$ defined by \eqref{def:twpr} is equivalent to 
\begin{equation}
	\twpr(x) =  - \int_{ \Dint} k^2(1 - n_p) \fc(y) \tfundnp{x,y} \d{y}.
\end{equation}
This leads us to define the operator $\opes{k}: L^2(\Dint) \to L^2(\Dint)$
\begin{equation}
\label{def:Skequiv}
\opes{k}:f \mapsto   - \int_{ \Dint} k^2(1 - n_p) f(y) \tfundnp{x,y} \d{y}.
\end{equation}
From the smoothing property of the volume potential we have that  the operator $\opes{k}$ is compact and $\opes{k}(f)$ satisfies
\[
 	\Delta \opes{k}(f) + k^2 n_p \opes{k}(f)  = 0  \quad \mbox{ in } \R^d \setminus  \Dint. 
\] 
From the above, we write $w-w_q$ in terms  $\opes{k}(f)$ then reformulate (\ref{eqnew0}) for $w_0:=w_q$ and $f$, we finally obtain
\begin{equation}
\label{eq:injitp}
	\left|
	\begin{array}{l}
	\wpr_0 \in  H^2_0(\Dint), \; f\in L^2(\Dint), \\[1.5ex]
\left\{
		\begin{array}{ll}
			
			\Delta \wpr_0 + k^2 n \wpr_0 = k^2(1 - n) f + k^2(n_p - n) \opes{k}(f) &\qquad \text{in} \quad \Dint,\\[1.5ex]
			\Delta f + k^2 f = 0 & \qquad \text{in} \quad \Dint.\\[1.5ex]
		\end{array} 
	\right.
		\end{array} 
	\right.
\end{equation} 
This problem is the homogeneous version of the new interior transmission problem defined  in Definition \ref{nitp} below, where $u:=w_0+f$.  Assumption \ref{as-nitp} now implies that $\wpr_0 = f = 0$ in $\Dint$, which proves the injectivity of $G_q$.
\end{proof}
\begin{definition}[The new interior transmission problem]\label{nitp}{\em Find $(u, f) \in
L^2(\Dint) \times L^2(\Dint)$ such that $
u -f \in H^2(\Dint) $
 and 
\begin{equation} \label{eqnew}
\left\{ \begin{array}{lll}
\Delta {\eg} + k^2 n {\eg} = k^2(n_p - n) \opes{k}(f) \quad& \mbox{ in } \; \Dint, 
\\[6pt]
\Delta f + k^2 f = 0  \quad &\mbox{ in } \; \Dint,
\\[6pt]
 {\eg} - f= {\ftd} \quad &\mbox{ on } \; \partial \Dint,
\\[6pt]
\partial ({\eg} - f)/\partial \nu = {\ftn} \quad &\mbox{ on } \; \partial \Dint, 
\end{array}\right.
\end{equation}
for given $({\ftd},\ftn) \in H^{3/2}(\partial \Dint) \times H^{1/2}(\partial \Dint) $
where
\begin{align}
\label{def:2Stilde}
\opes{k}: L^2( \Dint)& \to L^2(\Dint):  \nonumber \\
f &\mapsto  - \int_{\Dint} k^2(1 - n_p) f(y) \Big(\sum_{0 \neq m \in \Z_M} e^{\i \alpha_q m L} \fundnp{x-mL-y}\Big) \d{y},
\end{align} $\fundnp{\cdot}$ is the $ML$-periodic fundamental solution given by (\ref{phi}), and  $\nu$ denotes the unit normal on  $\partial \Dint$ outward to $\Dint$. 
}
\end{definition}

\noindent
\begin{definition}\label{ntp-def}
{\em Values of $k\in {\mathbb C}$ for which the homogenous  problem with $\varphi=\psi=0$,  are called {\it new transmission eigenvalues}.
}
\end{definition}

\noindent
For sake of completeness, we end this section with proving a range statement for the operator $\opeg^{\pm}_q$ which is used in the imaging algorithm. To this end, we recall  $\Phi_q(\cdot;z)$ defined by (\ref{phiq}), $I_q$ given by (\ref{iop}) with adjoint $I^*_q$.
\begin{theorem}
\label{Lem:injectiveG_q2} Suppose that Assumptions \ref{Ass:nk}, \ref{HypoLSM} and \ref{as-nitp} hold. Then, $\opei^{*}_q \sepmgreq{z} \in \Rang(\opeg^{\pm}_q)$ if and only if $z \in \Dtot_p$.
\end{theorem}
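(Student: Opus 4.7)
Suppose $\opei^{*}_q \sepmgreq{z} = \opeg^{\pm}_q(f)$ for some $f \in \seth^q_{\inc}(D)$, and let $w_q$ denote the $\alpha_q$-quasi-periodic Floquet--Bloch component of the solution $w$ to \eqref{eq:w} associated with $f$. By the equivalent definition \eqref{defGqequiv}, this forces $\raycoefpm{w_q}{q+M\ell} = \coefpmgreq{z}{q+M\ell}$ for every $\ell \in \Zd$. Since both $w_q$ and $\greq{z}$ are $\alpha_q$-quasi-periodic with period $L$ and admit Rayleigh modes only at the indices $q+M\ell$, this identity yields $w_q \equiv \greq{z}$ in the region $|x_d|>h$. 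As $n=n_p=1$ and $f=0$ outside $\Dtot_p$, the function $w_q$ satisfies $\Delta w_q + k^2 w_q = 0$ in $\Omega_M \setminus \Dtot_p$, while $\greq{z}$ satisfies the same Helmholtz equation on $\Omega_M \setminus (\Dtot_p \cup \{z+mL : m \in \Zd_M\})$. A unique continuation argument, exploiting the assumed connectedness of $\R^d \setminus D$, propagates the identity $w_q = \greq{z}$ to the whole unbounded component of $\Omega_M \setminus \Dtot_p$. If $z \notin \Dtot_p$, the fundamental singularity of $\greq{z}$ at some representative $z+mL$ lying in that component would contradict the $H^1$ regularity of $w_q$ there, so necessarily $z \in \Dtot_p$.

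\textbf{Sufficiency.} Conversely, assume $z \in \Dtot_p$; by $\alpha_q$-quasi-periodicity of $\greq{z}$ we may pick the representative in $\Omega_0$ and assume $z \in \Dtot = \Dint \cup \Dpout$. The plan is to construct $f \in \seth^q_{\inc}(D)$ whose associated $w_q$ coincides with $\greq{z}$ in $\Omega_M \setminus \Dtot_p$; the matching of Rayleigh coefficients at indices $q+M\ell$ then delivers the range identity $\opeg^{\pm}_q(f) = \opei^{*}_q \sepmgreq{z}$. Because $\Dint$ and $\Dpout$ are disjoint, the construction decouples. On $\Dpout$, where $n = n_p$, the substitution $u := w_q + f$, $v := f$ recasts the matching requirement as the classical interior transmission problem \eqref{oitp} on $\Dpout$ with Cauchy data $(\greq{z}|_{\partial \Dpout}, \partial_\nu \greq{z}|_{\partial \Dpout})$, well-posed by Assumption \ref{HypoLSM}. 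On $\Dint$, the same substitution, combined with the volume-potential identity $w - w_q = \opes{k}(f)$ derived in \eqref{eq:reforw}--\eqref{def:Skequiv}, recasts the matching as the new interior transmission problem \eqref{eqnew} with Cauchy data $(\greq{z}|_{\partial \Dint}, \partial_\nu \greq{z}|_{\partial \Dint})$, well-posed by Assumption \ref{as-nitp}. Gluing the two pieces of $f$ on $\Dtot$ and extending $\alpha_q$-quasi-periodically to $D_p \cap \Omega_m$ for $m \in \Zd_M$ produces an $f \in \seth^q_{\inc}(D)$; the uniqueness granted by Assumption \ref{Ass:nk} ensures that the $w_q$ arising from the direct problem for this $f$ coincides with the one built locally, and therefore agrees with $\greq{z}$ outside $\Dtot_p$.

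The principal obstacle is the sufficiency step. The crucial observation is the representation $w - w_q = \opes{k}(f)$, which is exactly what reduces the quasi-periodic range condition to the new interior transmission problem rather than to a classical one; without it, no clean formulation of the transmission problem on $\Dint$ is available. One must also verify that the Cauchy data $(\greq{z}, \partial_\nu \greq{z})$ lies in $H^{3/2} \times H^{1/2}$ of the relevant boundary, which holds as long as $z$ lies strictly inside $\Dint$ or $\Dpout$; boundary cases would require a standard limiting argument. Finally, the $\alpha_q$-quasi-periodic extension of $f$ to all of $D_p$ preserves the Helmholtz equation $\Delta f + k^2 f = 0$ thanks to the $L$-periodicity of $n_p$, so membership in $\seth^q_{\inc}(D)$ is automatic.
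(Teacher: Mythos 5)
Your necessity argument coincides with the paper's (Case 2 of its proof): equality of the Rayleigh coefficients at the indices $q+M\ell$, unique continuation of $w_q-\greq{z}$ through $\Omega_M\setminus\Dtot_p$, and the incompatibility of the singularity of $\greq{z}$ at a copy of $z$ with the local regularity of $w_q$. That half is fine.

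The sufficiency half contains a genuine gap. You propose, for \emph{every} $z\in\Dtot$, to solve \eqref{oitp} on $\Dpout$ and \eqref{eqnew} on $\Dint$ simultaneously with Cauchy data $\greq{z}$ on both boundaries, and you justify the reduction on $\Dint$ by invoking the identity $w-w_q=\opes{k}(f)$ ``derived in \eqref{eq:reforw}--\eqref{def:Skequiv}''. But that identity is established in the paper only in the homogeneous setting of the injectivity proof, where one already knows that $w_q\in H^2_0(\Dint)$ (so that the orthogonality relation $\int_{\Dint}\big((n_p-n)w+(1-n)\fc\big)\theta\,\d{x}=0$ kills the contribution of the potential over $\Dint$ outside $\Dint$) and that $\fc=0$ on $\Dpoutp$ (so that the $\Dpoutp$ part of the integral drops out). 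Neither hypothesis holds in your construction, where $f\neq 0$ on $\Dpoutp$ and $w_q$ does not vanish on $\partial\Dint$; so the ``matching requirement on $\Dint$'' has not been shown to be equivalent to \eqref{eqnew}. Moreover, for $z\in\Dpoutp$ your recipe actually changes the answer: the solution of \eqref{eqnew} on $\Dint$ with data $\greq{z}$, $z\notin\Dint$, is \emph{not} the trivial pair $(0,-\greq{z})$, because $\opes{k}(-\greq{z})$ need not vanish on $\omega$ when $\Dpint\neq\emptyset$; your $f|_{\Dint_p}$ therefore differs from $-\greq{z}$, and nothing in your argument verifies that the resulting total field still produces the prescribed Rayleigh coefficients.

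The paper avoids both difficulties by splitting the cases and never solving the two transmission problems at once. For $z\in\Dpoutp$ it uses only Assumption \ref{HypoLSM}: it takes $f=v$ on $\Dpoutp$ and $f=-\greq{z}$ on all of $\Omega_M\setminus\Dpoutp$ (in particular on $\Dint_p$), so that the total field equals $u-v$ on $\Dpoutp$ and $\greq{z}$ elsewhere and is itself $\alpha_q$-quasi-periodic --- no new interior transmission problem intervenes. For $z\in\Dint$ it uses only Assumption \ref{as-nitp}: it sets $w_q=u-v$ on $\Dint_p$ and $f=-\greq{z}$ off $\Dint_p$, then \emph{defines} the total field by adding the explicit correction \eqref{def:twpr} and verifies directly that it solves \eqref{eq:w}, rather than deriving $w-w_q=\opes{k}(f)$ a priori; the case $z\in\Dint+mL$, $m\neq 0$, is reduced to $z\in\Dint$ via $\sepmgreq{z}=e^{\i mL\cdot\alpha_q}\sepmgreq{z-mL}$. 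To repair your proof you would either adopt this case splitting, or prove the representation $w-w_q=\opes{k}(f)$ in the inhomogeneous setting (showing that the quasi-periodic source carried by $f$ on $\Dpoutp$ contributes only to $w_q$), which you have not done.
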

\begin{proof}
 We consider two cases: \\
\noindent {\it Case 1: $z \in \Dtot_p = \Dint_p \cup \Dpoutp$}. 

\medskip

{\it(i) If $z \in \Dpoutp$:} Let  $(u, v) \in L^2(D) \times L^2(D)$ be the unique solution of (\ref{oitp}) with $\varphi: =\Phi_q(\cdot-z))|_{\partial D}$ and $\psi:=\partial   \Phi_q(\cdot-z))/\partial \nu|_{\partial D}$  and define 
$$w = \left\{ \begin{array}{cl} u - v
 \quad &\mbox{in} \quad \Dpoutp \\
 \Phi_q \quad &\mbox{in} \quad \Omega_M \setminus \Dpoutp.
\end{array} \right.$$ Then $w \in H^2_{\loc}(\Omega_M)$ and verifies equation \eqref{eq:w} with  $f = v$ in $\Dpoutp$ and $f = -\Phi_q$ in $\Omega_M \setminus \Dpoutp$. Therefore $\opeg^{\pm}(f) = \sepmgreq{z}$. Furthermore $u|_{\Dpoutp}$ and $v|_{\Dpoutp}$ are $\alpha_q$-quasi-periodic (due to the periodicity of domain $\Dpoutp$ and $\alpha_q$-quasi-periodicity of the data), hence $f$ is also $\alpha_q$ quasi-periodic. This implies $f \in \seth^q_{\inc}(D)$ and $\opeg_q^{\pm}(f) = \opei^{*}_q \opeg_q^{\pm}(f) =  \opei^{*}_q\sepmgreq{z}$. 

\medskip

{\it (ii) If $z \in \Dint_p$}: We {\it first} consider that $z \in \Dint = \Dint_p \cap \Omega_0$. Let  $(u, v) \in L^2(\Dint_p) \times L^2(\Dint_p)$ be the  $\alpha_q$-quasi-periodic extension of  $(u_{\Dint}, v_{\Dint})$, where  $(u_{\Dint}:=u, v_{\Dint}:=f)$ is the solution of  the new interior transmission problem in Definition \ref{nitp} with  $\varphi: =\Phi_q(\cdot;z))|_{\partial \Dint}$ and $\psi=:\partial   \Phi_q(\cdot;z))/\partial \nu|_{\partial \Dint}$. We then define
\[
	w_q = \left\{ \begin{array}{cl} u - v 
 \quad &\mbox{in} \quad \Dint_p \\
 \Phi_q \quad &\mbox{in} \quad \Omega_M \setminus\Dint_p.
\end{array} \right.
\]
Let $f := v$ in $\Dint_p$ and $f := -\Phi_q$ in $\Omega_M \setminus \Dint_p$ then $f \in \seth^q_{\inc}(D)$ and $w_q \in H^2_{\loc}(\Omega_M)$ satisfies the scattering
problem \eqref{eq:wq} with data $f$. Furthermore, $w$ defined such as $w:=w_q + \opes{k}(f)$ in $\Dint$ and $w: = w_q$ in $D \setminus \Dint$  is solution to \eqref{eq:w} with data $f$. Therefore $\opeg_q^{\pm}(f) =  \opei^{*}_q\sepmgreq{z}$.\\
 We {\it next} consider   $z \in \Dint + mL$ with $0 \neq m \in \Zd_M$, and  recall that $\sepmgreq{z}= e^{\i mL \cdot \alpha_q} \sepmgreq{z - mL}
$. If we take  $f\in \seth^q_{\inc}(D)$ such that $G_q^\pm(f) =
\opei^{*}_q \sepmgreq{z - mL}$, which is possible by the previous step since
$z - mL \in \Dint$, then 
\[
	 G_q^\pm(e^{\i mL \cdot \alpha_q} f) = \opei^{*}_q(\sepmgreq{z}).
\]

\noindent {\it Case 2: $z \notin \Dtot_p$}. If  $G_q^\pm(v) =
\opei^{*}_q \sepmgreq{z}$, then using the same unique continuation
argument as in  the proof of Lemma \ref{Lem:injectiveG_q}  we obtain $w_q =  \Phi_q $ in $\Omega_M \setminus \{\Dtot_p\}$ where $w_q$ is defined by
        \eqref{for:decompw} with $w$ being  the solution of \eqref{eq:w}  with $f=v$. This gives a contradiction since $w_q$ is locally
        $H^2$ in $\Omega_M \setminus \{\Dtot_p\})$ while $\greq{z}$ is
        not.
\end{proof}

\section{The Analysis of the New Interior Transmission Problem} \label{snitp}
This section is devoted to the study of the solvability of the new interior transmission problem in Definition \ref{nitp}. It  provides sufficient conditions on  $n$ and $k$ for which this problem is well-posed,  i.e. such that Assumption \ref{as-nitp} holds. As described in the previous section the  solvability of the new interior transmission problem is fundamental to ensuring the properties needed for the imaging of the defect $\omega$ with a single  Floquet-Bloch mode. Up to now the only case that could be handled was when $\omega\cap D=\emptyset$ \cite{tpnguyen} (see also  \cite{Thi-Phong3}). Here we provide a general analysis that cover all possible cases. Our approach generalizes  \cite{kirschitp} and \cite{Sylve2012}. 

\noindent
We start with proving  the following technical lemma:
\begin{lemma}
\label{lem:pro1}
Assume that \mfied{ $ n_p > \alpha > 0$ on $\R^d$}. Then there exists $ \theta >0$ and $C > 0$ such that 
	\[
		\| \opes{i\kappa}(f)\|_{L^2(\Dint)} \leq C e^{-\theta \kappa} \| f\|_{L^2(\Dint)}, \quad \kappa> 0
	\] for all $f \in L^2(\Dint)$. 
\end{lemma}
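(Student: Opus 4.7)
The plan is to realize $v := \opes{i\kappa}(f)$ as the unique $ML$-periodic solution of a coercive elliptic problem in $\Omega_M$ and then exploit the strictly positive separation between its source and $\Dint$ via an Agmon-type exponentially weighted energy estimate.

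\textbf{Step 1 (Reformulation as a PDE).} Starting from the equivalent representation \eqref{def:twpr} in terms of the $\alpha_q$-quasi-periodic extension $\fc$ of $f$, and applying $(\Delta + k^2 n_p)$ using the defining property $(\Delta + k^2 n_p)\fundnp{\cdot} = -\delta_0$ of the $ML$-periodic Green's function, I identify $v$ (with $k=i\kappa$) as the unique $H^1_\sper(\Omega_M)$-solution (decaying exponentially as $|x_d|\to\infty$) of
\[
(\Delta - \kappa^2 n_p)\, v \;=\; g \;:=\; \kappa^2 (n_p - 1)\, \fc\, \chi_{\Dint_p\setminus\Dint} \quad \text{in } \Omega_M.
\]
The source $g$ vanishes on $\Dint$ and satisfies $\|g\|_{L^2(\Omega_M)} \le C\kappa^2 \|f\|_{L^2(\Dint)}$ since $\|\fc\|_{L^2(\Dint_p)}^2 = \normM{M}\,\|f\|_{L^2(\Dint)}^2$ by the $\alpha_q$-quasi-periodic construction of $\fc$. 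Crucially, Assumption \ref{HypoLSM} forces $\partial\Dint\cap\partial\Omega_0=\emptyset$, so
\[
d_0 \;:=\; \mathrm{dist}(\Dint,\,\Dint_p\setminus\Dint) \;>\; 0.
\]

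\textbf{Step 2 (Weighted energy identity).} I would construct an $ML$-periodic Lipschitz weight $\rho$ on $\Omega_M$ with $|\nabla\rho|\le 1$ a.e., $\rho\equiv 0$ in a neighborhood of $\Dint_p\setminus\Dint$, $\rho\ge d_0$ on $\Dint$, and $\rho$ bounded on $\Omega_M$. Setting $\tau := \theta\kappa$ for a parameter $\theta>0$ to be fixed later, test the PDE against $e^{2\tau\rho}\bar v$ over $\Omega_M$. The $ML$-periodicity in $\ol x$ and the exponential decay of $v$ as $|x_d|\to\infty$ eliminate all boundary contributions from the integration by parts, so taking real parts yields
\[
\int_{\Omega_M} e^{2\tau\rho}|\nabla v|^2 \;+\; \kappa^2\!\int_{\Omega_M} n_p\, e^{2\tau\rho}|v|^2 \;=\; -2\tau\,\mathrm{Re}\!\int_{\Omega_M} e^{2\tau\rho}\bar v\,\nabla\rho\cdot\nabla v \;-\; \mathrm{Re}\!\int_{\Omega_M} g\,e^{2\tau\rho}\bar v.
\]

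\textbf{Step 3 (Absorption and conclusion).} Young's inequality applied to the cross-term (using $|\nabla\rho|\le 1$) absorbs half of $\|e^{\tau\rho}\nabla v\|^2$ at the cost of $2\tau^2\|e^{\tau\rho}v\|^2$; a weighted Young inequality on the source term with weight $\alpha\kappa^2$, together with $n_p\ge \alpha$, leads to
\[
\tfrac{1}{2}\|e^{\tau\rho}\nabla v\|^2_{L^2(\Omega_M)} \,+\, \kappa^2\!\Bigl(\tfrac{\alpha}{2}-2\theta^2\Bigr)\|e^{\tau\rho}v\|^2_{L^2(\Omega_M)} \;\le\; \frac{1}{2\alpha\kappa^2}\|e^{\tau\rho}g\|^2_{L^2(\Omega_M)}.
\]
Fixing any $\theta\in(0,\sqrt{\alpha}/2)$ makes the coefficient $\alpha/2-2\theta^2$ strictly positive and independent of $\kappa$. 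Since $\rho\equiv 0$ on $\mathrm{supp}(g)$, one has $\|e^{\tau\rho}g\|_{L^2}=\|g\|_{L^2}\le C\kappa^2\|f\|_{L^2(\Dint)}$, while $\rho\ge d_0$ on $\Dint$ gives $\|v\|_{L^2(\Dint)}\le e^{-\tau d_0}\|e^{\tau\rho}v\|_{L^2(\Omega_M)}$. The two $\kappa^{\pm 2}$ factors cancel and the claimed bound follows with $\theta':=\theta d_0$.

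The main technical point lies in this absorption step: the threshold $\theta<\sqrt{\alpha}/2$ is precisely what is required to beat the Agmon cross-term against the coercivity $\kappa^2\alpha$, and it is this uniform-in-$\kappa$ margin that produces a decay rate $\theta'$ independent of $\kappa$. The construction of the $ML$-periodic Lipschitz weight $\rho$ poses no real difficulty once $d_0>0$ is secured, and the existence and exponential decay of the $ML$-periodic Green's function $\fundnp{\cdot}$ at imaginary wavenumber follow from the strict coercivity of $\Delta-\kappa^2 n_p$ on $H^1_\sper(\Omega_M)$.
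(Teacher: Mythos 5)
Your proof is correct, but it follows a genuinely different route from the paper's. The paper estimates $\opes{i\kappa}(f)$ directly as an integral operator: it applies Cauchy--Schwarz and Fubini to reduce the bound to $\int_{\BB}|\fundnp{z}|^2\,\d{z}$ over an annulus $\BB = B(0,d_{max})\setminus B(0,d)$ with $d=d(\Dint,\Dint_p\setminus\Dint)>0$, and then proves exponential smallness of that quantity by writing $\fundnp{\cdot}=\Psi+u^s$, where $\Psi$ is the free-space fundamental solution of $\Delta-\kappa^2 n_p$ (whose exponential decay is imported from Theorem 4.4 of the volume-integral paper) and $u^s$ is controlled by a Green's-formula energy estimate with exponentially small boundary data. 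You instead bypass all pointwise information about the Green's function: you read off the PDE $(\Delta-\kappa^2 n_p)v=g$ with $\mathrm{supp}\,g\subset\Dint_p\setminus\Dint$ at distance $d_0>0$ from $\Dint$, and run an Agmon-type weighted energy estimate, with the threshold $\theta<\sqrt{\alpha}/2$ ensuring the cross-term is absorbed by the coercivity $\kappa^2\alpha$. Your argument is more self-contained (it needs only Lax--Milgram coercivity at imaginary wavenumber, not the decay of $\Psi$) and it sidesteps a small awkwardness in the paper's computation, which treats $\fundnp{x-y}$ as a convolution kernel in the change of variables $z=x-y$ even though $n_p$ is variable. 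The price is the construction of the $ML$-periodic Lipschitz weight $\rho$ and a careful accounting of the lateral and $|x_d|\to\infty$ boundary terms in the integration by parts; both are routine given $d_0>0$ (which, as in the paper, comes from $\partial D\cap\partial\Omega_0=\emptyset$) and the exponential decay of the volume potential at imaginary wavenumber. You should state explicitly that the weighted test function $e^{2\tau\rho}\bar v$ is admissible, i.e.\ that $v\in H^2_{\loc}$ with exponential decay so the identity in Step 2 is justified, but this is a presentational point rather than a gap.
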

\begin{proof}
Denoting $\twpr: = \opes{i\kappa} (f)$ and $\fc$ the extension of  $f$ as  $\alpha_q-$quasi-periodic in $\Dint_p$, we have that 
\begin{equation*}
	\twpr(x) =  \kappa^2 \int_{\Dint_p \setminus \Dint} (1 - n_p ) \fc(y) \fundnp{x-y} \d{y}. 
\end{equation*}
where \mfied{$\fundnp{\cdot}$ denotes here the $ML-\text{periodic}$ fundamental solution associated with $k=i\kappa$. }
\mfied{An application of the Cauchy-Schwarz inequality an the Fubini theorem implies
\begin{align*}
	\| \twpr\|^2_{L^2(\Dint)} &\leq \kappa^4|\Dint_p\setminus \Dint|\int_{\Dint } \int_{\Dint_p \setminus \Dint} \big|( n_p-1 ) \fc(y) \fundnp{x-y}\big|^2 \d{y} \d{x} \\
	& = \kappa^4 |\Dint_p\setminus \Dint|\int_{\Dint_p \setminus \Dint} \big|( n_p-1 ) \fc(y) \big|^2 \int_{\Dint} \big| \fundnp{x-y} \big|^2\d{x} \d{y}.
\end{align*}
}
Next we let
\[
	\Sigma = \{z: = x - y, \quad x \in \Dint, \; y \in \Dint_p \setminus \Dint \}, 
\] 
\[
	d_{max} \in \R: \quad d_{max} > \sup \{|z|, z \in \Sigma \} \quad \text{and} \quad d: = d(\Dint, \Dint_p \setminus \Dint),
\]
and remark that $\forall x\in \Dint$, $\forall y\in \Dint_p \setminus \Dint$, $|x - y| >d$, hence
\begin{equation}
\label{def:bal}
\Sigma \subset \BB: = B(0, d_{max}) \setminus B(0,d)
\end{equation}
where $B(0,d)$ is a ball of radii $d$ and centered at the origin. \mfied{Remark that $d>0$ by Assumption 2}.   \mfied{We then have 
\begin{align*}
	\| \twpr\|^2_{L^2(\Dint)} &\leq \kappa^4|\Dint_p\setminus \Dint| 
	\int_{\Dint_p \setminus \Dint} \big|( n_p-1 ) \fc(y) \big|^2 \d{y}\int_{\BB} \big| \fundnp{z} \big|^2\d{z}.
\end{align*}
}
Since $\fc = f$ in $\Dint$, $\fc$ is quasi-periodic and $n_p$ is periodic in $\Dint_p$, then 
$$
\int_{\Dint_p \setminus \Dint} \big|( n_p-1 ) \fc(y) \big|^2 \d{y} = (|M|-1)\int_{\Dint} \big|( n_p-1 ) \fc(y) \big|^2 \d{y} \leq (|M|-1) \sup_{\Dint}|1 - n_p|\|f \|^2_{L^2(\Dint)}.
$$
 We now prove that, there exists $\theta > 0$ and $C_0 >0$ such that 
\begin{equation}
\label{proof:tphi}
	\int_{\BB} \big| \fundnp{z} \big|^2\d{z} \leq C_0 e^{-\theta \kappa}.
\end{equation}
We recall that $\fundnp{z}$ \mfied{is $ML-$ periodic and satisfies 
 \begin{equation}
	 (\Delta - \kappa^2 n_p) \fundnp{\cdot} = - \delta_{0} \quad \text{in} \; \Omega_M\end{equation}
and  $\fundnp{\cdot} \in L^2(\Omega_M)$ (or equivalently the Rayleigh radiation condition (\ref{C4:RDC}) with $k=i\kappa$).}
\mfied{Consider the  fundamental solution  $\Psi \in L^2(\R^d)$  satisfying
$$(\Delta - \kappa^2 n_p)\Psi = - \delta_{0} \qquad \mbox{in } {\mathbb R}^d.$$
Since $n_p$ is positive definite, then one can prove that $e^{\gamma \kappa |x|} \Psi(x) \in L^2(\R^d)$ for $\gamma >0$ sufficiently small (following the lines of the proof of Theorem 4.4 in \cite{Thi-Phong2}).
The function $u^s: = \fundnp{\cdot } - \Psi$ verifies
$$
\Delta u^s - \kappa^2 n_p u^s = 0  \quad \text{in} \; \Omega_M
$$
and application of the Green formula and using the periodicity conditions of $\fundnp{\cdot }$ imply
$$
\int_{\Omega_M} \Big( |\nabla u^s |^2 + \kappa^2 n_p |u^s|^2\Big) \d{x} \le C \left( \| \Psi \|_{H^{\frac{1}{2}}{(\partial \Omega_M)}}  \| \frac{\partial u^s}{\partial \nu} \|_{H^{-\frac{1}{2}}{(\partial \Omega_M)}}  + \| u^s \|_{H^{\frac{1}{2}}{(\partial \Omega_M)}}  \| \frac{\partial \Psi}{\partial \nu}\|_{H^{-\frac{1}{2}}{(\partial \Omega_M)}} \right)
$$
for some constant $C$ independent from $\kappa$. The decay property of $\Psi$ implies that  
$$
\left( \| \Psi \|_{H^{\frac{1}{2}}{(\partial \Omega_M)}} + \| \frac{\partial \Psi}{\partial \nu}\|_{H^{-\frac{1}{2}}{(\partial \Omega_M)}} \right) \le e^{-\gamma_1 \kappa }
$$
for $\kappa$ sufficiently  large and $\gamma_1 >0$ a constant independent from $\kappa$. Therefore, by classical continuity properties for  traces and normal traces and the fact that $\Delta u^s = \kappa^2 n_p u^s$, one conclude that for $\kappa$ sufficiently  large
$$
\int_{\Omega_M} (|\nabla u^s |^2 +  | u^s |^2) dx \le C e^{-\gamma_1 \kappa }
$$
for some constant $C$ independent from $\kappa$. One then obtain the desired estimate \eqref{proof:tphi} by writing   $\fundnp{\cdot } = \Psi + u^s$ and combining the previous estimate with the exponential decay of $\Psi$.}
\end{proof}

\noindent
We now turn our attention to the analysis of the new interior transmission problem in Definition \ref{nitp}. For the given $(\varphi,\, \psi)\in H^{3/2}(\Dint)\times H^{1/2}(\Dint)$ in (\ref{eqnew}) we construct a lifting  function $u_0\in H^2(\Dint)$ such that $u_0|_{\partial \Dint}=\varphi$ and $\partial u_0/\partial \nu|_{\partial \Dint}=\psi$. Hence $w:=u-u_0-f\in H^2_0(\Dint)$ and we let $F:= (\Delta u_0+k^2nu_0)/k^2 \in L^2(\Dint)$.  To further simplify notation, we set $\lambda: = -k^2\in{\mathbb C}$, $q: =  n-1$, $p: = n - n_p$ and  $v: = -\wpr /k^2 \in H^2_0(\Dint)$. In these notations, the problem we need to solve reads: 
\begin{equation}
\label{eq:2.2newitp}
\left|
\begin{array}{l}
v \in  H^2_0(\Dint) \mbox{ and }f \in L^2(\Dint), \\[1.5ex]
	\left \{
		\begin{array}{ll}
			\Delta v - \lambda (q+1) v = q f + p \opes{\sqrt{-\lambda}}(f) +F & \quad \mbox{in } \Dint,\\[1.5ex]
			\Delta f  - \lambda f = 0 & \quad \mbox{in } \Dint, 
		\end{array}
	\right.
		\end{array}
	\right.
\end{equation}   
for a given $F\in L^2(\Dint)$, 	{where} $\opes{\sqrt{-\lambda}}(f)$ {is given by \eqref{def:2Stilde}}. We remark that \eqref{eq:2.2newitp} is a modification of the problem considered in \cite{kirschitp} (see also \cite[Section 3.1.3]{CCH}). We write this problem in an equivalent variational form. To this end, let us denote $\X: = H^2_0(\Dint) \times L^2(\Dint)$ with the norm $\|(v,f)\|^2_{\X} := \|v\|^2_{H^2(\Dint)} + \|f \|^2_{L^2(\Dint)}$ and the corresponding inner product $\langle \cdot, \cdot\rangle_{\X}$. Then we define the sesquilinear form $a_k: \X \times \X \to \CC$ by
\begin{equation}
	a_\lambda(v,f;\phi,\psi) = \int_{\Dint} (\Delta \ol{\phi}  - \lambda \ol{\phi}) f \d{x} + \int_{\Dint} \big(\Delta v - \lambda (q+1) v \big) \ol{\psi} - \big(q f + p \opes{\sqrt{-\lambda}}{f} \big) \ol{\psi} \d{x},
\end{equation} for $(v,f) \in \X$ and $(\phi,\psi) \in \X$ and the bounded linear operator $A_\lambda: \X \to \X$ defined by means of the Riesz's  representation theorem
\begin{equation}
	a_\lambda(v,f;\phi,\psi) = \langle A_\lambda(v,f),(\phi,\psi) \rangle_{\X}.
\end{equation}
Letting $\ell\in \X$ be the Riesz's representative of the conjugate linear functional
$$\big(\ell, (\phi,\psi)\big)_{\X}=\int_\Dint F \ol{\psi}  \d{x},$$
solving (\ref{eq:2.2newitp}) is equivalent to solving 
\begin{equation}\label{var1}
A_\lambda(v,f)=\ell \qquad  \mbox{or} \qquad a_\lambda(v,f;\phi,\psi)=\int_\Dint F \ol{\psi}  \d{x}, \qquad \forall \, (\phi,\psi)\in\X.
\end{equation}

\noindent
In a similar fashion, we define the sesquilinear form $b_k: \X \times \X \to \CC$ by 
\[
	b_\lambda(v,f;\phi,\psi) = \int_{\Dint} (\Delta \ol{\phi}  - \lambda \ol{\phi}) f \d{x} + \int_{\Dint} \big(\Delta v - \lambda v  - q f \big) \ol{\psi} \d{x}
\] with the associated bounded linear operator $B_\lambda: \X \to \X$ such that 
\[
	b_\lambda(v,f;\phi,\psi) = \langle B_\lambda(v,f),(\phi,\psi) \rangle_{\X},
\] 
\begin{lemma}
\label{lem:compAmA}
	For any $\lambda, \mu \in {\mathbb C}$, the differences $A_{\lambda} - B_{\mu}$  and  $A_{\lambda} - A_{\mu}$ is compact.
\end{lemma}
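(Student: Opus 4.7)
The plan is to compute each difference explicitly as a sesquilinear form on $\X$, observe that the principal (second-order) Laplacian contributions cancel, and identify the remaining lower-order pieces as compact operators on $\X$ through the Riesz representation.

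First I would carry out the subtraction. For $A_\lambda - B_\mu$, the $\int_\Dint \Delta\ol{\phi}\, f \d x$ and $\int_\Dint \Delta v\, \ol{\psi} \d x$ pieces are common to $a_\lambda$ and $b_\mu$, and the $\int_\Dint q f\,\ol{\psi} \d x$ piece also cancels, leaving
\[
(a_\lambda-b_\mu)(v,f;\phi,\psi) \;=\; (\mu-\lambda)\!\int_\Dint \ol{\phi}\, f \d x \;+\; \int_\Dint \bigl(\mu-\lambda(q+1)\bigr)\, v\,\ol{\psi} \d x \;-\; \int_\Dint p\,\opes{\sqrt{-\lambda}}(f)\,\ol{\psi} \d x.
\]
An identical computation for $a_\lambda - a_\mu$ produces the same structure, with $(\mu-\lambda)(q+1)$ in place of $\mu-\lambda(q+1)$ in the second term and the difference $\opes{\sqrt{-\lambda}}-\opes{\sqrt{-\mu}}$ in place of $\opes{\sqrt{-\lambda}}$ in the third. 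In both cases only three surviving terms remain, each of lower differential order than the principal part.

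Next I would identify the operator on $\X$ induced by each surviving term via Riesz representation and verify compactness. The first term yields $(v,f)\mapsto (R_1 f,0)$, where $R_1: L^2(\Dint)\to H^2_0(\Dint)$ factors as the Riesz isomorphism on $H^2_0(\Dint)$ applied to the dual of the embedding $H^2_0(\Dint)\hookrightarrow L^2(\Dint)$; Rellich compactness of that embedding makes $R_1$ compact. The second term yields $(v,f)\mapsto (0,m v)$ with the bounded multiplier $m=\mu-\lambda(q+1)\in L^\infty(\Dint)$; this factors through the same compact embedding $H^2_0(\Dint)\hookrightarrow L^2(\Dint)$ and is therefore compact. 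The third term yields $(v,f)\mapsto (0, p\,\opes{\sqrt{-\lambda}}f)$, which is compact because $\opes{\sqrt{-\lambda}}:L^2(\Dint)\to L^2(\Dint)$ is compact (volume-potential smoothing, already recorded after~\eqref{def:Skequiv}) and multiplication by the bounded function $p$ is continuous on $L^2(\Dint)$. Summing three compact operators gives the compactness of $A_\lambda-B_\mu$, and the same argument with $\opes{\sqrt{-\lambda}}$ replaced by $\opes{\sqrt{-\lambda}}-\opes{\sqrt{-\mu}}$ treats $A_\lambda-A_\mu$.

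The one point I would take some care with is the bookkeeping of the Riesz identifications, particularly on the $H^2_0$-component where one must use a full $H^2$-inner product. Beyond that, no lower bound on $n$, $q$, or $p$ is required: the entire argument rests on exactly two compactness inputs, namely the Sobolev embedding $H^2_0(\Dint)\hookrightarrow L^2(\Dint)$ and the smoothing of the periodic volume potential underlying $\opes{\sqrt{-\lambda}}$.
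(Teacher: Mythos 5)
Your proposal is correct and takes essentially the same route as the paper: cancel the principal parts and reduce the three surviving lower-order terms to the compact embedding $H^2_0(\Dint)\hookrightarrow L^2(\Dint)$ and the compactness of the volume potential $\opes{\sqrt{-\lambda}}$ on $L^2(\Dint)$. The only cosmetic difference is that you handle the term $\int_{\Dint} f\,\ol{\phi}\,\d{x}$ as the adjoint of the compact embedding, whereas the paper reaches the same conclusion sequentially via an auxiliary function $g_j$ with $\Delta g_j=f_j$, $g_j=0$ on $\partial\Dint$.
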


\begin{proof}
Let $(v_j,f_j) \in \X$  be an arbitrary sequence converging weakly  to $(0,0)$  in $\X$. We must show that $(A_\lambda - B_\mu) (v_j, f_j)$ converges  $(0,0)$ strongly in $\X$.  Recall that
$$
	\langle (A_\lambda - B_\mu) (v,f), (\phi,\psi) \rangle_{\X}
 = \big(a_\lambda - b_\mu\big) (v, f; \phi,\psi)
$$
where
$$
 \big(a_\lambda - b_\mu\big) (v_j, f_j; \phi,\psi) = (\mu - \lambda)\int_{\Dint} f_j \ol{\phi} +\int_{\Dint} \big(\mu-\lambda(q+1)\big)v_j \ol{\psi} - \int_{\Dint} p (\opes{\sqrt{-\lambda}} f_j - \opes{\sqrt{-\mu}} f_j) \ol{\psi}.
$$
 Let us take $(\phi, \psi) \in \X$ such that $\| (\phi, \psi) \|_\X = 1$ and  define $g_j \in H^1(\Dint)$ such that $\Delta g_j = f_j$ and $g_j = 0$ on $\partial \Dint$. Hence $g_j \rightharpoonup 0$ in $H^1(\Dint)$ implying that $g_j \to 0$ in $L^2(\Dint)$ by compact embedding of $H^1(\Dint)$ in $L^2(\Dint)$. Obviously,
$$
\Big| \int_{\Dint} f_j \ol{\phi} \Big| = \Big|\int_{\Dint} \Delta g_j \ol{\phi}\Big|  = \Big|\int_{\Dint} g_j \Delta \ol{\phi} \Big| \leq \|\Delta \ol{\phi} \|_{L^2(\Dint)}  \|g_j \|_{L^2(\Dint)} \leq \|g_j \|_{L^2(\Dint)}.
$$
Similarly, since $v_j \rightharpoonup 0$ in $H^2(\Dint)$, then $v_j \to 0$ in $L^2(\Dint)$, and 
$$
 	\Big| \int_{\Dint} \big(\mu-\lambda(q+1)\big) v_j \ol{\psi} \Big| \leq \sup_{\Dint}| \mu - \lambda(q+1)|  \|v_j \|_{L^2(\Dint)}  \|\psi_j \|_{L^2(\Dint)} \leq \sup_{\Dint}|  \mu-\lambda(q+1)|  \|v_j \|_{L^2(\Dint)}
$$ and 
 \begin{align*}
 	\Big|\int_{\Dint} p (\opes{\sqrt{-\lambda}} f_j - \opes{\sqrt{-\mu}} f_j) \ol{\psi} \Big|& \leq \sup_{\Dint}|p|  \big\|\opes{\sqrt{-\lambda}} f_j - \opes{\sqrt{-\mu}} f_j \big\|_{L^2(\Dint)}  \|\psi_j \|_{L^2(\Dint)} \\
 	&\leq  \sup_{\Dint}|p|  \big\|(\opes{\sqrt{-\lambda}} - \opes{\sqrt{-\mu}}) f_j \big\|_{L^2(\Dint)}.
 \end{align*}
 Since $(\opes{\sqrt{-\lambda}} - \opes{\sqrt{-\mu}}): L^2(\Dint) \to L^2(\Dint)$ is a compact linear operator,  then $ \|(\opes{\sqrt{-\lambda}} - \opes{\sqrt{-\mu}}) f_j \|_{L^2(\Dint)}$ converge to $0$ strongly in $L^2(\Dint)$.
 Therefore 
 \begin{align*}
 	\| (A_\lambda - B_\mu) (v_j, f_j) \|_{X \to X}  &= \sup_{\|(\phi,\psi) \|_{X} =1} |(a_\lambda - b_\mu) (v_j, f_j; \phi,\psi)| \\
 	&\leq c \Big( \|v_j \|_{L^2(\Dint)} + \|g_j\|_{L^2(\Dint)} + \big\|\opes{\lambda} f_j - \opes{\mu} f_j \big\|_{L^2(\Dint)} \Big) \longrightarrow 0.
 \end{align*}
The proof for $A_{\lambda} - A_{\mu}$ follows exactly the same lines.
\end{proof}

At this point we need to assume a sign condition on $q:=n-1$. To this end, let $R$ be a neighborhood of $\partial \Dint$ in $\Dint$ and denote by
\begin{equation}
q_{min}:=\inf\limits_{\Dint}{\Re(q)}>-1, \qquad q_{\star}:=\inf\limits_{R}{\Re(q)},\qquad q^{\star}:=\sup\limits_{R}{\Re(q)}.
\end{equation}

\noindent
The following lemmas, which are proven  first in \cite{kirschitp} for real refractive index and adapted to the case of complex refractive index in  \cite[Section 3.1.3]{CCH}),  play an important role in our analysis.
\begin{lemma}
\label{lem:Kirsch}
	Assume that $q\in L^\infty(\Dint)$ is such that $q_{min}+1>0$ and either  $q_\star>0$ or $q^\star<0$. Then there exists $c>0$ and $d>0$ such that for all $\lambda > 0$ the following estimates holds
\begin{equation}
		\int_{\Dint \setminus R} |f|^2 \d{x} \leq c e^{-2d\sqrt{\lambda}} \int_{R} |\Re(q)| |f|^2 \d{x}\leq c e^{-2d\sqrt{\lambda}} \int_{R} |q| |f|^2 \d{x}
\end{equation}	  
for all $f \in L^2(\Dint)$ solution of \; $\Delta f -  \lambda f = 0$.
\end{lemma}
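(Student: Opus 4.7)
The plan is to exploit the fact that solutions of $\Delta f - \lambda f = 0$ with $\lambda>0$ decay exponentially away from $\partial \Dint$ on the length scale $1/\sqrt{\lambda}$, while the assumed sign hypothesis on $\Re(q)$ turns an unweighted $L^2$ norm near the boundary into the required weighted integral.

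First I would perform a \emph{sign reduction}. Under the hypotheses $q_{\min}+1>0$ and either $q_\star>0$ or $q^\star<0$, the real part $\Re(q)$ has a fixed sign on $R$, so $|\Re(q)| \ge c_0$ a.e.\ on $R$ for some $c_0>0$ (namely $c_0 = q_\star$ or $c_0 = |q^\star|$). Hence
\begin{equation*}
\int_R |\Re(q)|\,|f|^2\,\d x \;\ge\; c_0 \int_R |f|^2\,\d x,
\end{equation*}
and the second inequality in the lemma is immediate from $|\Re(q)| \le |q|$. It therefore suffices to establish the unweighted estimate
\begin{equation*}
\int_{\Dint \setminus R} |f|^2\,\d x \;\le\; C\,e^{-2d\sqrt{\lambda}} \int_{R} |f|^2\,\d x
\end{equation*}
for all $f$ solving $\Delta f - \lambda f = 0$ in $\Dint$.

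For the decay estimate I would pick a smooth hypersurface $\Sigma\subset R$ that separates $\partial \Dint$ from $\Dint\setminus R$, set $d_0 := \mathrm{dist}(\Sigma,\Dint\setminus R)>0$ (this exists because $R$ is a neighborhood of $\partial \Dint$), and apply Green's representation for the operator $\Delta -\lambda$ in the subdomain bounded by $\Sigma$ on the side of $\Dint\setminus R$. Writing $\Phi_\lambda$ for the free-space fundamental solution of $\Delta-\lambda$ in $\R^d$ (which decays like $e^{-\sqrt{\lambda}|z|}$ up to polynomial and mildly singular factors), we obtain for every $x\in \Dint\setminus R$:
\begin{equation*}
f(x) \;=\; \int_\Sigma \Big( \Phi_\lambda(x-y)\,\partial_\nu f(y) \;-\; \partial_{\nu(y)}\Phi_\lambda(x-y)\, f(y) \Big)\, \d s(y).
\end{equation*}
Combining the pointwise bound $|\Phi_\lambda(x-y)|+|\nabla\Phi_\lambda(x-y)|\le C\lambda^{\alpha} e^{-\sqrt{\lambda}\,d_0}$ for $|x-y|\ge d_0$ with interior elliptic regularity and trace estimates for $\Delta f = \lambda f$ (whose constants grow at most polynomially in $\lambda$), one gets
\begin{equation*}
\|f\|_{H^{3/2}(\Sigma)} + \|\partial_\nu f\|_{H^{1/2}(\Sigma)} \;\le\; C\,\lambda^{N}\,\|f\|_{L^2(R)},
\end{equation*}
and hence $|f(x)|^2 \le C\lambda^{N'} e^{-2d_0\sqrt{\lambda}}\|f\|_{L^2(R)}^2$ for every $x\in \Dint\setminus R$. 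Since $\lambda^{N'}e^{-2d_0\sqrt{\lambda}}\le C' e^{-2d\sqrt{\lambda}}$ for any $0<d<d_0$, integrating over $\Dint\setminus R$ gives the claim.

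The main obstacle is making sure all the $\lambda$-dependent constants arising from the trace estimates, interior regularity for $\Delta-\lambda$, and the polynomial/logarithmic (in 2D) behaviour of $\Phi_\lambda$ remain polynomial in $\lambda$, so that they can be absorbed into a slightly weakened exponential factor (which is why we choose $d$ strictly less than $d_0$). A related delicate point is to apply a scale-invariant Caccioppoli identity to $\Delta f - \lambda f = 0$, so that the geometric constants depend only on $\Dint$ and $R$ and not on $\lambda$; once this is in place, the rest of the argument is the combination of the representation formula, the exponential decay of $\Phi_\lambda$, and the sign reduction step described above.
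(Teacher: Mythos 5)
Your argument is, in substance, the proof that the paper itself defers to: the paper does not reprove this lemma but simply cites Lemma 2.3 of \cite{kirschitp} (real $q$) and Lemma 3.14 of \cite{CCH} (complex $q$), and those proofs run exactly along your lines --- a Green representation of $f$ over an intermediate surface $\Sigma\subset R$ with the exponentially decaying fundamental solution of $\Delta-\lambda$, interior elliptic estimates for $\Delta f=\lambda f$ whose constants grow only polynomially in $\lambda$ (so they are absorbed by shrinking $d_0$ to $d<d_0$), and the elementary sign reduction $|\Re(q)|\geq q_\star$ (or $|q^\star|$) on $R$ together with $|\Re(q)|\leq|q|$. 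So the proposal is correct where it matters. The one loose end is the claim ``for all $\lambda>0$'': in dimension $d=2$ the fundamental solution $\Phi_\lambda=\tfrac{1}{2\pi}K_0(\sqrt{\lambda}\,|\cdot|)$ blows up logarithmically at fixed separation $|x-y|\geq d_0$ as $\lambda\to0^+$, and a factor $|\log\lambda|$ cannot be absorbed into $e^{-2d\sqrt{\lambda}}$ near $\lambda=0$; this is easily repaired (for $\lambda$ in a bounded interval the unweighted estimate follows from the maximum principle applied to $\Re f$ and $\Im f$, since $\Delta u=\lambda u$ with $\lambda>0$ admits no positive interior maximum, combined with the same interior estimates on $\Sigma$) and is irrelevant for the way the lemma is used, namely only for $\lambda$ sufficiently large.
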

\begin{proof}
See Lemma 2.3 in \cite{kirschitp} if $q$ is real and Lemma 3.14 in \cite{CCH} for $q$ complex.
\end{proof}
\begin{lemma}
\label{lem:isoA}
Assume that $q\in L^\infty(\Dint)$ is such that $q_{min}+1>0$ and either  $q_\star>0$ or $q^\star<0$. Then for  sufficiently large $\lambda > 0$, the operator $B_\lambda$ is an isomorphism form $\X$ onto itself.
\end{lemma}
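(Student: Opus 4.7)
The plan is in two steps: (i) prove injectivity of $B_\lambda$ for $\lambda > 0$ sufficiently large via Lemma \ref{lem:Kirsch}, and (ii) show that $B_\lambda$ is Fredholm of index zero, which combined with injectivity yields the isomorphism.

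\smallskip
\noindent\emph{Injectivity.} If $(v,f) \in \X$ satisfies $B_\lambda(v,f) = 0$, the strong form of the variational equation reads $\Delta v - \lambda v = q f$ in $\Dint$ (with $v \in H^2_0(\Dint)$) and $\Delta f - \lambda f = 0$ in $\Dint$ in the distributional sense. Multiplying the first equation by $\ol f$ and integrating over $\Dint$, the fact that $v \in H^2_0$ allows a double integration by parts with no boundary contributions, giving
\[
\int_\Dint q|f|^2\,dx = \int_\Dint (\Delta v - \lambda v)\ol f \,dx = \int_\Dint v(\Delta \ol f - \lambda \ol f)\,dx = 0.
\]
Assume $q_\star > 0$ (the case $q^\star < 0$ is symmetric). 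Taking real parts and splitting over $R$ and $\Dint\setminus R$ we have $\Re \int_R q|f|^2 \ge q_\star \int_R |f|^2$, while Lemma \ref{lem:Kirsch} yields
\[
\Bigl|\Re\int_{\Dint\setminus R} q|f|^2\,dx \Bigr| \le \|q\|_{L^\infty}\int_{\Dint\setminus R}|f|^2\,dx \le C\,e^{-2d\sqrt\lambda}\int_R|f|^2 \,dx,
\]
with $C$ depending only on $\|q\|_{L^\infty}$ and the constants in Lemma \ref{lem:Kirsch}. Hence $(q_\star - C\,e^{-2d\sqrt\lambda})\int_R|f|^2\,dx \le 0$, and for $\lambda$ large enough this forces $f \equiv 0$ in $R$. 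A further application of Lemma \ref{lem:Kirsch} delivers $f \equiv 0$ throughout $\Dint$. The first equation then reduces to $\Delta v - \lambda v = 0$ with $v \in H^2_0(\Dint)$, and pairing with $\ol v$ gives $v \equiv 0$.

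\smallskip
\noindent\emph{Fredholmness.} By Lemma \ref{lem:compAmA}, the difference $B_\lambda - B_\mu$ is compact for every $\lambda,\mu \in \CC$, so the Fredholm-of-index-zero property is independent of $\lambda$; it therefore suffices to exhibit a single value $\lambda_0$ for which $B_{\lambda_0}$ is Fredholm of index zero. Following \cite{kirschitp} (see also \cite[Section 3.1.3]{CCH}), we eliminate $f$ via the first equation and recast the problem as a fourth-order equation for $v \in H^2_0(\Dint)$ whose leading sesquilinear form is of the type $\int q^{-1}|\Delta v - \lambda v|^2$; after a localization that separates the region $R$, where $q$ has a definite sign, from the interior, this form is coercive on $H^2_0(\Dint)$ modulo a compact form, and Fredholmness of index zero follows from the Lax--Milgram theorem.

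\smallskip
\noindent\emph{Main obstacle.} The subtle point is the coercivity step underlying Fredholmness: since $q$ is assumed sign-definite only in a neighborhood of $\partial\Dint$, one must localize the fourth-order formulation near the boundary and absorb the interior contribution (where $q$ may vanish or change sign) as a compact perturbation while keeping the constants uniform in $\lambda$. Once Fredholmness of index zero is established, combining it with the injectivity of Step~(i) yields the stated isomorphism for all sufficiently large $\lambda > 0$.
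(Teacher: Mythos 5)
The paper does not prove this lemma internally; it simply cites Theorem 2.7 of \cite{kirschitp} (real $q$) and Lemma 3.15 of \cite{CCH} (complex $q$), so any self-contained argument is already a departure. Your injectivity step is correct and is essentially the standard one: from $B_\lambda(v,f)=0$ you correctly extract $\Delta v-\lambda v=qf$ with $v\in H^2_0(\Dint)$ and $\Delta f-\lambda f=0$ distributionally, the double integration by parts is legitimate because $v\in H^2_0(\Dint)$, and the absorption of $\int_{\Dint\setminus R}\Re(q)|f|^2$ via Lemma \ref{lem:Kirsch} forces $f=0$ in $R$ and then in all of $\Dint$ for $\lambda$ large. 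That part stands.

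The gap is in the Fredholmness step, which carries all the real difficulty and is not actually proved. Reducing matters to a single $\lambda_0$ via compact differences is fine, but your proposed route to index zero for that $\lambda_0$ --- eliminate $f$ by writing $f=q^{-1}(\Delta v-\lambda v-g)$ and study the fourth-order form $\int q^{-1}|\Delta v-\lambda v|^2$ --- breaks down under the stated hypotheses: $q$ is only assumed sign-definite on the collar $R$, so $q$ may vanish or change sign on $\Dint\setminus R$, where $q^{-1}$ does not exist and the second equation degenerates into a constraint on $v$ alone while leaving $f$ undetermined by algebra. You name this as the ``main obstacle'' but then assert, without argument, that a localization makes the form ``coercive modulo a compact form''; no such localization is exhibited, and producing one is precisely the content of the cited results. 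The references avoid this elimination altogether: Sylvester \cite{Sylve2012} and Kirsch \cite{kirschitp} exploit the upper-triangular-plus-compact block structure of $b_\lambda$ with respect to the splitting $L^2(R)\oplus L^2(\Dint\setminus R)$ and prove a quantitative inf-sup (T-coercivity) estimate for $b_\lambda$ directly, with the exponential estimate of Lemma \ref{lem:Kirsch} controlling the interior block --- this yields invertibility for large $\lambda$ in one stroke, rather than splitting into ``Fredholm'' plus ``injective''. As written, your argument establishes injectivity but not surjectivity, so the isomorphism claim is not proved; you should either carry out the inf-sup estimate or fall back on the citation the paper itself uses.
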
 
\begin{proof}
See Theorem 2.7 in \cite{kirschitp} if $q$ is real and Lemma 3.15 in \cite{CCH} for $q$ complex.
\end{proof}
\mfied{We now proceed with proving that  the operator $A_\lambda:X\to X$ is an isomorphism for sufficiently large $\lambda$. We adopt two different approaches for the cases $q^\star<0$ (Theorem \ref{lem:isoAlbda}) and $q_\star>0$ (Theorem \ref{neba})}.
\begin{theorem}
\label{lem:isoAlbda}
Assume that $q\in L^\infty(\Dint)$ is such that $q_{min}+1>0$ and $q^\star<0$. Then, for sufficiently large $\lambda > 0$, the operator $A_\lambda:X\to X$ is an isomorphism.
\end{theorem}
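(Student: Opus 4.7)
The plan is to reduce the invertibility of $A_\lambda$ to its injectivity by a Fredholm argument, and then to establish injectivity by deriving two energy identities and exploiting the exponential smallness of $\opes{\sqrt{-\lambda}}$ together with the concentration of Helmholtz solutions near $\partial \Dint$ at large $\lambda$. For the reduction, I write $A_\lambda = B_\lambda + (A_\lambda - B_\lambda)$: by Lemma \ref{lem:isoA}, $B_\lambda$ is an isomorphism of $\X$ for $\lambda$ large enough, and by Lemma \ref{lem:compAmA}, the difference $A_\lambda - B_\lambda$ is compact. Hence $A_\lambda$ is a compact perturbation of an isomorphism, therefore Fredholm of index zero, and it is an isomorphism if and only if it is injective.

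For the injectivity step, let $(v,f) \in H_0^2(\Dint)\times L^2(\Dint)$ be a solution of the homogeneous version of \eqref{eq:2.2newitp} (i.e. with $F=0$). I would test the first equation against $\bar v$ and against $\bar f$ respectively, integrating by parts using the vanishing Cauchy data of $v$ on $\partial \Dint$ together with $\Delta f = \lambda f$, to obtain
\begin{equation}\label{eq:IdA}
-\|\nabla v\|^2_{L^2(\Dint)} - \lambda \int_\Dint (q+1) |v|^2 \d{x} = \int_\Dint \bigl(q f + p\, \opes{\sqrt{-\lambda}} f\bigr) \bar v \d{x},
\end{equation}
\begin{equation}\label{eq:IdB}
-\lambda \int_\Dint q\, v \bar f \d{x} = \int_\Dint q|f|^2 \d{x} + \int_\Dint p\, \opes{\sqrt{-\lambda}} f \, \bar f \d{x}.
\end{equation}
Taking real parts of \eqref{eq:IdA}, using $q_{min}+1>0$, Young's inequality and Lemma \ref{lem:pro1} to absorb the $\opes{\sqrt{-\lambda}} f$ contribution, will yield the a priori estimate $\|v\|_{L^2(\Dint)} \leq (C/\lambda)\|f\|_{L^2(\Dint)}$ for $\lambda$ large enough.

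The heart of the argument is then to combine \eqref{eq:IdA} and \eqref{eq:IdB}: I would multiply the real part of \eqref{eq:IdA} by $-\lambda$ and replace the resulting $-\lambda\, \Re \int_\Dint q f \bar v$ using \eqref{eq:IdB} (for real $q$ both integrals agree; for complex $q$ with $\Im q \ge 0$ they differ by $-2\int \Im q\, \Im(v\bar f)$, which I would absorb via Young's inequality and the bound $\|v\| \lesssim \|f\|/\lambda$). This yields an identity whose left-hand side
\[
\lambda \|\nabla v\|^2_{L^2(\Dint)} + \lambda^2 \int_\Dint (\Re q +1)|v|^2 \d{x}
\]
is non-negative, while its right-hand side equals $\int_\Dint \Re q\, |f|^2 \d{x}$ plus remainders of $\opes{\sqrt{-\lambda}}f$-type. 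The assumption $q^\star < 0$ combined with Lemma \ref{lem:Kirsch} forces $\int_\Dint \Re q\,|f|^2 \leq (q^\star/4)\|f\|^2_{L^2(\Dint)} < 0$ for $\lambda$ large, while each remainder is $o(1)\|f\|^2$ thanks to Lemma \ref{lem:pro1} and the bound on $\|v\|$. For $\lambda$ sufficiently large, the right-hand side is therefore strictly negative unless $f\equiv 0$, contradicting the non-negativity of the left-hand side. Consequently $f=0$, and the first equation reduces to $\Delta v - \lambda(q+1)v = 0$ with $v\in H^2_0(\Dint)$; testing against $\bar v$ and using $q_{min}+1>0$ then immediately gives $v=0$.

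The main obstacle will be the quantitative balancing in the combined identity: one must verify that the constants arising from the a priori bound on $\|v\|$ and from the absorption of the imaginary-part discrepancy $-2\lambda\int\Im q\, \Im(v\bar f)$ are truly dominated by the negative contribution $(q^\star/4)\|f\|^2$. This will rely crucially on the exponential decay $e^{-\theta\sqrt{\lambda}}$ provided by Lemma \ref{lem:pro1} (which handles every $p\,\opes{\sqrt{-\lambda}}f$-type remainder) and on the sharp concentration estimate of Lemma \ref{lem:Kirsch} (which allows controlling $\|f\|_{L^2(\Dint)}$ by $\|f\|_{L^2(R)}$ up to an exponentially small error), together with a judicious choice of the neighborhood $R$ of $\partial \Dint$.
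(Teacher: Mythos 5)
Your proposal is correct and follows essentially the same route as the paper's proof: the identical Fredholm reduction via Lemmas \ref{lem:compAmA} and \ref{lem:isoA}, the same two energy identities obtained by testing the first equation against $\bar f$ and $\bar v$ (the paper's \eqref{eq:dk1} and \eqref{eq:dk22}), the same combination of them into a single identity with nonnegative left-hand side (yours is the paper's \eqref{contrac1} multiplied by $\lambda$), and the same closing step via the concentration estimate of Lemma \ref{lem:Kirsch} and the exponential smallness from Lemma \ref{lem:pro1}. The one point where you are more explicit than the paper is the cross term $2\lambda\int_{\Dint}\Im(q)\,\Im(v\bar{f})$ arising when $q$ is complex: your proposed absorption via Young's inequality and $\|v\|\lesssim\|f\|/\lambda$ only yields an $O(1)\|f\|^2$ bound, which would need an additional smallness hypothesis on $\Im q$ to be dominated by $|q^\star|\,\|f\|^2/4$, but the paper's own passage from \eqref{eq:dk1}--\eqref{eq:dk2} to \eqref{contrac1} silently discards this very term, so this is a shared delicate point (which you at least acknowledge) rather than a gap specific to your argument.
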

\begin{proof}
It is sufficient to prove the injectivity of $A_\lambda$ since $B_\lambda$ is an isomorphism from Lemma \ref{lem:isoA} and and $A_\lambda - B_\lambda$ is compact from Lemma \ref{lem:compAmA}.  To this end, assume that  $(v,f) \in \X$ is such that 
$A_{\lambda}(v, f) = 0$, i.e,  $a_{\lambda_j}(v,f;\phi,\psi) = 0$ for all $(\phi, \psi) \in \X$. As such,   $(v,f)$ satisfies 
\begin{equation}
\label{eqv}
\left\{
	\begin{array}{ccc}
		\Delta v - \lambda (q+1) v &=& q f+ p \opes{i\sqrt{\lambda}}(f) \\[1.5ex]
		\Delta f - \lambda f &=& 0 
	\end{array}  \qquad \qquad    \text{in} \qquad \Dint.
\right.
\end{equation}
  Multiplying the first equation of \eqref{eqv} with $\bar f$  we obtain
\begin{align*}
	\int_{\Dint} q|f|^2 +  \int_{\Dint} p \opes{i\sqrt{\lambda}}(f)  \ol{f} = \int_{\Dint} \Big( \Delta v  - \lambda (q+1) v\Big) \ol{f} 
	 =  \int_{\Dint} \Big(\Delta v - \lambda v\Big) \ol{f} - \int_{\Dint} q v \ol{f}.
\end{align*}
First observe that  $ \int_{\Dint} (\Delta v - \lambda v)  \ol{f} = 0$ since  $v \in H^2_0(\Dint)$ and $\Delta f -  \lambda f = 0$ in $\Dint$. Therefore
\begin{equation}
\label{eq:dk1}
	\int_{\Dint} q |f|^2 = -  \int_{\Dint} p \opes{i\sqrt{\lambda}}(f)  \ol{f} -  \lambda\int_{\Dint} q v \ol{f}.
\end{equation}
Multiplying now the first equation of \eqref{eqv} with $\bar v$ and integrating by parts we obtain
\begin{equation}
\label{eq:dk22}
	\int_{\Dint} | \nabla v |^2 + \lambda(q+1)|v|^2 \d x =  - \int_{\Dint} q f \ol{v} \d x -  \int_{\Dint} p \opes{i\sqrt{\lambda}} (f) \ol{v} \d x.
\end{equation}
Now assume that $q^\star<0$ and taking the real part of the above we write
\begin{equation}
\label{eq:dk2}
	\int_{\Dint} | \nabla v |^2 + \lambda(\Re(q)+1)|v|^2 =  \Re\left(- \int_{\Dint} q f \ol{v}  -  \int_{\Dint} p \opes{i\sqrt{\lambda}} (f) \ol{v}\right).
\end{equation}
 From \eqref{eq:dk1} and \eqref{eq:dk2} noting that $\lambda$ is real, we deduce that
\begin{eqnarray}
	\int_{\Dint} | \nabla v |^2 + \lambda (\Re(q) + 1) |v |^2  &+ & \frac{1}{\lambda}\int_{\Dint}  -\Re(q) |f|^2 \d{x} \label{contrac1}\\
	&=&  \Re \left(\int_{\Dint}p \opes{i\sqrt{\lambda}} (f)( \frac{1}{\lambda} \ol{f} -  \ol{v}) \d{x}\right).\nonumber
\end{eqnarray}
For large enough $\lambda>0$, we set $\rho: = \mbox{max}\left(\|q\|_{\infty}c e^{-2d\sqrt{\lambda}}, c e^{-2d\sqrt{\lambda}}\right) < 1$. Then Lemma \ref{lem:Kirsch} implies
\begin{equation}\label{rel:rho}
(1 - \rho) \int_{R} |\Re(q)| |f|^2 \d{x} \leq \int_{\Dint} -\Re(q) |f|^2 \d{x} \leq (1+\rho) \int_{R} |\Re(q)||f|^2 \d{x}.\end{equation}
The latter inequalities implies
\begin{eqnarray}
\int_{\Dint}  \lambda (\Re(q) + 1) |v |^2  &+&  \frac{1}{\lambda}\int_{\Dint}  - \Re(q) |f|^2 \d{x}  \label{eq:sgch1}\\
&\geq& \lambda  \int_{\Dint}   (\Re(q) + 1) |v |^2  + \frac{1 - \rho}{\lambda}\int_{R}  |\Re(q)| |f|^2 \d{x}.\nonumber  
\end{eqnarray}
From Lemma \ref{lem:pro1} we infer the existence of $c > 0$ such that $\|p \opes{i\sqrt{\lambda}} (f)\|_{L^2(\Dint)} \leq c e^{-\theta \sqrt{\lambda}} \|f\|_{L^2(\Dint)}$. We then have 
\begin{equation*}
	 \Big|\int_{\Dint} \Re\big[p \opes{i\sqrt{\lambda}} (f)(\frac{1}{\lambda} \ol{f} - \ol{v} )\big]  \d{x} \Big| 
	 \leq c e^{- \theta \sqrt{\lambda}} \| f\|_{L^2(\Dint)} \Big(\| v\|_{L^2(\Dint)} +\frac{1}{\lambda} \| f\|_{L^2(\Dint)} \Big)
\end{equation*}
Using Young's inequality then splitting the integral of $|f|^2$ into the domains $R$ and $\Dint \setminus R$ we finally have 
\begin{multline}
\label{eq:sgch2}
\left|\int_{\Dint} \Re\left(p \opes{i\sqrt{\lambda}} (f)(\frac{1}{\lambda} \ol{f} - \ol{v} )\right) \d{x} \right| \leq \frac{2c}{\lambda} e^{- \theta \sqrt{\lambda}} \|f\|^2_{L^2(\Dint)} + \frac{c\lambda}{4} e^{- \theta \sqrt{\lambda}} \|v\|^2_{L^2(\Dint)} \\
\leq \frac{2c}{\lambda} e^{- \theta \sqrt{\lambda}} (1 + \rho) \|f\|^2_{L^2(R)} + \frac{c\lambda}{4} e^{- \theta \sqrt{\lambda}} \|v\|^2_{L^2(\Dint)} \\
\leq  \frac{2c (1 + \rho)}{\lambda q_\star} e^{- \theta \sqrt{\lambda}} \| \sqrt{|\Re(q)|}f\|^2_{L^2(R)} + \frac{c\lambda}{4 (q_{min}+1)} e^{- \theta \sqrt{\lambda}} \| \sqrt{\Re(q)+1}v\|^2_{L^2(\Dint)}.
\end{multline}
For $\lambda$ sufficiently large such that
\[
	 \frac{2c(1+\rho)}{\lambda q_\star} e^{- \theta \sqrt{\lambda}} \leq \frac{1 - \rho}{\lambda} - \epsilon \qquad \text{and} \qquad \frac{c \lambda}{4 (q_{min}+1)} e^{- \theta \sqrt{\lambda}} \leq \lambda - \epsilon, 
\]
with $\epsilon >0$  {small enough} we finally obtain from  \eqref{contrac1}, \eqref{eq:sgch1} and \eqref{eq:sgch2} 
$$
\epsilon \| \sqrt{\Re(q)+1}v\|^2_{L^2(\Dint)} + \epsilon  \| \sqrt{|\Re(q)|}f\|^2_{L^2(R)} \le 0.
$$
This prove that $v = 0$ in $\Dint$ and $f=0$ in $R$. Inequality \eqref{rel:rho} then implies  $f=0$ in $\Dint$ which ends the proof. 
\end{proof}
\begin{theorem}\label{neba}
Assume that $q\in L^\infty(\Dint)$ is such that $q_{min}+1>0$ and  $q_\star>\sup_R(\Im(q))^2\geq 0$. Then, for sufficiently large $\lambda > 0$, the operator $A_\lambda:X\to X$ is an isomorphism.
\end{theorem}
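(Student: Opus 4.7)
The plan is to follow the same Fredholm reduction as in Theorem \ref{lem:isoAlbda}: combining Lemmas \ref{lem:compAmA} and \ref{lem:isoA}, $A_\lambda$ is a compact perturbation of the isomorphism $B_\lambda$ for $\lambda>0$ large, so it suffices to prove injectivity of $A_\lambda$. Assuming $(v,f)\in H^2_0(\Dint)\times L^2(\Dint)$ satisfies $A_\lambda(v,f)=0$, I would test with $\bar v$ and $\bar f$ (integrating by parts using $v\in H^2_0(\Dint)$ and $\Delta f-\lambda f=0$) to obtain the two integral identities analogous to (\ref{eq:dk1})--(\ref{eq:dk22}):
\begin{align*}
\int|\nabla v|^2+\lambda\int(q+1)|v|^2+\int qf\bar v&=-\int p\,\opes{i\sqrt\lambda}(f)\,\bar v,\\
\int q|f|^2+\lambda\int qv\bar f&=-\int p\,\opes{i\sqrt\lambda}(f)\,\bar f.
\end{align*}

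In the $q^\star<0$ regime, Theorem \ref{lem:isoAlbda} subtracted these identities to produce a positive-definite form in $(-\Re q)|f|^2$; the same subtraction here would flip sign on the $|f|^2$ term and fail. Instead, I would keep the first identity and bound its cross term $-\Re\int qf\bar v$ using a weighted pointwise Young inequality valid on $R$, where $\Re q\geq q_\star>0$:
\[
2|q||v||f|\;\leq\;\Re q\,|v|^2+\frac{|q|^2}{\Re q}|f|^2\;=\;\Re q\,|v|^2+\Re q\,|f|^2+\frac{(\Im q)^2}{\Re q}|f|^2.
\]
Here the $(\Im q)^2/\Re q$ contribution is decoupled from the main $\Re q|f|^2$ contribution; the hypothesis $q_\star>\sup_R(\Im q)^2$ forces $(\Im q)^2/\Re q<1$ on $R$, which is precisely what allows this term to be strictly absorbed. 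The contribution on $\Dint\setminus R$ is exponentially small thanks to Lemma~\ref{lem:Kirsch}, which gives $\|f\|^2_{L^2(\Dint\setminus R)}\leq ce^{-2d\sqrt\lambda}\int_R|\Re q||f|^2$, while Lemma~\ref{lem:pro1} handles the $p\,\opes{i\sqrt\lambda}(f)$ term, providing an exponentially decaying bound $\|p\,\opes{i\sqrt\lambda}(f)\|_{L^2(\Dint)}\leq Ce^{-\theta\sqrt\lambda}\|f\|_{L^2(\Dint)}$.

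Coupling the real part of the first identity (combined with the Young bound) to the second identity, and using $\Re q+1\geq q_{min}+1>0$ to get coercivity on $\|v\|^2$ uniformly on $\Dint$, I expect to arrive at an inequality of the form
\[
\int|\nabla v|^2+\tfrac{\lambda(q_{min}+1)}{2}\|v\|^2+\sigma\,\tfrac{q_\star}{\lambda}\|f\|^2_{L^2(R)}\;\leq\;Ce^{-\theta\sqrt\lambda}\bigl(\|v\|^2+\|f\|^2_{L^2(R)}\bigr),
\]
with $\sigma=1-\sup_R(\Im q)^2/q_\star>0$. For $\lambda$ large, the right-hand side is absorbed into the left, forcing $v=0$ in $\Dint$ and $f=0$ on $R$; Lemma~\ref{lem:Kirsch} then propagates $f=0$ to $\Dint\setminus R$. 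The main technical obstacle is bookkeeping the constants so that after applying Young (which couples the $|v|^2$ and $|f|^2$ contributions), the residual $(\Im q)^2$ coefficient is dominated by $q_\star$ in a way compatible with the extra $1/\lambda$ factor sitting in front of $\int\Re q|f|^2$; the hypothesis $q_\star>\sup_R(\Im q)^2$ is exactly the sharp condition making this absorption possible.
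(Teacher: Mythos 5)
Your proposal is correct in substance and rests on exactly the same ingredients as the paper's proof: the two energy identities \eqref{eq:dk1} and \eqref{eq:dk22}, the exponential smallness of $p\,\opes{i\sqrt{\lambda}}(f)$ from Lemma \ref{lem:pro1}, the localization of $f$ near $\partial\Dint$ from Lemma \ref{lem:Kirsch}, and the pointwise inequality $(1+\Re(q))\,\Re(q)-|q|^2=\Re(q)-(\Im(q))^2>0$ on $R$, which is where the hypothesis enters in both arguments. The packaging differs: the paper proves injectivity by contradiction with a normalized sequence $(v_j,f_j)$, $\lambda_j\to\infty$, chaining the two identities through weighted Cauchy--Schwarz bounds \eqref{NPT:1}--\eqref{NPT:2} with margin $\delta_0=\inf_R\frac{\Re(q)-(\Im(q))^2}{\Re(q)}$, whereas you propose a direct absorption obtained by adding $\lambda^{-1}$ times the $f$-identity to the $v$-identity. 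That route does close, and is arguably cleaner, but two bookkeeping points you gloss over are essential. First, bringing in the $f$-identity at weight $\lambda^{-1}$ also brings in its cross term $\Re\int_{\Dint} q\,v\bar f$, so the quantity to absorb on $R$ is (up to the exponentially small terms) $2\int_R|q||v||f|$ against the budget $\lambda(\Re(q)+1)|v|^2+\lambda^{-1}\Re(q)|f|^2$. Second, the unweighted Young inequality you display cannot be the one actually used: besides missing the $\lambda$ versus $\lambda^{-1}$ weights, at the balance point $\frac{|q|^2}{\Re(q)}=\Re(q)+\frac{(\Im(q))^2}{\Re(q)}$ the $\Re(q)|f|^2$ part already exhausts the entire $\lambda^{-1}\Re(q)|f|^2$ budget, leaving the residual $\frac{(\Im(q))^2}{\Re(q)}|f|^2$ with nothing to absorb it. You must instead trade against the unit surplus on the $v$-side, e.g. $2|q||v||f|\le\lambda(\Re(q)+s)|v|^2+\lambda^{-1}\frac{|q|^2}{\Re(q)+s}|f|^2$ with $\sup_R\frac{(\Im(q))^2}{\Re(q)}<s<1$; such an $s$ exists precisely because $q_\star>\sup_R(\Im(q))^2$, and by AM--GM the absorption is possible iff $|q|^2<\Re(q)(\Re(q)+1)$, i.e. iff $(\Im(q))^2<\Re(q)$ on $R$ --- so your claim that the hypothesis is sharp for this method is accurate. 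With that rebalancing, the exponentially decaying right-hand side is absorbed for $\lambda$ large, forcing $v=0$ and $f=0$ on $R$, and Lemma \ref{lem:Kirsch} finishes, exactly as you indicate.
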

\begin{proof}
Again similarly to the proof of Theorem \ref{lem:isoAlbda} it is sufficient to prove that $A_\lambda$ is injection, which will do by contradiction. Assume to the contrary that there exists a sequence $\lambda_j \to \infty$ and $(v_j, f_j) \in \X$ with $\|(v_j, f_j)\|_{L^2(\Dint)} = 1$ and $A_{\lambda_j} (v_j, f_j)=0$. Then $v_j \in H^2_0(\Dint)$ and $f_j \in L^2(\Dint)$ satisfy
\begin{equation}
\label{alambda1}
\Delta v_j - \lambda_j (q+1) v_j = q f_j+ p \opes{i\sqrt{\lambda_j}}(f_j) \qquad \text{and} \qquad \Delta f_j - \lambda_j f_j = 0.
\end{equation} 
For given $\epsilon >0$ small enough, from Lemma \ref{lem:pro1} and Lemma \ref{lem:Kirsch} and the fact that  $\|(v_j, f_j)\|_{L^2(\Dint)} = 1$, we have  that for  $\lambda_j >0$ large enough
\begin{eqnarray}
\| p \opes{i\sqrt{\lambda_j}}(f_j) \|_{L^2(\Dint)} \| f_j \|_{L^2(\Dint)} &+& \| \sqrt{|q|} f_j \|_{L^2(\Dint\setminus R)}^2 \label{wrong}\\
&+& \lambda_j  \| \sqrt{|q|} f_j \|_{L^2(\Dint\setminus R)}^2  \| \sqrt{|q|} v_j \|_{L^2(\Dint\setminus R)}^2 \le \epsilon^2.\nonumber
\end{eqnarray}
From  \eqref{eq:dk1}, we observe that 
\begin{eqnarray*}
\int_{R} \Re(q) | f_j|^2 \d x&=&\Re\left(-\lambda_j\int_{R}qv_j\overline{f_j} \d x\right)-\int_{\Dint \setminus R} \Re(q) | f_j|^2 \d x\\
&-&\Re\left( \int_{\Dint} p \opes{i\sqrt{\lambda}} (f_j) \ol{v_j} \d x+\lambda_j\int_{\Dint \setminus R}qv_j\overline{f_j} \d x\right)
\end{eqnarray*}
Estimating the last three terms from (\ref{wrong}) and using Cauchy-Schwarz inequality  we now obtain
$$
\int_{R}\Re(q) | f_j|^2d x \le \epsilon^2 + \lambda_j \left(\int_{R} \Re(q) | f_j|^2 \d x\right)^{1/2}  \left(\int_{R} \frac{|q|^2}{\Re(q)} | v_j|^2\d x\right)^{1/2}.
$$
This implies that
\begin{equation}
\label{NPT:1}
\left(\int_{R} \Re(q) | f_j|^2 \d x\right)^{1/2} \le \epsilon + \lambda_j  \left(\int_{R} \frac{|q|^2}{\Re(q)} | v_j|^2 \d x\right)^{1/2},
\end{equation}
(where we used for $A>0$ and $B>0$, $A^2\leq \epsilon^2+AB$ implies $(A-B/2)^2\leq \epsilon^2+B^2/4\leq (\epsilon +B/2)^2$  and taking the square root yields $A\leq \epsilon+B$). From  \eqref{eq:dk22}, we see that (using again $\|(v_j, f_j)\|_{\X} = 1$)
\begin{eqnarray}
\lambda_j \int_{R} \big(1+\Re(q)\big) |v_j|^2 \d x &\leq&  \lambda_j \int_{\Dint} \big(1+\Re(q)\big) |v_j|^2 \d x  \label{NPT:2}\\
&&\hspace*{-2cm} \leq\left(\int_{R} \Re(q) | f_j|^2 \d x\right)^{1/2} \left(\int_{R} \frac{|q|^2}{\Re(q)} | v_j|^2 \d x\right)^{1/2}  + \epsilon^2 \nonumber
\end{eqnarray}
since $(1 + \Re(q))>0$ in $\Dint$. Combining the last two inequalities yield
\begin{equation}
\lambda_j \int_{R} \big(1+\Re(q)\big) |v_j|^2 \d x \le \lambda_j  \int_{R} \frac{|q|^2}{\Re(q)} |v_j|^2 \d x + \epsilon \left(\int_{R} \frac{|q|^2}{\Re(q)} | v_j|^2 \d x\right)^{1/2}  + \epsilon^2.
\end{equation}
The assumption $q_\star>\sup\limits_R(\Im(q))^2\geq 0$ in particular implies $\inf_R\frac{\Re(q)-(\Im(q))^2}{\Re(q)}=\delta_0>0$, hence we have 
$$
\lambda_j  \delta_0 \int_{R} |v_j|^2 \d x \le \epsilon \left(\int_{R} \frac{|q|^2}{\Re(q)} | v_j|^2 \d x\right)^{1/2}  + \epsilon^2.
$$
Choosing $\lambda_j$ such that $ 4\lambda_j  \delta_0 \ge \|\frac{|q|^2}{\Re(q)}\|_{L^\infty(R)}$ we obtain that 
$$
\lambda_j \delta_0  \int_{R}  |v_j|^2 \d x \le 2 \epsilon \left(\lambda_j \delta_0 \int_{R} | v_j|^2 \d x\right)^{1/2}  + \epsilon^2.
$$
which implies 
$$
\left(\lambda_j \delta_0 \int_{R}  | v_j|^2 \d x\right)^{1/2}    \le 3 \epsilon
$$
and proves that $\dsp \lambda_j \delta_0 \int_{R}  | v_j|^2 \d x \to 0$ as $j \to \infty$. Coming back to
\eqref{NPT:2} we deduce that 
$$
\int_{\Dint} (1+\Re(q)) |v_j|^2 \d x \to 0 \mbox{ as } j \to \infty
$$
and from 
\eqref{NPT:1}, since $\Re(q)$ is positive definite in $R$, we deduce that $ \dsp   \int_{R}  | f_j|^2 \d x \to 0$ as $j \to \infty$. Lemma \ref{lem:Kirsch} then implies 
$$
\int_{\Dint} |f_j|^2 \d x \to 0 \mbox{ as } j \to \infty.
$$
The last two zero limits contradict the fact that $\|(v_j, f_j)\|_{L^2(\Dint)} = 1$, which proves injectivity.
\end{proof}

\noindent
Recall that solving the new interior transmission problem formulated in Definition \ref{nitp} is equivalent to solving
\begin{equation}\label{var11}
A_k(v,f)=\ell \qquad \mbox{in}\; \X
\end{equation}
(where we go back to the original eigen-parameter $k\in {\mathbb C}$ which is such that $\lambda=-k^2$). Let us fix $\lambda_0>0$ such that Theorem \ref{lem:isoAlbda} or Theorem \ref{neba} holds, i.e. $A_{\lambda_0}$ is an isomorphism and let $k_0=\sqrt{\lambda_0}i$. Then we can rewrite (\ref{var11}) as 
\begin{equation}
	(v,f) + A^{-1}_{k_0} (A_k - A_{k_0})(v,f) =A^{-1}_{k_0}\ell  \qquad \mbox{or} \qquad  (I-C_{k})(v,f)=A^{-1}_{k}\ell. 
\end{equation}
where $C_k:=A^{-1}_{k_0}(A_{k} - A_{k_0})$ is compact from Lemma \ref{lem:compAmA}. Thus the Fredholm alternative applies to (\ref{var11}). In particular a solution of the new interior transmission problem exists provided $k$ is not a new transmission eigenvalue defined in Definition \ref{ntp-def}. To show that the set of new transmission eigenvalues  is discrete we appeal to Fredholm Analytic Theory (see e.g. \cite{CK3}) since the mapping $k\mapsto C_k$ is analytic in ${\mathbb C}$.
Thus we have proven the following theorem:
\begin{theorem}\label{mainth}
Assume that the bounded function $n$ satisfies $\inf_{\Dint}{\Re(n)}>0$, $\Im(n)\geq 0$, and there exists a neighborhood $R$ of $\partial \Dint$ inside $\Dint$  such that either $\sup_{R}{\Re(n-1)}<0$ or  $\inf_{R}\Re(n-1)>\sup_R(\Im(n))^2\geq 0$. Then the new interior transmission formulated in Definition \ref{nitp} has a unique solution depending continuously on the data $\varphi$ and $\psi$ provided $k\in{\mathbb C}$ is not a new transmission eigenvalue defined  in Definition \ref{ntp-def}. In particular the set of new transmission eigenvalues in ${\mathbb C}$ is discrete (possibly empty) with $+\infty$  as the only possible accumulation point.
\end{theorem}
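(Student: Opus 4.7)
The plan is to reduce the new interior transmission problem of Definition \ref{nitp} to a Fredholm operator equation on the Hilbert space $\X := H_0^2(\Dint)\times L^2(\Dint)$, and then close with the analytic Fredholm theorem. First, I would lift the boundary data by choosing $u_0\in H^2(\Dint)$ with $u_0|_{\partial\Dint}=\varphi$ and $\partial u_0/\partial\nu|_{\partial\Dint}=\psi$; setting $w:=u-u_0-f$, $v:=-w/k^2$, $\lambda:=-k^2$, $q:=n-1$, $p:=n-n_p$, and $F:=(\Delta u_0+k^2 n u_0)/k^2$ recasts the problem exactly in the homogeneous-boundary form (\ref{eq:2.2newitp}), which is in turn equivalent to the operator equation $A_k(v,f)=\ell$ in $\X$, where $\ell$ is the Riesz representative of $(\phi,\psi)\mapsto\int_\Dint F\,\overline\psi\,dx$. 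Continuous dependence on $(\varphi,\psi)$ will then reduce to continuous dependence of $\ell$ on $(\varphi,\psi)$, which follows from standard trace-lifting estimates, combined with boundedness of $A_k^{-1}$ once it is shown to exist.

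Next, the sign hypothesis on $n$ gives either $q^{\star}<0$ or $q_{\star}>\sup_R(\Im(q))^2\geq 0$, so exactly one of Theorems \ref{lem:isoAlbda} or \ref{neba} applies and supplies some $\lambda_0>0$ for which $A_{\lambda_0}:\X\to\X$ is an isomorphism. Writing $k_0:=\mathrm{i}\sqrt{\lambda_0}$, I would then rewrite $A_k(v,f)=\ell$ as
\begin{equation*}
(I-C_k)(v,f)=A_{k_0}^{-1}\ell,\qquad C_k:=A_{k_0}^{-1}(A_{k_0}-A_k).
\end{equation*}
Lemma \ref{lem:compAmA} ensures that $A_{k_0}-A_k$ is compact on $\X$, hence so is $C_k$, and the Fredholm alternative gives existence, uniqueness, and continuous dependence whenever $I-C_k$ is injective. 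By construction, injectivity of $I-C_k$ is equivalent to injectivity of $A_k$, i.e.\ to the requirement that $k\in\CC$ is not a new transmission eigenvalue in the sense of Definition \ref{ntp-def}.

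For the discreteness statement I would appeal to the analytic Fredholm theorem \cite{CK3} applied to $k\mapsto C_k$. The map $k\mapsto A_k$ is analytic on $\CC$ because $\lambda=-k^2$ enters the sesquilinear form $a_\lambda$ polynomially and through the integral operator $\opes{\sqrt{-\lambda}}$, whose kernel is built from $ML$-periodic translates of $\fundnp{\cdot}$ and depends analytically on $k$ away from the discrete set where Assumption \ref{Ass:nk} fails. Since $I-C_{k_0}$ is invertible, the analytic Fredholm theorem yields that $(I-C_k)^{-1}$ is meromorphic on $\CC$, so the set of new transmission eigenvalues is discrete in $\CC$ with $+\infty$ as the only possible accumulation point.

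The serious analytical work, namely establishing the isomorphism of $A_\lambda$ for sufficiently large positive $\lambda$ via the exponential-decay estimates of Lemmas \ref{lem:pro1} and \ref{lem:Kirsch} and the delicate energy bookkeeping in the proofs of Theorems \ref{lem:isoAlbda} and \ref{neba}, has already been carried out; the Fredholm closure here is essentially routine. The only mild technicality to verify carefully is the $k$-analyticity of $\opes{\sqrt{-\lambda}}$, which reduces to $k$-analyticity of the $ML$-periodic Green's function $\fundnp{\cdot}$, a well-known fact away from the Rayleigh frequencies excluded by Assumption \ref{Ass:nk}.
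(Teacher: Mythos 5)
Your proposal follows essentially the same route as the paper: lift the boundary data to reduce to the operator equation $A_k(v,f)=\ell$ on $\X$, invoke Theorem \ref{lem:isoAlbda} or Theorem \ref{neba} for an isomorphism $A_{\lambda_0}$, use the compactness from Lemma \ref{lem:compAmA} to set up the Fredholm alternative, and conclude discreteness of the eigenvalue set via analytic Fredholm theory applied to $k\mapsto C_k$. The argument is correct (your sign convention $C_k=A_{k_0}^{-1}(A_{k_0}-A_k)$ is in fact the consistent one), and your remark on verifying $k$-analyticity of $\opes{\sqrt{-\lambda}}$ is a reasonable flag on a point the paper itself passes over silently.
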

\noindent
Note that Theorem \ref{mainth} provides sufficient conditions under which Assumption \ref{as-nitp} hold. It is highly desirable to show if and when real new transmission eigenvalues exist, because for such real wave numbers our imaging algorithm introduced in the next section fails.
\section{A Differential Imaging Algorithm} \label{ch3subsec4}
We now apply all the results of Section 2 and Section 4 above to design an algorithm that provides us with the support of the defect $\domp$ without reconstructing $D_p$ or computing the Green's function of the periodic media.   \mfied{We follow the idea proposed in  \cite{Thi-Phong3} and build a differential imaging functional by comparing the application of the Generalized Linear Sampling algorithm to respectively the operators $N^\pm$ and  $N_q^{\pm}$}. The new results obtained in Theorem \ref{mainth} allow us to justify this algorithm for general location of $\omega$ (possibly multi-component).
\subsection{\mfied{Description and analysis of the algorithm} }
Throughout this section we assume that  Assumptions \ref{Ass:nk}, \ref{HypoLSM} and \ref{as-nitp}  hold. For sake of simplicity of presentation  we
only state the results when  the measurements operator $\open^+$ is available. Exactly the same
holds for the operator  $\open^-$ by changing everywhere the exponent  $+$  to $-$. For given $\phi$ and $a$ in $\ell^2(\Zd)$ we define the functionals
\newcommand{\pmp}{+}
\begin{equation}
	\begin{array}{ll}
		J^{\pmp}_{\alpha}(\phi,\aher) := \alpha(\open_{\sharp}^\pmp\aher,\aher) + \|\open^\pmp\aher - \phi\|^2, \\[1.5ex]
		J^{\pmp}_{\alpha,q}(\phi,\aher) :=
                \alpha(\open_{q,\sharp}^\pmp\aher,\aher) + \|\open_q^\pmp \aher - \phi\|^2
	\end{array}
\end{equation}
\mfied{with $\open_{q,\sharp}^\pmp:= I_q^* \open_{\sharp}^\pmp  I_q$.}
Let $\aher^{\alpha,z}$, $\aher_q^{\alpha,z}$ and $\tilde \aher^{\alpha,z}_q$ 
in $\ell(\Zd)$ verify (i.e. are minimizing sequences)
\begin{equation}
	\begin{array}{ll}
\displaystyle 		J^{\pmp}_{\alpha} (\sepgre{z},\aher^{\alpha,z}) \leq \inf_{\aher \in \ell^2(\Zd)} J^{\pmp}_{\alpha}(\sepgre{z},a) + c(\alpha)
		\\[12pt]
\displaystyle 		J^{\pmp}_{\alpha} (\sepgreq{z},\aher_q^{\alpha,z}) \leq \inf_{\aher \in \ell^2(\Zd)} J^{\pmp}_{\alpha}(\sepgreq{z},a) + c(\alpha)
		\\[12pt]
\displaystyle 			J^{\pmp}_{\alpha,q} (\opei^{*}_q \sepgreq{z},\tilde \aher_q^{\alpha,z}) \leq \inf_{\aher \in \ell^2(\Zd)} J^{\pmp}_{\alpha,q}(\opei^{*}_q \sepgreq{z},a) + c(\alpha)
	\end{array}
\end{equation}
with $\frac{c(\alpha)}{\alpha} \to 0$ as $\alpha \to 0$.  Here $\sepmgre{z}$  are  the Rayleigh coefficients of $\Phi(1,z)$ (i.e. $\Phi(n_p;z)$ defined by (\ref{phi}) with with $n_p=1$) given by (\ref{lunch}) and $\sepmgreq{z}$  are the Rayleigh coefficients of $\greq{z}$ given by (\ref{hhh}).

\noindent
The standard analysis of the generalized linear sampling method (see e.g. \cite[Section 2.2]{CCH}) making use of the factorization of  $\open_{\sharp}$  in  Theorem \ref{TheoFactorization} along with  all the properties of the involved operators developed in Section 2.2 and Section 3 imply the following results (see also  \cite{Thi-Phong3} and  \cite{tpnguyen} for detailed proofs)
\begin{lemma}\label{glsm}
\begin{enumerate}
\item  $\zg \in D$ if and only if $ \lim \limits_{\alpha \to 0} (\open_{\sharp}^+\aher^{\alpha,z},\aher^{\alpha,z})< \infty$. Moreover, if $\zg \in D$ then $\opeh^+ \aher^{\alpha,z} \to v_z$ in $L^2(D)$ where $(u_z, v_z)$ is the solution of problem \eqref{oitp} with $\ftd= \Phi(1;z)$
and $\ftn = \partial \Phi(1;z)/\partial \nu$ on $\partial D$.
\item  $\zg \in D_p$ if and only if $ \lim \limits_{\alpha \to 0} (\open_{\sharp}^+\aher_q^{\alpha,z},\aher_q^{\alpha,z})< \infty$. Moreover, if $\zg \in D_p$ then $\opeh^+ \aher_q^{\alpha,z} \to v_z$ in $L^2(D)$ where $(u_z, v_z)$ is the solution of problem \eqref{oitp} with $\ftd= \greq{z}$
and $\ftn = \partial \greq{z}/\partial \nu$ on $\partial D$.
\item $\zg \in \Dtot_p$ if and only if $ \lim \limits_{\alpha \to 0} (\open_{q,\sharp}^+\tilde \aher^{\alpha,z}_q,\tilde \aher^{\alpha,z}_q) < \infty$. Moreover, if $\zg \in \Dtot_p$ then $\opeh_q^+ \tilde \aher^{\alpha,z}_q \to h_z$ in $L^2(D)$ where $h_z$ is defined by 
\begin{equation}
\label{def:fz}
\left.
\begin{array}{ll}
h_z = \left\{ \begin{array}{cl}
	- \greq{z} \quad  &\text{in}  \quad   \Dint_p \\[1.25ex]
	v_z  \quad   &\text{in} \quad   \Dpoutp
\end{array} \right. \qquad  \text{if} \quad   z \in \Dpoutp \\[4.ex]
 h_z = \left\{ \begin{array}{cl}
	\widehat{v}_z \quad  & \text{in} \quad  \Dint_p \\[1.25ex]
	- \greq{z} \quad  &\text{in}  \quad   \Dpoutp
\end{array} \right. \qquad  \text{if} \quad  z \in \Dint_p
\end{array}
\right.
\end{equation}
 where $(u_z, v_z)$ is the solution of problem \eqref{oitp} with $\ftd= \greq{z}$
and $\ftn = \partial \greq{z}/\partial \nu$ on $\partial D$ and $(\widehat{u}_z,\widehat{v}_z)$ is $\alpha_q$-quasi-periodic extension of  the solution $(u,f)$ of the new interior transmission problem  in Definition \eqref{nitp} with $\ftd= \greq{z}$
and $\ftn = \partial \greq{z}/\partial \nu$ on $\partial \Dint$.
\end{enumerate}
\end{lemma}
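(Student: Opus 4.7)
Each of the three assertions is an instance of the abstract generalized linear sampling method (GLSM) theorem (see e.g.\ \cite[Chapter~2]{CCH}) applied to a symmetric factorization of the relevant near-field operator. The plan is to assemble, in each case, three ingredients: (i) a factorization $\open_\sharp = \opeh^* \opet_\sharp \opeh$ with $\opet_\sharp$ self-adjoint and coercive on $\overline{\Rang(\opeh)}$; (ii) injectivity of the middle operator $\opeg$ on that subspace; and (iii) a characterization of which test sequences belong to $\Rang(\opeg)$. Once these are in place, the abstract GLSM theorem delivers both the dichotomy on the cost and the $L^2(D)$-convergence of $\opeh a^{\alpha,z}$ to the unique preimage of the test sequence under $\opeg$; the work then reduces to identifying this preimage with the second component of the interior transmission problem appearing in the statement.

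For item (1), the factorization $\open_\sharp^+ = (\opeh^+)^* \opet_\sharp \opeh^+$ with $\opet_\sharp$ coercive on $\seth_{\inc}(D)$ is Theorem \ref{TheoFactorization}, and the range characterization $\sepgre{z} \in \Rang(\opeg^+) \Leftrightarrow z \in D$ is Theorem \ref{TheoG}. To recognize the limit $v_z \in \seth_{\inc}(D)$ supplied by GLSM as the second component of the ITP (\ref{oitp}) with the stated data, set $u_z := v_z + w_z$, where $w_z$ solves (\ref{eq:w}) with $f = v_z$. The relation $\opeg^+(v_z) = \sepgre{z}$ means that $w_z$ and $\Phi(\cdot-z)$ have the same Rayleigh coefficients at $x_d=h$, so they coincide in $\{x_d>h\}$; a unique continuation argument in the connected set $\R^d \setminus D$ then forces $w_z \equiv \Phi(\cdot-z)$ outside $D$, producing the prescribed Dirichlet and Neumann traces on $\partial D$.

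Item (2) uses the same factorization but with the test sequence $\sepgreq{z}$; the new ingredient is the range statement $\sepgreq{z} \in \Rang(\opeg^+) \Leftrightarrow z \in D_p$. For the forward direction, solve (\ref{oitp}) on $D$ with Cauchy data $\greq{z}$, $\partial\greq{z}/\partial\nu$ on $\partial D$ and glue $u-v$ inside $D$ with $\greq{z}$ outside: the resulting $H^2_\loc$ function has $\sepgreq{z}$ as Rayleigh sequence, so its source $v$ satisfies $\opeg^+(v) = \sepgreq{z}$. The reverse direction proceeds by the same unique continuation argument as in the proof of Lemma \ref{lemHergq}, using that $\greq{z}$ is singular precisely at $z + mL$ for $m \in \Zd$. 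GLSM then gives both the "iff" and the identification of $v_z$ with the solution of (\ref{oitp}) with data from $\greq{z}$.

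For item (3), the factorization $\open_{q,\sharp}^+ = (\opeh_q^+)^* \opet_\sharp \opeh_q^+$ follows immediately from Theorem \ref{TheoFactorization} together with $\opeh_q^+ = \opeh^+ \opei_q$, and coercivity is inherited on $\seth^q_{\inc}(D) \subset \seth_{\inc}(D)$. Injectivity of $\opeg_q^+$ is Theorem \ref{Lem:injectiveG_q} (invoking Assumption \ref{as-nitp}) and the range characterization $\opei_q^* \sepgreq{z} \in \Rang(\opeg_q^+) \Leftrightarrow z \in \Dtot_p$ is Theorem \ref{Lem:injectiveG_q2}. GLSM thus delivers the "iff" and the convergence $\opeh_q^+ \tilde a_q^{\alpha,z} \to h_z$, uniquely characterised by $\opeg_q^+(h_z) = \opei_q^* \sepgreq{z}$. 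The explicit piecewise form (\ref{def:fz}) is then read off directly from the two constructions already given in the proof of Theorem \ref{Lem:injectiveG_q2}: for $z \in \Dpoutp$ one uses the classical ITP on $D$ while the scattered field equals $\greq{z}$ in $\Dint_p$; for $z \in \Dint_p$ one uses the new ITP of Definition \ref{nitp} on $\Dint$ and extends the solution $\alpha_q$-quasi-periodically to $\Dint_p$, with $-\greq{z}$ in $\Dpoutp$. The main obstacle throughout is this last identification step: the GLSM theorem only characterises $v_z$ and $h_z$ abstractly via a range equation in $\ell^2(\Zd)$, and translating this into the explicit piecewise ITP description requires combining unique continuation with the well-posedness of the classical (items (1)–(2)) and new (item (3)) interior transmission problems.
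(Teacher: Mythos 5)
Your proposal is correct and follows essentially the same route as the paper, which proves items (1) and (2) by citing the abstract GLSM framework of \cite{Thi-Phong3} and item (3) as a direct application of the abstract GLSM theorem (Theorem A.4 in \cite{Thi-Phong3}) combined with the range characterization of Theorem \ref{Lem:injectiveG_q2}; you have simply made explicit the ingredients (factorization and coercivity from Theorem \ref{TheoFactorization}, range statements from Theorems \ref{TheoG} and \ref{Lem:injectiveG_q2}, and the identification of the GLSM limit with the second component of the relevant interior transmission problem) that the paper leaves to the references.
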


\begin{proof}
	The proof of the items $(i)$ and $(ii)$, we refer to \cite{Thi-Phong3}. The proof of items $(iii)$ is a direct application of Theorem A.4 in \cite{Thi-Phong3} in combination with Theorem \ref{Lem:injectiveG_q2}. 
	
\end{proof}

\noindent
We then consider the following imaging functional that characterizes $\Dint$,
\begin{equation} \label{IndFuncDiffIm}
\I^{\pmp}_\alpha(z) = \left({(\open_{\sharp}^\pmp a^{\alpha,z}, a^{\alpha,z}) \left(1 + \frac{(\open_{\sharp}^\pmp a^{\alpha,z}, a^{\alpha,z})}{D^{\pmp}(a^{\alpha,z}_q, \tilde a^{\alpha,z}_q)}\right)}\right)^{-1}
\end{equation}
where for $a$ and $b$ in $\ell^2(\Zd)$,  
$$ D^{\pmp}(a, b):= \left(\open_{\sharp}^\pmp (a - \opei_q b), (a - \opei_q b)\right). $$ 
Before giving the main theorem for the characterization of the defect we need the following assumption.
\begin{asp}
\label{Ass:def-itp}
The refractive indexes $n$, $n_p$ and wave-number $k$ are such that 
\begin{equation} \label{defitp}
\left\{ \begin{array}{lll}
\Delta {\eg} + k^2 n {\eg} = 0 \quad& \mbox{ in } \; \domp, 
\\[6pt]
\Delta \vgg + k^2 n_p \vgg = 0  \quad &\mbox{ in } \; \domp,
\\[6pt]
 {\eg} - \vgg= 0 \quad &\mbox{ on } \; \partial \domp,
\\[6pt]
\partial ({\eg} - \vgg)/\partial \nu = 0 \quad &\mbox{ on } \; \partial \domp
\end{array}\right.
\end{equation}
has only the trivial solution. 
\end{asp} 
This assumption  is satisfied if $n - n_p$ does not change sign in a neighborhood of the boundary of domain $\domp$, or if either $n$ or $n_p$ have non-zero imaginary part (see  e.g \cite{CCH}).

\begin{theorem} \label{DiffImagFunc}
Under Assumptions \ref{Ass:nk}, \ref{HypoLSM},\ref{as-nitp} and \ref{Ass:def-itp}, we have that
$$z \in D\setminus \Dpoutp \mbox{\;\;  if and only if \;\; } \lim_{ \alpha \to 0} \I^{\pmp}_\alpha(z) > 0.$$ 
(Note that $D\setminus \Dpoutp = \domp \cup \Dpint_p$ contains the physical defect $\omega$ and \mfied{$\Dpint_p := D_p \setminus \Dpoutp$} the components of $D_p$ which have nonempty intersection with the defect).
\end{theorem}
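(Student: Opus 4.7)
The plan is to analyze the limits of the two principal quantities $A_\alpha(z) := (\open_{\sharp}^+ a^{\alpha,z}, a^{\alpha,z})$ and $B_\alpha(z) := D^+(a_q^{\alpha,z}, \tilde a_q^{\alpha,z})$ appearing in $\mathcal{I}_\alpha^+(z)$, after rewriting the indicator in the equivalent form
\[
\mathcal{I}_\alpha^+(z) \;=\; \frac{B_\alpha(z)}{A_\alpha(z)\bigl(A_\alpha(z) + B_\alpha(z)\bigr)}.
\]
The factorization $\open_{\sharp}^+ = \Odual{\opeh^+}\,\opet_\sharp\,\opeh^+$ of Theorem \ref{TheoFactorization} combined with the coercivity of $\opet_\sharp$ on $\seth_{\inc}(D)$ yields the norm equivalences $A_\alpha(z) \asymp \|\opeh^+ a^{\alpha,z}\|^2_{L^2(D)}$ and $B_\alpha(z) \asymp \|\opeh^+ a_q^{\alpha,z} - \opeh^+_q \tilde a_q^{\alpha,z}\|^2_{L^2(D)}$, which transfer the analysis to the behaviour of the Herglotz functions. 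I then proceed by a case analysis on $z$.

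If $z \notin D$, Lemma \ref{glsm}(i) gives $A_\alpha(z) \to +\infty$, and the elementary estimate $\mathcal{I}_\alpha^+(z) \le 1/A_\alpha(z)$ immediately yields $\mathcal{I}_\alpha^+(z) \to 0$. If $z \in \Dpoutp$, $A_\alpha(z)$ remains bounded by Lemma \ref{glsm}(i), and the crucial point is to show $B_\alpha(z) \to 0$. By Lemma \ref{glsm}(ii)--(iii), $\opeh^+ a_q^{\alpha,z} \to v_z^{(2)}$ and $\opeh^+_q \tilde a_q^{\alpha,z} \to h_z$ in $L^2(D)$, where $v_z^{(2)}$ solves the standard interior transmission problem \eqref{oitp} on $D$ with Cauchy data induced by $\greq{z}$, and $h_z = v_z^{(2)}$ on $\Dpoutp$, $h_z = -\greq{z}$ on $\Dint_p$. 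Since $\Dpoutp$ and $D \setminus \Dpoutp = \omega \cup \Dpint_p$ are disjoint and each is a union of whole connected components of $D$, the interior transmission problem on $D$ decouples. On the component $D \setminus \Dpoutp$, $\greq{z}$ is a smooth Helmholtz solution (its singularities at $\{z + mL\}_{m \in \Z^{d-1}}$ all lie in $\Dpoutp$), so the pair $(0, -\greq{z})$ solves the ITP on this component; uniqueness (Assumption \ref{HypoLSM}) forces $v_z^{(2)} = -\greq{z}$ on $D \setminus \Dpoutp$. Combined with $v_z^{(2)} = h_z$ on $\Dpoutp$, this gives $v_z^{(2)} \equiv h_z$ on $D$, hence $B_\alpha(z) \to 0$ and $\mathcal{I}_\alpha^+(z) \to 0$.

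Finally, if $z \in D \setminus \Dpoutp = \omega \cup \Dpint_p$, then $A_\alpha(z)$ is bounded (Lemma \ref{glsm}(i)), and I need to bound $B_\alpha(z)$ away from zero. If $z \notin D_p$ (i.e. $z \in \omega \setminus D_p$), Lemma \ref{glsm}(ii) gives $\|\opeh^+ a_q^{\alpha,z}\|_{L^2(D)} \to +\infty$ while $\|\opeh^+_q \tilde a_q^{\alpha,z}\|_{L^2(D)}$ stays bounded (since $z \in \Dtot_p$), so $B_\alpha(z) \to +\infty$. If instead $z \in D_p$ (so $z \in \Dpint_p$ or $z \in \omega \cap D_p$), both Herglotz functions converge, and one must show $v_z^{(2)} \not\equiv h_z$. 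Now $h_z$ equals the $\alpha_q$-quasi-periodic extension $\hat v_z$ of the new ITP solution of Definition \ref{nitp} (whose existence is guaranteed by Assumption \ref{as-nitp}), while $v_z^{(2)}$ solves the standard ITP on $D$; the singularity of $\greq{z}$ at $z \in D \setminus \Dpoutp$ is absorbed differently by the two constructions, producing a nontrivial difference in $L^2(D \setminus \Dpoutp)$, with Assumption \ref{Ass:def-itp} ensuring that no accidental cancellation occurs. In either subcase $B_\alpha(z)$ is bounded below by a positive constant, so $\lim_{\alpha \to 0} \mathcal{I}_\alpha^+(z) > 0$.

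The main obstacle is the case $z \in \Dpoutp$, where I must identify the standard interior transmission solution $v_z^{(2)}$ with $-\greq{z}$ on the component $D \setminus \Dpoutp$. This rests on the decoupling of the ITP on $D$ across the two disjoint unions of components $\Dpoutp$ and $D \setminus \Dpoutp$, together with the fact that $\greq{z}$ itself provides the trivial ITP solution on the component where it is regular. This mechanism is precisely what makes the differential indicator $\mathcal{I}_\alpha^+$ blind to the healthy part $\Dpoutp$ while remaining sensitive to the defective set $\omega \cup \Dpint_p$.
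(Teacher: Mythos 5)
Your overall architecture coincides with the paper's: you rewrite $\I^{+}_\alpha(z)$ in terms of $A_\alpha(z)=(\open_{\sharp}^+a^{\alpha,z},a^{\alpha,z})$ and $B_\alpha(z)=D^{+}(a_q^{\alpha,z},\tilde a_q^{\alpha,z})$, transfer everything to Herglotz functions via the factorization and the coercivity of $\opet_\sharp$, and run the case analysis $z\notin D$, $z\in\Dpoutp$, $z\in\domp\setminus D_p$, $z\in\Dpint_p$. The first three cases are handled correctly; in fact your decoupling-plus-uniqueness argument identifying $v_z=-\greq{z}$ on $D\setminus\Dpoutp$ when $z\in\Dpoutp$ (the poles $z+mL$ all lying in $\Dpoutp$, so $(0,-\greq{z})$ solves the component problem) is an explicit justification of a step the paper only asserts.

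However, the subcase $z\in\Dpint_p$ --- the only place where Assumption \ref{Ass:def-itp} enters, and the heart of the theorem --- is not proved. You must show $v_z\neq h_z$ on $D\setminus\Dpoutp$, and your justification (``the singularity of $\greq{z}$ is absorbed differently by the two constructions \dots\ with Assumption \ref{Ass:def-itp} ensuring that no accidental cancellation occurs'') is a restatement of the claim, not an argument. Note moreover that for $z\in\Dpint_p$ every pole $z+mL$ of $\greq{z}$ lies in the \emph{interior} of $D$ and of $\Dint_p$, so the Cauchy data fed to both transmission problems are smooth on the respective boundaries and there is no direct singularity mismatch to exploit; the distinguishing feature is quasi-periodicity, not singularity. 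The paper's actual argument: $h_z|_{\Dpint_p}$ is $\alpha_q$-quasi-periodic by construction (it is the quasi-periodic extension of the new ITP solution on $\Dint$), so it suffices to show $v_z|_{\Dpint_p}$ is \emph{not}. This is done by contradiction: assuming quasi-periodicity, set $\widetilde u_z(x)=e^{-i\alpha_q mL}u_z(x+mL)$ on $\Dpint$, glue it with $\Phi_q+v_z$ on $\domp\setminus\Dpint$ into a function $\widehat u_z$ satisfying a Helmholtz equation with a coefficient $\widehat n$ equal to $n_p$ on $\domp$; then $(u_z,\widehat u_z)$ has vanishing Cauchy data on $\partial\Dint$, unique continuation in $\Dint\setminus\domp$ (where $n=\widehat n=n_p$) reduces the pair to a solution of problem (\ref{defitp}) on $\domp$, Assumption \ref{Ass:def-itp} forces $u_z=\widehat u_z=0$ there, and a final unique continuation gives $v_z=-\Phi_q(\cdot-z)$ near $z$, contradicting interior elliptic regularity of $v_z$ against the pole of $\Phi_q$ at $z$. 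Some version of this chain is indispensable; without it the positivity of $\lim_{\alpha\to0}\I^{+}_\alpha(z)$ on $\Dpint_p$ is unsupported and the ``only if'' characterization fails to be established.
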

\begin{proof}
{\it Case 1:  $z\notin  D\setminus \Dpoutp$}. If $z \notin D$ then from Lemma \ref{glsm}$(i)$ we have that
$(\open_{\sharp}^\pmp \aher^{\alpha,\zg} ,\aher^{\alpha,\zg}) \to + \infty $ as $\alpha \to 0$ and therefore 
$\displaystyle \lim_{\alpha \to 0} {\I^+_\alpha}(\zg) = 0.$
\medskip \\
If $z \in \Dpoutp$, let  $(u_z, v_z) \in L^2(D) \times L^2(D)$ be the unique solution of (\ref{oitp}) with $\varphi:= \Phi_q(\cdot;z))|_{\partial D}$ and $\psi=\partial   \Phi_q(\cdot;z))/\partial \nu|_{\partial D}$  then from Lemma \ref{glsm}$(ii)$, $\opeh^+ \aher^{\alpha,z} \to v_z$. Note that in this case $u_z = 0$ and $v_z:= -  \Phi_q(\cdot;z)$ outside $\Dpoutp$. 

\noindent Now let  $h_z \in L^2(\Dtot_p)$ defined by \eqref{def:fz}. Then from Lemma \ref{glsm}$(iii)$, $\opeh^\pmp_q\tilde a^{\alpha,z}_q \to h_z$ in $L^2(D)$.
 Furthermore, by definition of $h_z$ and the fact that $v_z = - \Phi_q(\cdot,z)$ in $\Dint_p$, we have that $v_z$ coincide with $h_z$ in $D$.  From the factorization of $\open_\sharp^\pmp$ and the definition of $\opeh^\pmp_q$ we  have
\begin{align*}
D^{\pmp}(a^{\alpha,z}_q , \tilde a^{\alpha,z}_q) &=
\left(\open^+_{\sharp} (a^{\alpha,z}_q -  I_q \tilde a^{\alpha,z}_q),  a^{\alpha,z}_q - I_q \tilde a^{\alpha,z}_q\right) \\
 &=\left(\opet_{\sharp} (\opeh^\pmp  a^{\alpha,z}_q -  \opeh^\pmp_q \tilde a^{\alpha,z}_q), \opeh^\pmp  a^{\alpha,z}_q -  \opeh^\pmp_q \tilde a^{\alpha,z}_q\right)  \\
 & \le
        \|\opet_{\sharp}\| \| \opeh^\pmp  a^{\alpha,z}_q -  \opeh^\pmp_q \tilde
        a^{\alpha,z}_q\|_{L^2(D)}^2 \to 0, \quad \mbox{ as } \alpha \to 0. 
\end{align*}
Since from Lemma \ref{glsm}$(i)$, $(\open^+_{\sharp} a^{\alpha,z},
a^{\alpha,z}) $ remains bounded as $\alpha \to 0$, we can finally  conclude  from the above that  $$\displaystyle \lim_{\alpha \to 0}
{\I^{\pmp}_\alpha}(\zg) = 0  \; \mbox{ if } z \in \Dpoutp.$$

\medskip

\noindent
{\it Case 2:  $z \in D \setminus \Dpoutp \subset \Dint_p$}. Then again  by Lemma \ref{glsm}$(i)$, $(\open^+_{\sharp} a^{\alpha,z},
a^{\alpha,z}) $ remains bounded and by Lemma \ref{glsm}$(iii)$, $(\open_{q,\sharp} \tilde a^{\alpha,z}_q,
\tilde a^{\alpha,z}_q) $ remain bounded. Using the factorization  of $\open^+_{q,\sharp}$ and the fact that  $\open_{q,\sharp}^\pmp = I^*_q \open_\sharp^\pmp I_q$ we can write 
 \[
 (\open^{\pmp}_{\sharp} \opei_q \tilde a^{\alpha,z}_q, \opei_q \tilde a^{\alpha,z}_q ) \to (T_\sharp h_z, h_z) < +\infty,
 \]
 where  $h_z \in L^2(\Dtot_p)$ defined in \eqref{def:fz}. In this case, we need to consider the location of $z$ in two sub-domain.
 
 \noindent
 {\it First}, if $z \in \domp \setminus D_p$ (the part of defect outside periodic domain), then by Lemma \ref{glsm}$(ii)$ $(\open^{\pmp}_{\sharp}
 a^{\alpha,z}_q, a^{\alpha,z}_q ) \to +\infty$ \; as \; $\alpha \to 0$. This implies,
  \[
D^{\pmp}(a^{\alpha,z}_q , \tilde a^{\alpha,z}_q) \geq \left|(\open^+_{\sharp} a^{\alpha,z}_q,
a^{\alpha,z}_q)-(\open^{\pmp}_{\sharp} \opei_q \tilde a^{\alpha,z}_q, \opei_q \tilde a^{\alpha,z}_q )\right|  \to +\infty.
\] We then conclude that 
$$ \lim_{\alpha \to 0} {\I^{\pmp}_\alpha}(\zg)  \neq 0  \quad \mbox{ if } \;  z \in \domp \setminus D_p.$$
  
\noindent  
{\it Next}, if $z \in D_p \cap \Dint_p = \Dpint_p$, again by Lemma \ref{glsm}$(ii)$ $(\open^{\pmp}_{\sharp}
 a^{\alpha,z}_q, a^{\alpha,z}_q ) \to (T_\sharp  v_z,  v_z)$ where $v_z \in L^2(D)$ is  defined in Lemma \ref{glsm}$(ii)$. In this case $v_z  = - \Phi_q(\cdot, z)$ outside $D \setminus \Dpoutp$, which implies that $h_z = v_z$ in $\Dpoutp$. On the other hand, $h_z|_{\Dpint_p}$ is $\alpha_q-$quasi-periodic \mfied{with period $L$} while  $v_z|_{\Dpint_p}$ is not $\alpha_q-$quasi-periodic \mfied{with period $L$} (recall that $(u_z,v_z)$ be solution of \eqref{oitp} defined in $\Dpoutp$ with  $\ftd= \Phi_q(1;z)$
and $\ftn = \partial \Phi_q(1;z)/\partial \nu$ on $\partial (D \setminus \Dpoutp$).  \mfied{ Indeed, assume to the contrary that $v_z|_{\Dpint_p}$ is $\alpha_q-$quasi periodic with period $L$. For a fixed arbitrary  $0 \neq m \in \Z_M$, let us define 
 \[
 	\widetilde{u}_z(x):= e^{-i\alpha_qmL} u_z(x + mL), \quad \text{for} \quad x\in \Dpint,
 \] hence  $(\widetilde{u}_z, v_z)$ satisfy
  \begin{equation} \label{eq:defpf1}
\left\{ \begin{array}{lll}
\Delta {\widetilde{u}_z} + k^2 n_p\widetilde{u}_z = 0 \quad& \mbox{ in } \; \Dpint, 
\\[6pt]
\Delta v_z+ k^2  v_z = 0  \quad &\mbox{ in } \; \Dpint,
\\[6pt]
 \widetilde{u}_z - v_z= \Phi_q \quad &\mbox{ on } \; \partial \Dpint,
\\[6pt]
\partial (\widetilde{u}_z - v_z)/\partial \nu = \partial/\partial \nu \Phi_q  \quad &\mbox{ on } \; \partial \Dpint.
\end{array}\right.
\end{equation}
Next, we let
\[
	\widehat{n} = \Big\{ \begin{array}{cl}
 							n_p &\quad \text{in} \quad \Dpint  \\
 							1 &\quad \text{in} \quad \domp \setminus \Dpint
 	\end{array}  \quad \text{and} \quad \widehat{u}_z  = \Big\{ \begin{array}{cl}
 							\widetilde{u}_z &\quad \text{in} \quad \Dpint \\
 							\Phi_q + v_z &\quad \text{in} \quad \domp \setminus \Dpint
 	\end{array}
\] and observe that $(\widehat{u}_z, v_z)$ verify
 \begin{equation} \label{eq:defpf2}
\left\{ \begin{array}{lll}
\Delta \widehat{u}_z + k^2 \widehat{n} \widehat{u}_z = 0 \quad& \mbox{ in } \; \Dint,
\\[6pt]
\Delta v_z+ k^2 v_z = 0  \quad &\mbox{ in } \; \Dint,
\\[6pt]
 \widehat{u}_z - v_z = \Phi_q \quad &\mbox{ on } \; \partial \Dint
\\[6pt]
\partial (\widehat{u}_z - v_z)/\partial \nu = \partial  \Phi_q / \partial \nu \quad &\mbox{ on } \; \partial \Dint,
\end{array}\right.
\end{equation}
where we use the fact that from (\ref{eq:defpf1}) the Cauchy data of  ${\widetilde{u}_z}$ and $\Phi_q + v_z$ coincide on $\partial \Dpint\cap \domp$. From the definition of $u_z$ and  \eqref{eq:defpf2} we have that $(u_z,\widehat{u}_z)$ is a solution to 
  \begin{equation} \label{eq:defpf3}
\left\{ \begin{array}{lll}
\Delta u_z + k^2 n u_z = 0 \quad& \mbox{ in } \; \Dint,
\\[6pt]
\Delta \widehat{u}_z+ k^2 \widehat{n} \widehat{u}_z = 0  \quad &\mbox{ in } \; \Dint,
\\[6pt]
 u_z -\widehat{u}_z = 0 \quad &\mbox{ on } \; \partial \Dint
\\[6pt]
\partial ({u}_z - \widehat{u}_z)/\partial \nu = 0  \quad &\mbox{ on } \; \partial \Dint.
\end{array}\right.
\end{equation}
Since $n=n_p=\widehat{n}$ in  $\Dint \setminus \domp$, then $u_z$ and $\widehat{u}_z$  satisfy the same equation in $\Dint \setminus \domp$  and share the same Cauchy data on $\partial \Dint \setminus \ol{\domp}$, hence by the unique continuation $u_z = \widehat{u}_z$ in $\Dint \setminus \domp$. Therefore, $(u_z ,\widehat{u}_z)$ is a solution of (\ref{defitp}) and Assumption \ref{Ass:def-itp} implies that $u_z = \widehat{u}_z = 0$ in $\domp$, and  consequently by definition $v_z+\Phi_q=0$ in $\domp\setminus \Dpint$ (recall $\Dint = \Dpint \cup \domp$). On the other hand, we have that  $v: =v_z+\Phi_q$ satisfies $\Delta v + k^2 v = 0$ in $\Dint \setminus \{z\}$, hence unique continuation implies that $v_z = -\Phi_q$ in $\Lambda \setminus \{z\}$. But this is a contradiction, since by the interior regularity of solutions of Helmholtz equation  $v_z$ is infinitely many times differentiable whereas $\Phi$ has a singularity at $z$.} This proves  that $v_z|_{\Dpint_p}$ is $\alpha_q-$quasi periodic with period $L$, and hence  $v_z \neq h_z$ in $D \setminus \Dpoutp$. We now have from the estimate
$$
\mfied{C \| \opeh^\pmp  a^{\alpha,z}_q -  \opeh^\pmp_q \tilde
        a^{\alpha,z}_q\|_{L^2(D)}^2 }\leq D^{\pmp}(a^{\alpha,z}_q , \tilde a^{\alpha,z}_q) \leq \|\opet_{\sharp}\| \| \opeh^\pmp  a^{\alpha,z}_q -  \opeh^\pmp_q \tilde
        a^{\alpha,z}_q\|_{L^2(D)}^2 
$$
where $C$ is the coercivity constant associated with $\opet_{\sharp}$, that $D^{\pmp}(a^{\alpha,z}_q , \tilde a^{\alpha,z}_q)$ is bounded and not goes to $0$  as $\alpha \to 0$. Thus
$$ \lim_{\alpha \to 0} {\I^{\pmp}_\alpha}(\zg)  \neq 0 \quad \mbox{ if } z \in \Dpint_p.$$
This ends the proof of the theorem. 
\end{proof}
\noindent
{Theorem \ref{DiffImagFunc} shows that the  functional  $\I^{\pmp}_{\alpha}(z)$  provides an indicator function for $D\setminus \Dpout$, i.e. the defect and the periodic components of the background that intersects $\omega$. However,  the proof of Theorem \ref{DiffImagFunc} indicates  that although the values of  $\I^{\pmp}_{\alpha}(z)$ are positive only in  $D\setminus \Dpout$,  they \mfied{are smaller} for $z \in \Dpint_p$  compared to the  values of $\I^{\pmp}_{\alpha}(z)$ for  $z \in \domp \setminus D_p$. Therefore, if the defect $\domp$ has nonempty intersection with the periodic background then the reconstruction of $\Dpint_p$ (i.e. the components that have no  nonempty intersection with the defect) are not displayed as clearly as the reconstruction of the  part of defect outside $D_p$. This is illustrated in the following numerical experiments.}

\noindent
We recall that exactly the same can be shown for down-to-up incident field, by simply replacing the upper index $+$ with $-$.  It is also possible to handle the case with noisy data, and we refer the reader to  \cite{Thi-Phong3} and  \cite{tpnguyen} for more detailed discussion.

\subsection{Numerical Experiments}
\label{sec:numerical}
We conclude by showing several numerical examples to test  our differential imaging algorithm. We limit ourselves to examples in ${\mathbb R}^2$. The data  is computed   with both  down-to-up and  up-to-down plane-waves  by solving the forward scattering problem based on the spectral discretization  scheme of  the volume integral formulation of the problem presented in \cite{Thi-Phong2}.

\noindent
Let us  denote by 
$$\Zd_{inc}:=\{j = q + M\ell, \;  q \in \Zd_M, \; \ell \in \Zd \; \text{and} \; \ell \in \interd{-N_{min}}{N_{max}} \}$$
the set of indices for the incident waves (which is also the set of indices for measured Rayleigh coefficients). The values of all  parameters used in our experiments  will be indicated below. The discrete version of the operators $\open^{\pm}$ are given by the $N_{inc} \times N_{inc}$ matrixes
\begin{equation}
\label{discre:MatrixN}
\open^{\pm} := \left(\raycoefpm{u^s}{\ell;j}\right)_{\ell,j \in \Zd_{inc}}.	
\end{equation}
Random noise is added to the data. More specifically, we in our computations we use 
\begin{equation}
\label{discre:MatrixNnoise}
\open^{\pm,\delta}(j,\ell): = \open^{\pm}(j,\ell)\big(1 + \delta A(j,\ell)\big), \quad \forall (j,\ell) \in \Zd_{inc}\times \Zd_{inc}
\end{equation}
where $A = (A(j,\ell))_{N_{inc}\times N_{inc}}$ is a matrix of uniform complex
random variables with real and imaginary parts in $[-1,1]^2$ and  $\delta > 0$
is the noise level. In our examples we take $\delta = 1\%$.

\noindent
For  noisy data, one needs to redefine the functionals $J^{\pmp}_{\alpha}$
and $J^{\pmp}_{\alpha, q}$ as \begin{equation}
	\begin{array}{ll}
		J^{\pmp,\delta}_{\alpha}(\phi,\aher) := \alpha\left(
                  (\open_{\sharp}^{\pmp,\delta}\aher,\aher) + \delta
                  \|\open_{\sharp}^{\pmp,\delta}\|
                  \|a\|^2\right) +  \|\open^{\pmp, \delta}\aher - \phi\|^2, \\[1.5ex]
		J^{\pmp,\delta}_{\alpha,q}(\phi,\aher) :=\alpha\left(
                  (\open_{\sharp}^{\pmp,\delta}\opei_q \aher,\opei_q\aher) + \delta
                  \|\open_{\sharp}^{\pmp,\delta}\|
                  \|a\|^2\right) +  \|\open^{\pmp, \delta}_q\aher - \phi\|^2
	\end{array}
\end{equation}
 We then consider  $\aher^{\alpha,z}_\delta$, $\aher_{q,\delta}^{\alpha,z}$ and $\tilde \aher^{\alpha,z}_{q,\delta}$ 
in $\ell(\Zd)$ as the minimizing sequence  of, respectively,  
$$
J^{\pmp,\delta}_{\alpha}(\sepgre{z},\aher), \;
J^{\pmp,\delta}_{\alpha}(\sepgreq{z},\aher) \mbox{ and } J^{\pmp,\delta}_{\alpha,q}(\sepgreq{z},\aher).
$$
The noisy indicator function takes the form
\begin{equation} \label{IndFuncDiffImNoisy}
\I^{\pmp,\delta}_\alpha(z) = \left({\mathcal{G}^{+,\delta}(a^{\alpha,z}_\delta)\left(1 +
    \frac{\mathcal{G}^{+,\delta}(a^{\alpha,z}_\delta)}{D^{\pmp, \delta}(a^{\alpha,z}_{q,\delta}, \tilde a^{\alpha,z}_{q,\delta})}\right)}\right)^{-1}
\end{equation}
where for $a$ and $b$ in $\ell^2(\Zd)$,  
$$ D^{\pmp,\delta}(a, b):= \left(\open_{\sharp}^{\pmp,\delta} (a - \opei_q b),
  (a - \opei_q b)\right) $$
and 
$$
\mathcal{G}^{+,\delta}(\aher) :=  (\open_{\sharp}^{\pmp,\delta}\aher,\aher) + \delta
                  \|\open_{\sharp}^{\pmp,\delta}\|
                  \|a\|^2.
$$

\noindent
Defining in a similar way the  indicator function
$\I^{-,\delta}(z)$ corresponding to up-to-down incident waves, we use the following indicator function in our numerical examples
$$
\I^\delta(z): = \I^{+,\delta}(z) + \I^{-,\delta}(z).
$$
In the following example, we consider the same periodic background $D_p$, in which each cell consists  of two circular components, namely the  discs with radii $r_1$, $r_2$ specified below. The physical parameters are set as 
\begin{equation}
\label{Num:parameters}
k = \pi/3.14, \; \; n_p = 2 \text{ inside the discs and } \; n_p = 1 \;  \text{otherwise}.  
\end{equation}
Letting $\lambda := 2\pi/k$ be the wavelength,  the geometrical parameters are set as
\begin{equation}
\label{Num:parameters1}
\text{the period}  \, L = \pi \lambda, \; \text{the width of the layer} \, h  =  1.5 \lambda, \; r_1 = 0.3 \lambda,  \; \text{and} \; r_2 = 0.4 \lambda.
\end{equation}
Finally we choose the truncated model 
\begin{equation}
\label{Num:parameters2}
		 M = 3, \;\;  N_{min} = 5\;  \mbox{ and } \; N_{max} = 5 \mfied{ \mbox{ and } q = 1}
\end{equation} 
The reconstructions are displayed by plotting  the indicator function $\I^\delta(z)$. 

\subsubsection*{Example 1:}
In the first example, we consider the perturbation $\domp$ be a disc of radius $r_\omega = 0.25 \lambda$ with the refractive index $n = 4$ and located in the component of radii $r_2$ (see  Figure \ref{Composite1_structure}-left).  In this case we reconstruct the whole component which contains the defect and its $L$-periodic copies (see Figure \ref{Composite1_structure}-right). We remark that this is the unfortunate case when it is not possible to determine in which period is the defect located. This is in accordance with what Theorem \ref{DiffImagFunc} predicts.
\begin{figure}[H] 
\centerline{ \begin{tabular}{cc}
\includegraphics[width=0.45\textwidth]{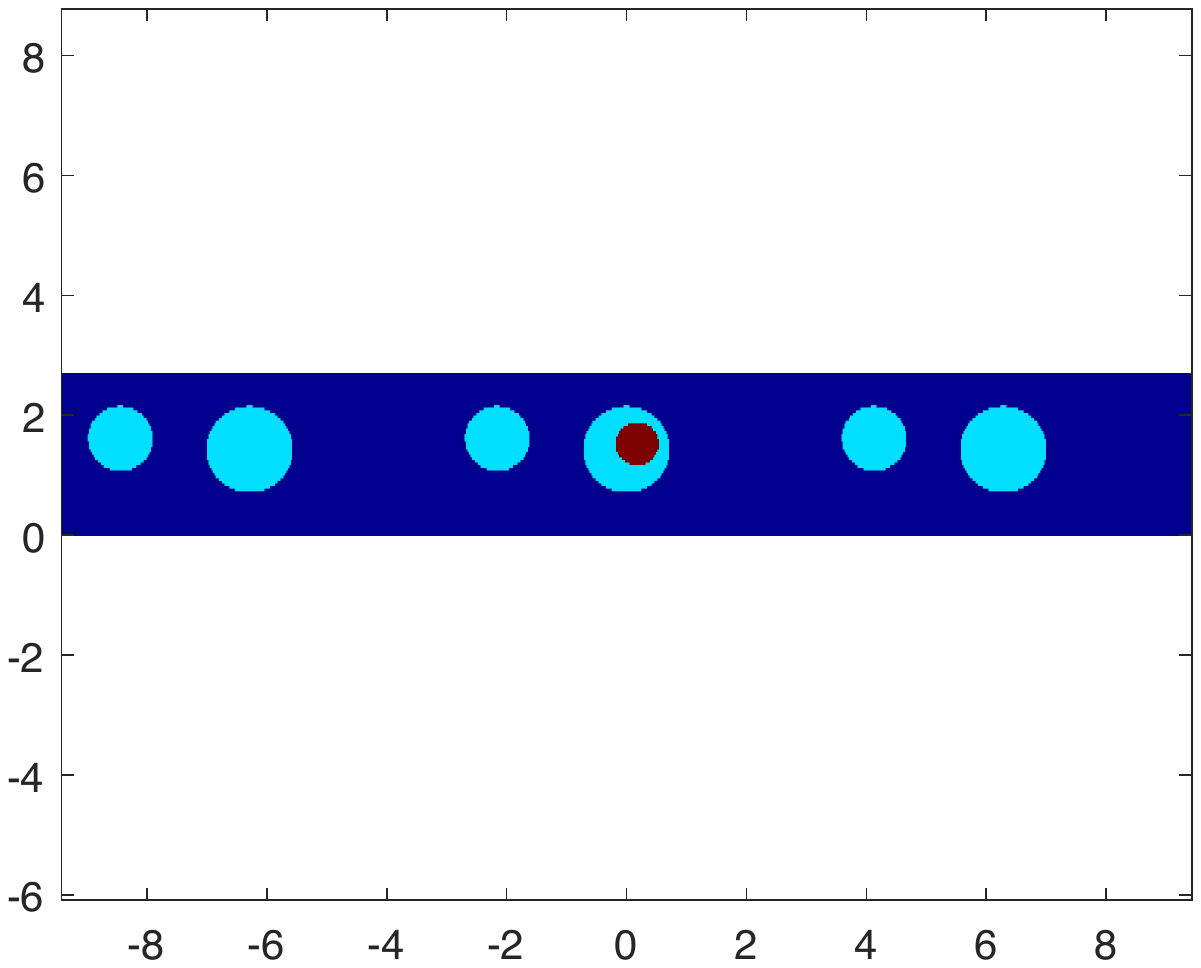}  & \includegraphics[width=0.45\textwidth]{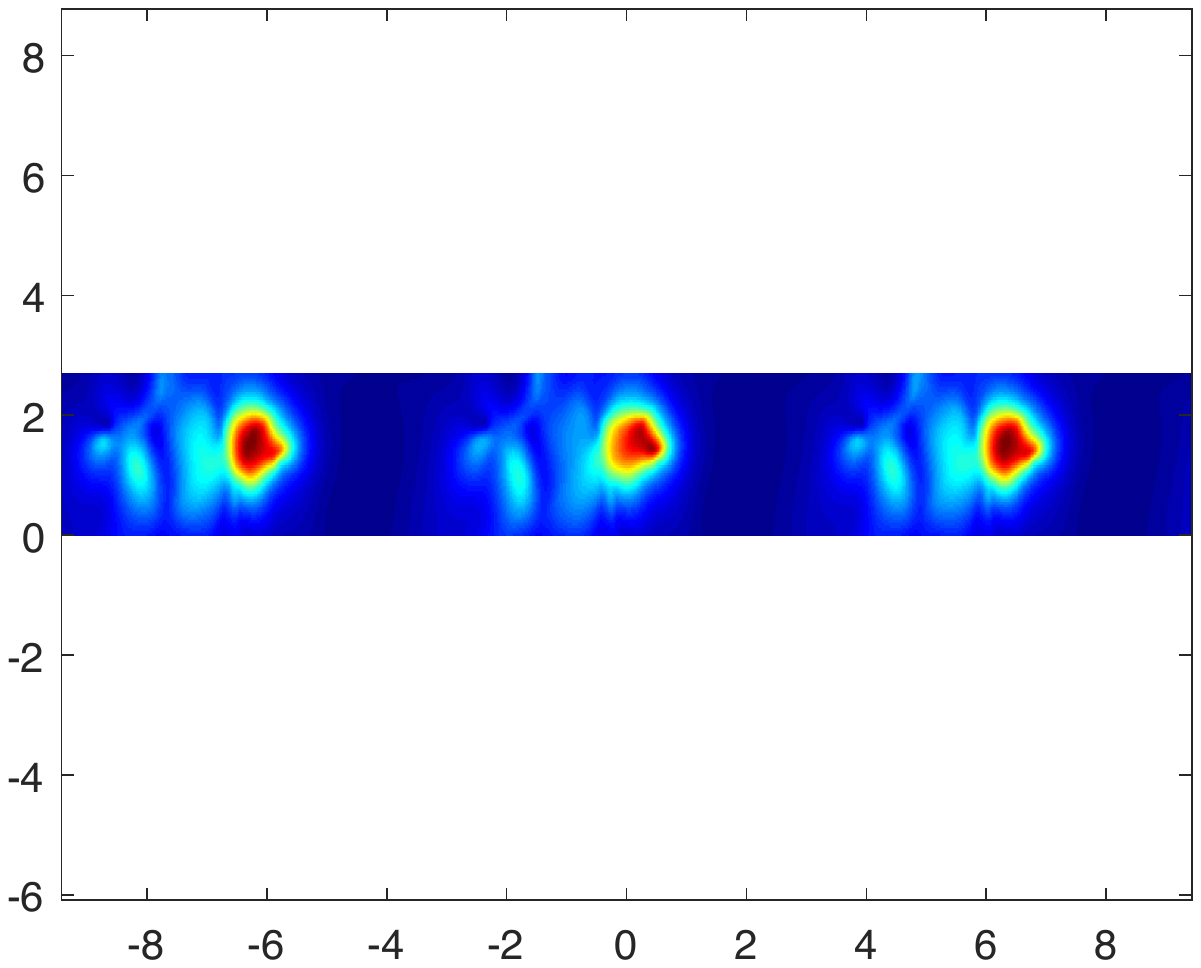} 
\end{tabular}}
\caption{Left: The exact geometry for Example 1. Right: The reconstruction using $z \mapsto
  \I^\delta(z)$}
\label{Composite1_structure}
\end{figure}
\subsubsection*{Example 2 (a).}
In the second example, we consider the perturbation $\domp$ as in  Example 1 but now located such that $\domp$ has nonempty intersection with $D_p$ but not included in $D_p$ (see  Figure \ref{Composite2.1_structure}-left). The reconstruction is represented in Figure \ref{Composite1_structure}-right. This example also illustrate that the value of $\I^{\delta}(z)$ much bigger when $z$ in $\domp \setminus D_p$ than $z$ in $\Dpint_p$. Now we are able to find in which period the defect is located and recover clearly the part of the defect outside component.
\begin{figure}[H] 
\centerline{\begin{tabular}{cc}
\includegraphics[width=0.45\textwidth]{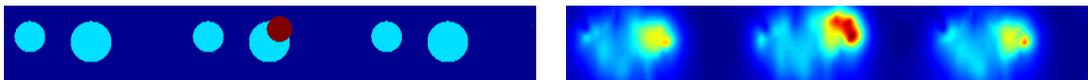} & \includegraphics[width=0.45\textwidth]{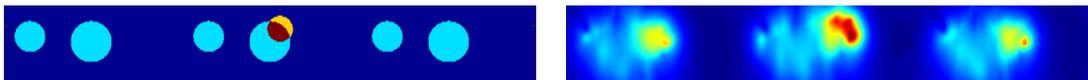}
\end{tabular}}
\caption{Left: The exact geometry for Example 2. Right: The reconstruction of the local perturbation using $z \mapsto
  \I^\delta(z)$}
\label{Composite2.1_structure}
\end{figure}

\subsubsection*{Example 2 (b).} In Figure \ref{Composite2.2_structure} we consider same configuration  as in Figure \ref{Composite2.1_structure} and change only the refractive index of the defect which now is inhomogeneous. In particular, the refractive  index of the defect is  $n = 4$ in $\domp \cap D_p$ and $n = 3$ in $\domp \setminus D_p$. The  reconstruction is represented in Figure \ref{Composite2.2_structure}--right

\begin{figure}[H] 
\centerline{\begin{tabular}{cc}
\includegraphics[width=0.45\textwidth]{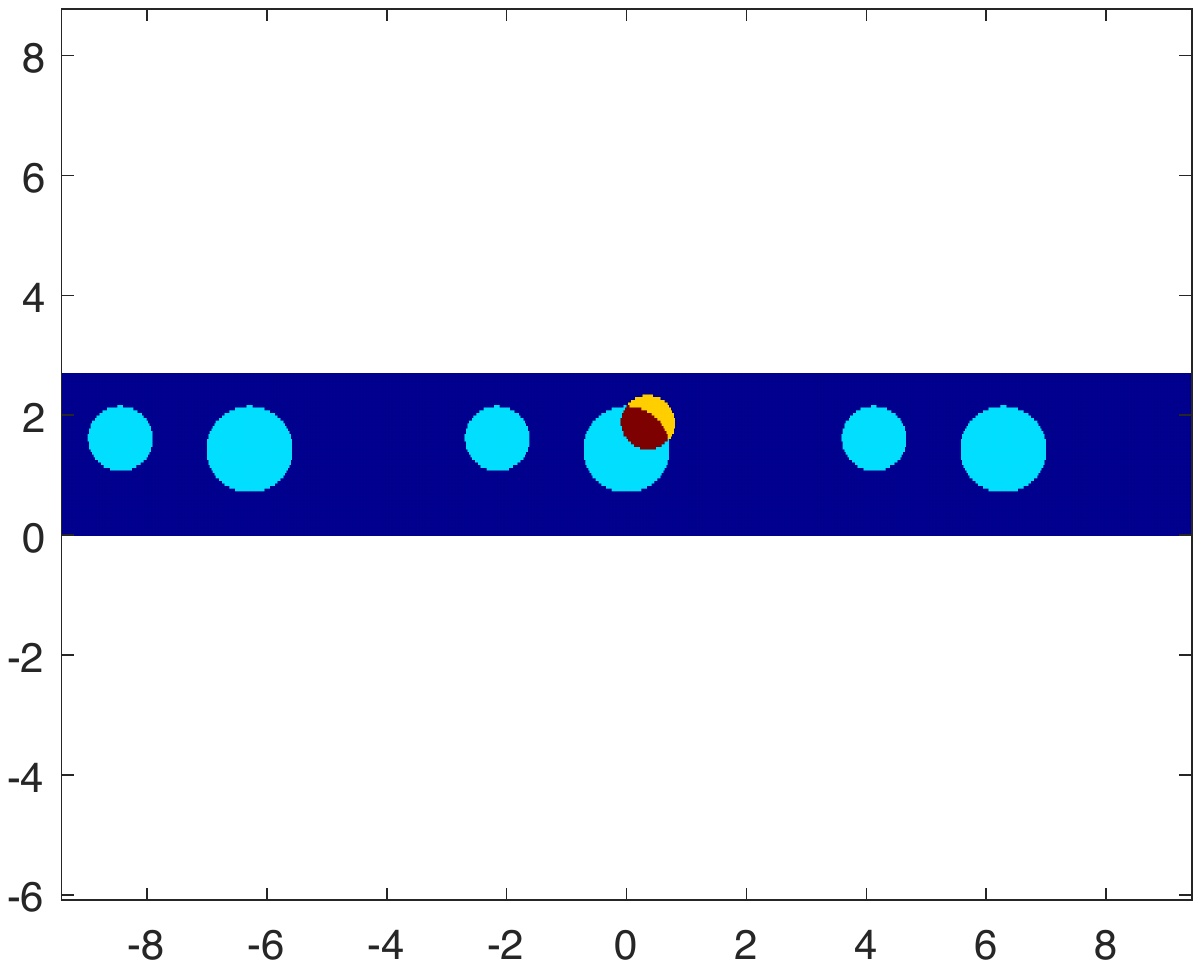} &
\includegraphics[width=0.45\textwidth]{Figures/0425_3PCompoPerturbationIntV1} \\
\end{tabular}}
\caption{Left: The exact geometry for Example 2. Right: The reconstruction of the local perturbation using $z \mapsto
  \I^\delta(z)$}
\label{Composite2.2_structure}
\end{figure}

\subsubsection*{Example 3.}
Example 3 just show that when the defect has no intersection with the periodic background, the indicator function $\I^{\delta}(z)$ allows to reconstruct the true defect including its true location. Here the defect is a disc of $r_\omega=0.2 \lambda$  with $n=3$.  More examples of this case can be found in  \cite{Thi-Phong3}.
\begin{figure}[H] 
\centerline{\begin{tabular}{cc}
\includegraphics[width=0.45\textwidth]{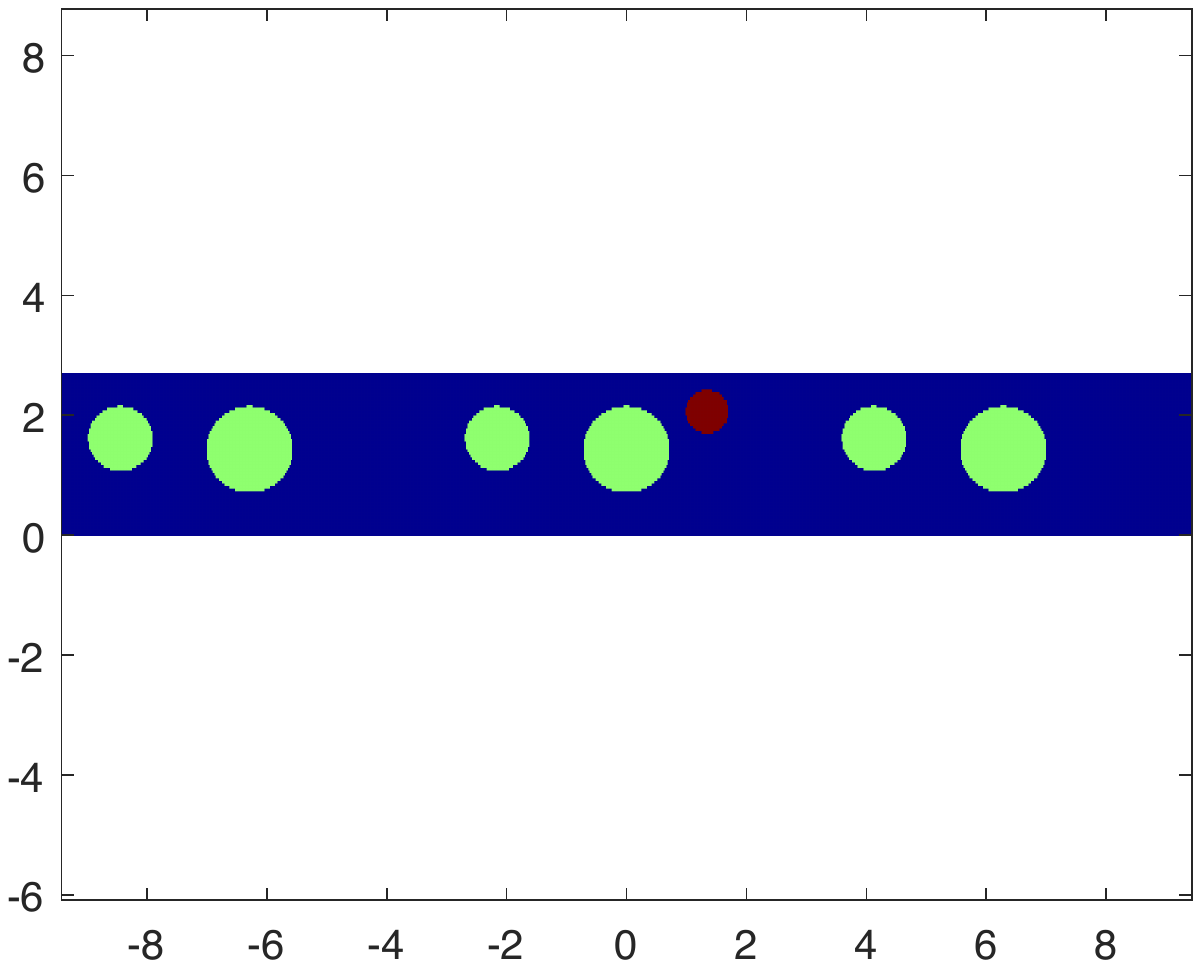} &
\includegraphics[width=0.45\textwidth]{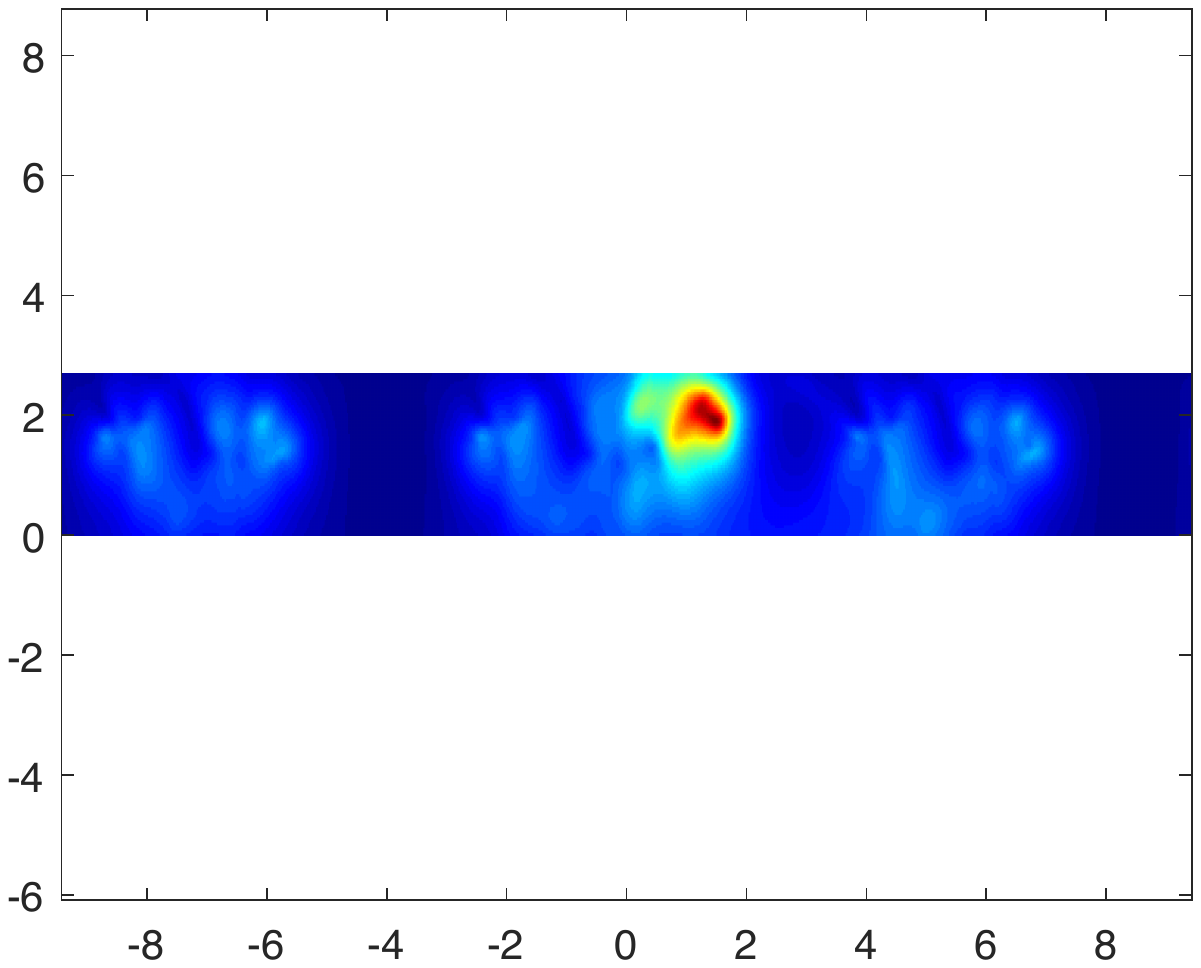}
\end{tabular}}
\caption{Left: The exact geometry for Example 3. Right: The reconstruction using $z \mapsto
  \I^\delta(z)$}
\label{Composite3_structure}
\end{figure}
\noindent
The last two examples, present the case where the defect consists of two disconnected components. 
\subsubsection*{Example 4.}
In this example the multicomponent defect has one component intersecting $D_p$ and one component outside $D_p$.  The true geometry is given in Figure \ref{IntMixed_structure} -left and the reconstruction in Figure \ref{IntMixed_structure}-right. Here $\omega_1$ is the disc intersecting $D_p$ and $\omega_2$ the other. The parameters are    $r_{\omega_1}=r_{\omega_2}=0.2 \lambda$.  $\omega_1$  is inhomogeneous with refractive index $n=4$in the part inside $D_p$ and $n=3$ in the part  inside $D_p$, wheres the refractive index of $\omega_2$ is $n=2.5$.
\begin{figure}[H] 
\centerline{\begin{tabular}{cc}
\includegraphics[width=0.45\textwidth]{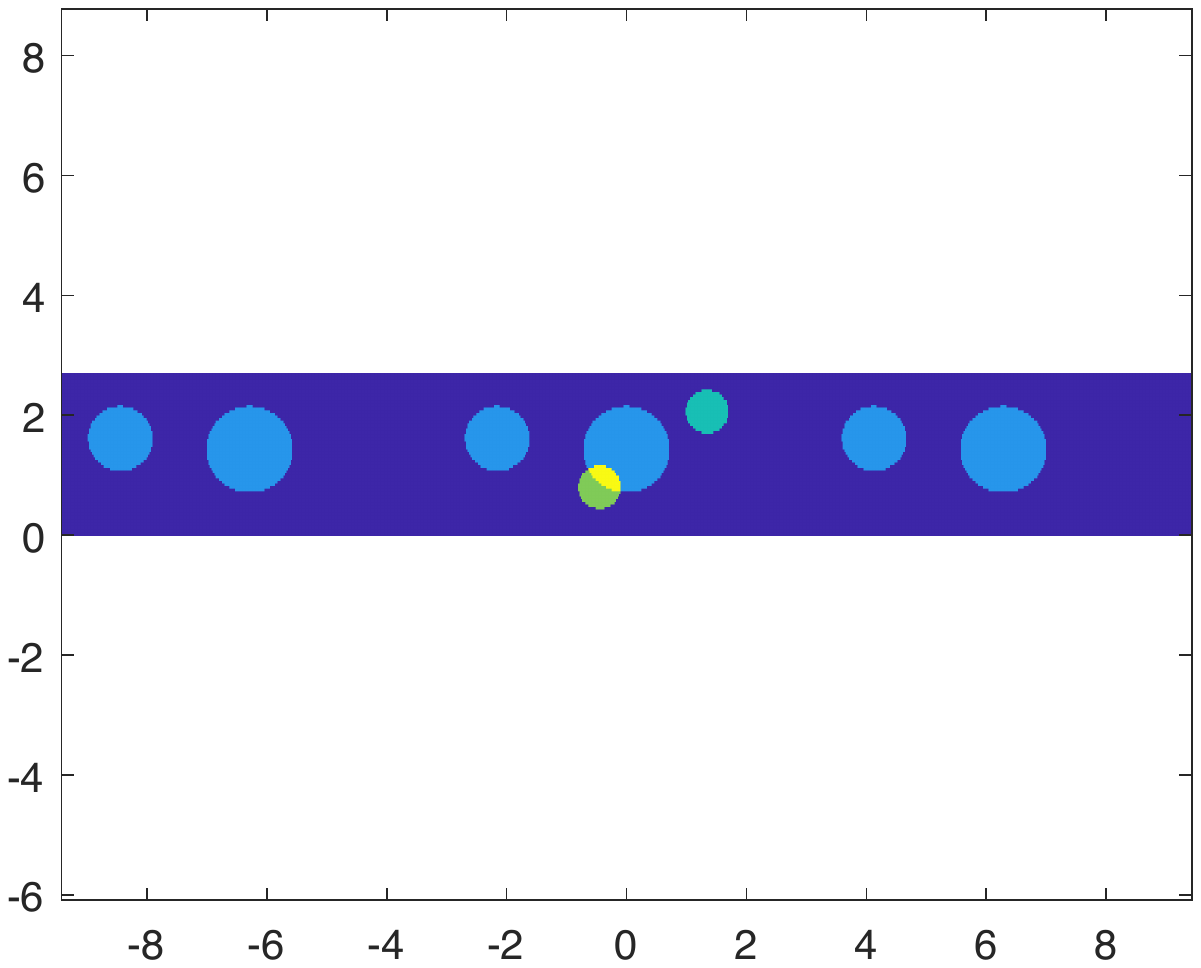} &
\includegraphics[width=0.45\textwidth]{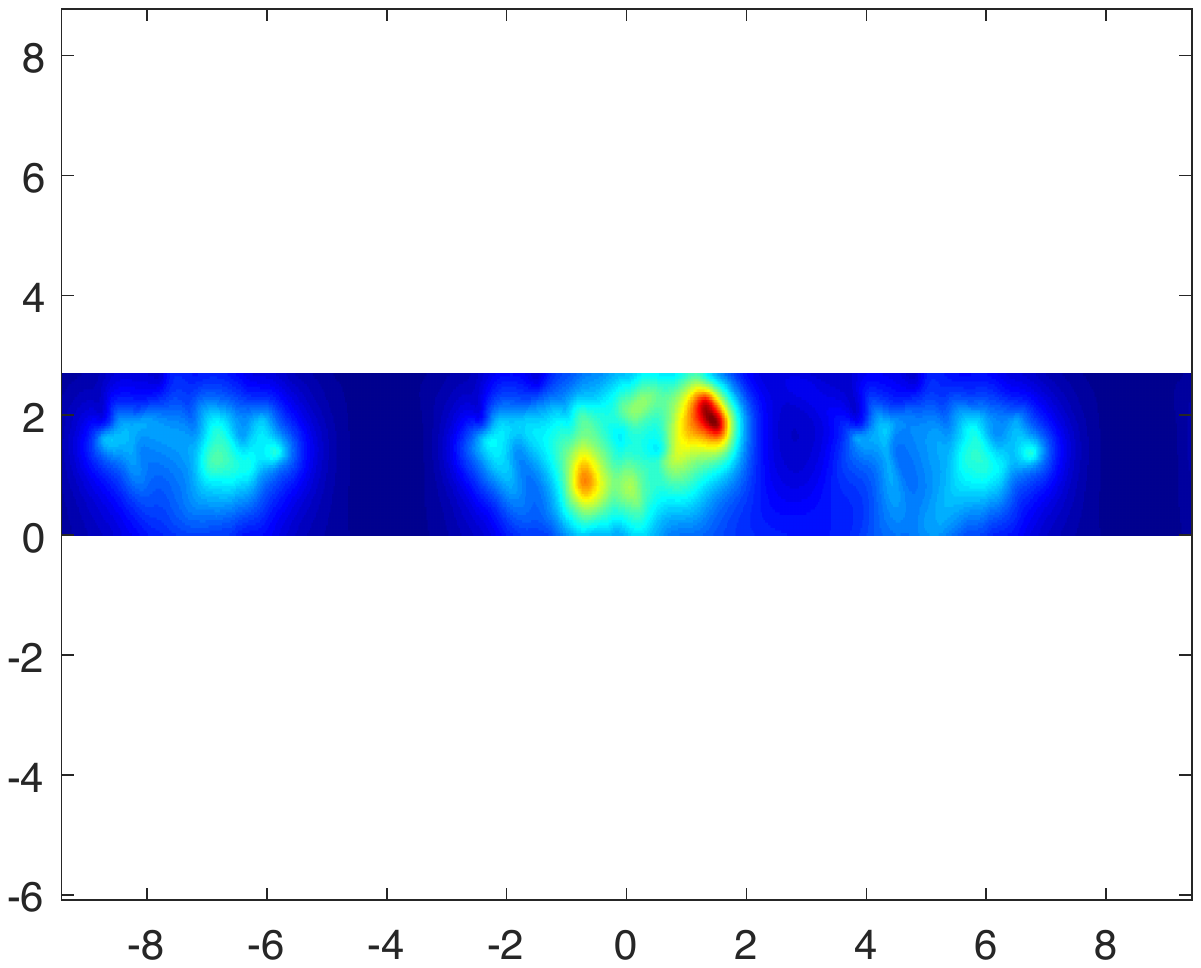}
\end{tabular}}
\caption{Left: The exact geometry for Example 4 Right: The reconstruction using $z \mapsto
  \I^\delta(z)$. }
\label{IntMixed_structure}
\end{figure}

\subsubsection*{Example 5.}
In this last example,  the defect has two disconnected component such that one  component included in one component of $D_p$ and the other lies outside $D_p$.  An illustration of exact geometry is given in Figure \ref{IncluMixed_structure} -left and the reconstruction in Figure \ref{IncluMixed_structure}-right. Here keeping the same notations for $\omega_1$ and $\omega_2$ as in Example 5,  we choose $n=4$ the refractive index of $\omega_1$ and $n=2.5$ the refractive index of $\omega_2$.
\begin{figure}[H] 
\centerline{\begin{tabular}{cc}
\includegraphics[width=0.45\textwidth]{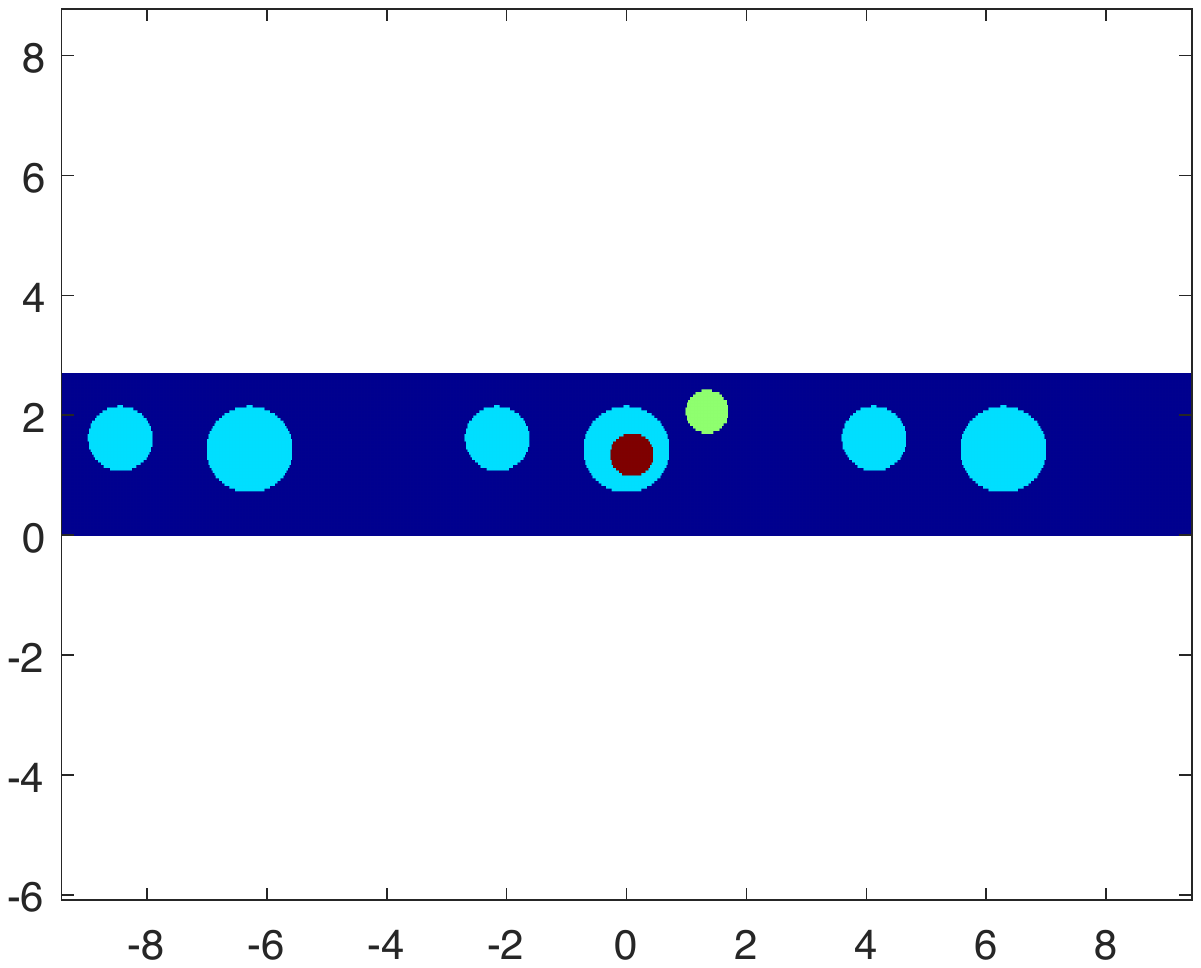} &
\includegraphics[width=0.45\textwidth]{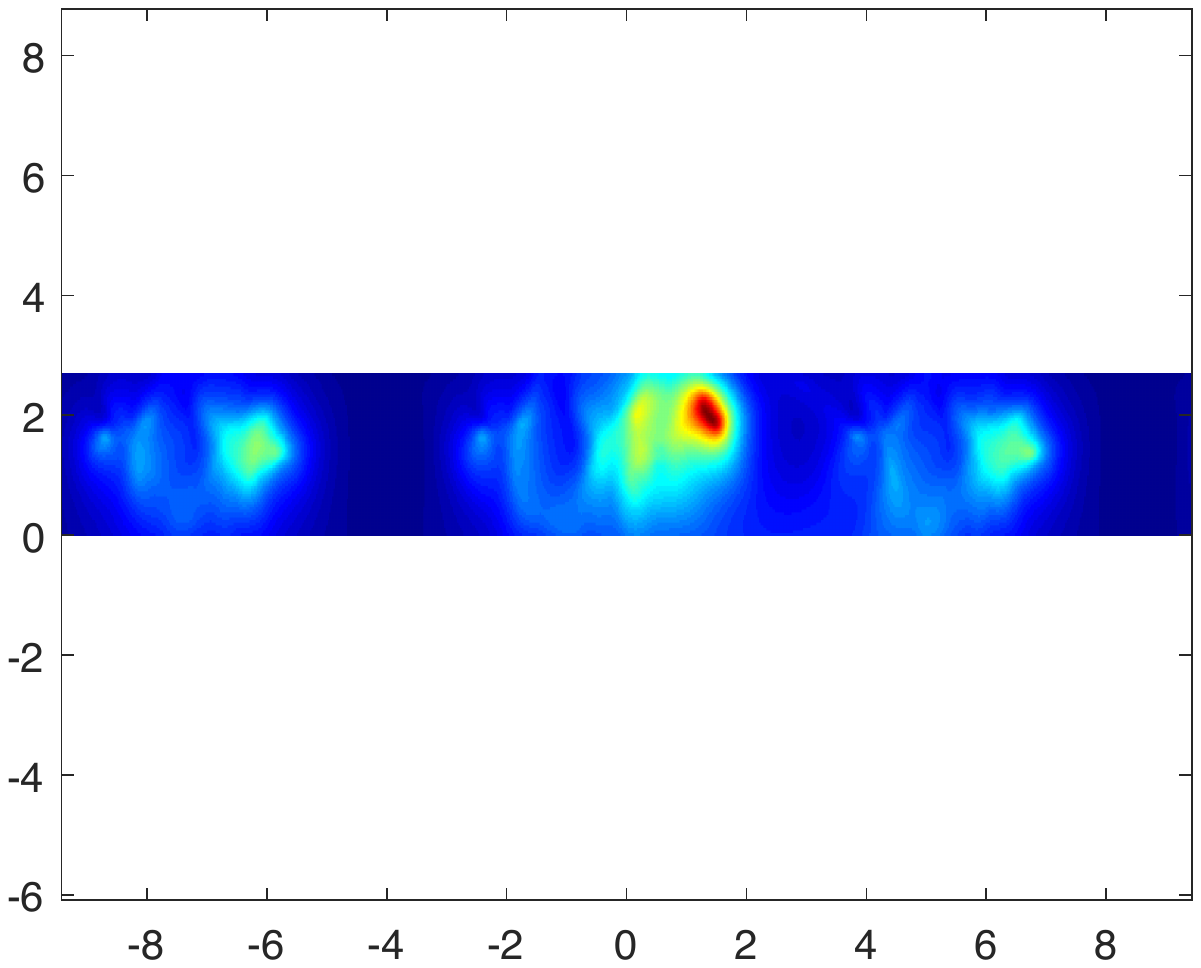}
\end{tabular}}
\caption{Left: The exact geometry for Example 5. Right: The reconstruction using $z \mapsto
 \I^\delta(z)$}
\label{IncluMixed_structure}
\end{figure}
\noindent
All our numerical examples validate the theoretical prediction provided by Theorem \ref{DiffImagFunc}. As already mention the case when the defect is entirely included in a component of the periodic background is ambiguous in the sense that the actual defective period  can not be determined. 

\noindent
We conclude by remarking that everything here can be adapted to the case when the background periodic layer is composed of inhomogeneities embedded in homogeneous media with constant refractive index $n_0\neq 1$. The only difference is in the choice in the fundamental solution which in this case should correspond to a piecewise homogeneous media instead of $n=1$. 

\section*{Acknowledgements}
The research of F. Cakoni is supported in part by AFOSR Grant FA9550-17-1-0147 and NSF Grant DMS- 1602802. The research of T-P. Nguyen is supported in part by NSF Grant DMS- 1602802.

 \section*{References}


\begin{thebibliography}{10}

\bibitem{Arens2010}
Tilo Arens.
\newblock Scattering by biperiodic layered media: The integral equation
  approach.
\newblock Habilitation Thesis, Universit{\"a}t Karlsruhe, 2010.

\bibitem{Arens2005}
Tilo Arens and Natalia Grinberg.
\newblock A complete factorization method for scattering by periodic
  structures.
\newblock {\em Computing}, 75:111--132, 2005.

\bibitem{Audib2015a}
Lorenzo Audibert.
\newblock {\em {Qualitative methods for heterogeneous media}}.
\newblock Theses, {Ecole Doctorale Polytechnique}, September 2015.

\bibitem{Audib2015}
Lorenzo Audibert, Alexandre Girard, and Houssem Haddar.
\newblock Identifying defects in an unknown background using differential
  measurements.
\newblock {\em Inverse Problems and Imaging}, 9(3):625--643, July 2015.

\bibitem{Audib2014}
Lorenzo Audibert and Houssem Haddar.
\newblock A generalized formulation of the linear sampling method with exact
  characterization of targets in terms of farfield measurements.
\newblock {\em Inverse Problems}, 30(3):035011, March 2014.

\bibitem{Bourg2014}
L~Bourgeois and S~Fliss.
\newblock On the identification of defects in a periodic waveguide from far
  field data.
\newblock {\em Inverse Problems}, 30(9):095004, September 2014.

\bibitem{CCH}
Fioralba Cakoni, David Colton, and Houssem Haddar.
\newblock {\em Inverse scattering theory and transmission eigenvalues},
  volume~88 of {\em CBMS-NSF Regional Conference Series in Applied
  Mathematics}.
\newblock Society for Industrial and Applied Mathematics (SIAM), Philadelphia,
  PA, 2016.

\bibitem{CK3}
David Colton and Rainer Kress.
\newblock {\em Inverse acoustic and electromagnetic scattering theory},
  volume~93 of {\em Applied Mathematical Sciences}.
\newblock Springer, New York, third edition, 2013.

\bibitem{Elsch2011a}
Johannes Elschner and Guanghui Hu.
\newblock Inverse scattering of elastic waves by periodic structures:
  uniqueness under the third or fourth kind boundary conditions.
\newblock {\em Methods and Applications of Analysis}, 18(2):215--244, 2011.

\bibitem{Thi-Phong2}
Houssem Haddar and Thi-Phong Nguyen.
\newblock {A volume integral method for solving scattering problems from
  locally perturbed infinite periodic layers}.
\newblock {\em {Applicable Analysis}}, pages 130 --158, 2016.

\bibitem{Thi-Phong3}
Houssem Haddar and Thi-Phong Nguyen.
\newblock {Sampling methods for reconstructing the geometry of a local
  perturbation in unknown periodic layers}.
\newblock {\em {Computers and Mathematics with Applications}},
  74(11):2831--2855, December 2017.

\bibitem{Kirsc2008}
A.~Kirsch and N.I. Grinberg.
\newblock {\em The Factorization Method for Inverse Problems}.
\newblock Oxford Lecture Series in Mathematics and its Applications 36. Oxford
  University Press, 2008.

\bibitem{kirschitp}
Andreas Kirsch.
\newblock A note on {S}ylvester's proof of discreteness of interior
  transmission eigenvalues.
\newblock {\em C. R. Math. Acad. Sci. Paris}, 354(4):377--382, 2016.

\bibitem{KL}
Andreas Kirsch and Armin Lechleiter.
\newblock The {L}imiting {A}bsorption {P}rinciple and a {R}adiation {C}ondition
  for the {S}cattering by a {P}eriodic {L}ayer.
\newblock {\em SIAM J. Math. Anal.}, 50(3):2536--2565, 2018.

\bibitem{Lechl2013b}
Armin Lechleiter and Dinh-Liem Nguyen.
\newblock Factorization {Method} for {Electromagnetic} {Inverse} {Scattering}
  from {Biperiodic} {Structures}.
\newblock {\em SIAM Journal on Imaging Sciences}, 6(2):1111--1139, June 2013.

\bibitem{armin}
Armin Lechleiter and Ruming Zhang.
\newblock Reconstruction of local perturbations in periodic surfaces.
\newblock {\em Inverse Problems}, 34(3):035006, 17, 2018.

\bibitem{nguye2012}
Dinh~Liem Nguyen.
\newblock {\em Spectral Methods for Direct and Inverse Scattering from Periodic
  Structures}.
\newblock PhD thesis, {Ecole Polytechnique X}, 2012.

\bibitem{tpnguyen}
Thi~Phong Nguyen.
\newblock {\em {Direct and inverse solvers for scattering problems from locally
  perturbed infinite periodic layers}}.
\newblock Theses, {Universit{\'e} Paris-Saclay}, January 2017.

\bibitem{Sylve2012}
John Sylvester.
\newblock Discreteness of {Transmission} {Eigenvalues} via {Upper} {Triangular}
  {Compact} {Operators}.
\newblock {\em SIAM Journal on Mathematical Analysis}, 44(1):341--354, January
  2012.

\end{thebibliography}
\end{document}